\newtheorem{theorem}{Theorem}[section]
\newtheorem{corollary}[theorem]{Corollary}
\newtheorem{proposition}[theorem]{Proposition}
\theoremstyle{definition}
\newtheorem{definition}[theorem]{Definition}
\newtheorem{remark}[theorem]{Remark}
\newtheorem{example}[theorem]{Example}
\newenvironment{warning}[1][Warning.]{\begin{trivlist}
\item[\hskip \labelsep {\bfseries #1}]}{\end{trivlist}}
\newcommand{\Dleft}{[\hspace{-1.5pt}[}
\newcommand{\Dright}{]\hspace{-1.5pt}]}
\newcommand{\SN}[1]{\Dleft #1 \Dright}
\newcommand{\Id}{\mathbbmss{1}}
\newcommand*{\longhookrightarrow}{\ensuremath{\lhook\joinrel\relbar\joinrel\rightarrow}}
\newcommand*{\rRelation}{\ensuremath{\joinrel\relbar\joinrel\joinrel\relbar\joinrel \vartriangleright}}
\DeclareMathOperator{\Vect}{Vect}
\DeclareMathOperator{\Span}{Span}
\DeclareMathOperator{\w}{w}
\font\black=cmbx10 \font\sblack=cmbx7 \font\ssblack=cmbx5 \font\blackital=cmmib10  \skewchar\blackital='177
\font\sblackital=cmmib7 \skewchar\sblackital='177 \font\ssblackital=cmmib5 \skewchar\ssblackital='177
\font\sanss=cmss10 \font\ssanss=cmss8 
\font\sssanss=cmss8 scaled 600 \font\blackboard=msbm10 \font\sblackboard=msbm7 \font\ssblackboard=msbm5
\font\caligr=eusm10 \font\scaligr=eusm7 \font\sscaligr=eusm5  
\font\bsymb=cmsy10 scaled\magstep2
\def\all#1{\setbox0=\hbox{\lower1.5pt\hbox{\bsymb
       \char"38}}\setbox1=\hbox{$_{#1}$} \box0\lower2pt\box1\;}
\def\exi#1{\setbox0=\hbox{\lower1.5pt\hbox{\bsymb \char"39}}
       \setbox1=\hbox{$_{#1}$} \box0\lower2pt\box1\;}
\def\tx#1{{\fam0\relax#1}}
\def\sss#1{{\fam\ssfam\relax#1}}
\def\hpb#1{\setbox0=\hbox{${#1}$}
    \copy0 \kern-\wd0 \kern.2pt \box0}
\def\vpb#1{\setbox0=\hbox{${#1}$}
    \copy0 \kern-\wd0 \raise.08pt \box0}
\def\pmb#1{\setbox0\hbox{${#1}$} \copy0 \kern-\wd0 \kern.2pt \box0}
\def\pmbb#1{\setbox0\hbox{${#1}$} \copy0 \kern-\wd0
      \kern.2pt \copy0 \kern-\wd0 \kern.2pt \box0}
\def\pmbbb#1{\setbox0\hbox{${#1}$} \copy0 \kern-\wd0
      \kern.2pt \copy0 \kern-\wd0 \kern.2pt
    \copy0 \kern-\wd0 \kern.2pt \box0}
\def\pmxb#1{\setbox0\hbox{${#1}$} \copy0 \kern-\wd0
      \kern.2pt \copy0 \kern-\wd0 \kern.2pt
      \copy0 \kern-\wd0 \kern.2pt \copy0 \kern-\wd0 \kern.2pt \box0}
\def\pmxbb#1{\setbox0\hbox{${#1}$} \copy0 \kern-\wd0 \kern.2pt
      \copy0 \kern-\wd0 \kern.2pt
      \copy0 \kern-\wd0 \kern.2pt \copy0 \kern-\wd0 \kern.2pt
      \copy0 \kern-\wd0 \kern.2pt \box0}
\mathchardef\za="710B  
\mathchardef\zb="710C  
\mathchardef\zg="710D  
\mathchardef\zd="710E  
\mathchardef\zve="710F 
\mathchardef\zz="7110  
\mathchardef\zh="7111  
\mathchardef\zvy="7112 
\mathchardef\zi="7113  
\mathchardef\zk="7114  
\mathchardef\zl="7115  
\mathchardef\zm="7116  
\mathchardef\zn="7117  
\mathchardef\zx="7118  
\mathchardef\zp="7119  
\mathchardef\zr="711A  
\mathchardef\zs="711B  
\mathchardef\zt="711C  
\mathchardef\zu="711D  
\mathchardef\zvf="711E 
\mathchardef\zq="711F  
\mathchardef\zc="7120  
\mathchardef\zw="7121  
\mathchardef\ze="7122  
\mathchardef\zy="7123  
\mathchardef\zf="7124  
\mathchardef\zvr="7125 
\mathchardef\zvs="7126 
\mathchardef\zf="7127  
\mathchardef\zG="7000  
\mathchardef\zD="7001  
\mathchardef\zY="7002  
\mathchardef\zL="7003  
\mathchardef\zX="7004  
\mathchardef\zP="7005  
\mathchardef\zS="7006  
\mathchardef\zU="7007  
\mathchardef\zF="7008  
\mathchardef\zW="700A  
\mathchardef\zC="7009  
\newcommand{\be}{\begin{equation}}
\newcommand{\ee}{\end{equation}}
\newcommand{\bea}{\begin{eqnarray}}
\newcommand{\eea}{\end{eqnarray}}
\newcommand{\beas}{\begin{eqnarray*}}
\newcommand{\eeas}{\end{eqnarray*}}
\def\*{{\textstyle *}}
\newcommand{\R}{{\mathbb R}}
\newcommand{\D}{{\rm d}}
\newcommand{\we}{\wedge}
\newcommand{\nn}{\nonumber}
\newcommand{\s}{{\textstyle *}}
\newcommand{\pa}{\partial}
\newcommand{\ti}{\times}
\newcommand{\A}{{\cal A}}
\def\sP{{\sss P}}
\def\sT{{\sss T}}
\def\sj{{\sss j}}
\def\xi{\tx{i}}
\def\s*{{\scriptstyle *}}
\begin{document}

\author{\\
        Andrew James Bruce$^1$\\ Katarzyna  Grabowska$^2$\\
        Janusz Grabowski$^1$\\
        \\
         $^1$ {\it Institute of Mathematics}\\
                {\it Polish Academy of Sciences }\\
         $^2$ {\it Faculty of Physics}\\{\it University of Warsaw} }

\date{\today}
\title{Linear duals of graded bundles\\ and higher analogues of (Lie) algebroids\thanks{Research funded by the  Polish National Science Centre grant under the contract number DEC-2012/06/A/ST1/00256.  }}
\maketitle

\begin{abstract}
\emph{Graded bundles} are  a class of graded manifolds which represent  a  natural generalisation of vector bundles  and include the higher order tangent bundles as canonical examples. We present and study the concept of the \emph{linearisation} of graded bundle which allows us to define the notion of the \emph{linear dual} of a graded bundle. They are examples of double structures, \emph{graded-linear} ($\mathcal{GL}$) \emph{bundles}, including double vector bundles as a particular case. On $\mathcal{GL}$-bundles we define what we shall call \emph{weighted algebroids}, which are to be understood as algebroids in the category of graded bundles. They can be considered as a geometrical framework for higher order Lagrangian mechanics. Canonical examples are reductions of higher tangent bundles of Lie groupoids. Weighted algebroids  represent also a generalisation of \emph{$\mathcal{VB}$-algebroids} as defined by Gracia-Saz \&  Mehta and the \emph{$\mathcal{LA}$-bundles} of Mackenzie. The resulting structures are strikingly similar to Voronov's higher Lie algebroids, however our approach does not require the initial  structures to be defined on supermanifolds.
\end{abstract}

\noindent Keywords:   vector bundles;~graded manifolds;~Lie algebroids;~Lie groupoids;~Poisson structures.

 \noindent MSC 2010:  53C10;~53D17;~55R10;~58A50

\section{Introduction}\label{sec:Intro}
 Manifolds and supermanifolds that carry various additional gradings on their structure sheaves are now an established part of mathematics with a long history. The general theory of graded manifolds in our understanding was initiated by Voronov \cite{Voronov:2001qf}.  An important class of such supermanifolds are those that carry a non-negative integer grading, especially \emph{N-manifolds} (see for example  Roytenberg \cite{Roytenberg:2001} and \v{S}evera \cite{Severa:2001}).

In the purely even setting, a vector bundle is completely encoded in its homogeneous structure. In particular, Grabowski and Rotkiewicz showed in \cite{Grabowski2009} that a vector bundle structure on a manifold is equivalent to a regular action of the multiplicative monoid of reals on that manifold (regular \emph{homogeneity structure}). This also lead to a natural and simple definition of a double vector bundle as a manifold equipped with a pair of commuting {homogeneity structures}. Relaxing the regularity condition leads to the concepts of \emph{graded bundles} and \emph{$n$-tuple graded bundles}, which were studied by Grabowski and Rotkiewicz \cite{Grabowski2012}. They proved that a manifold with a smooth action of the multiplicative monoid of reals is a graded bundle in a traditional understanding, i.e. homogeneous local coordinates can always be found. This is in the spirit of \v{S}evera's definition of N-manifold given in \cite{Severa:2001}, however without proving such an equivalence.
We should stress that we speak about \emph{graded bundle} rather than \emph{graded manifold} in order to be precise and make a sharp distinction with other uses of the latter term in the literature, including the one related to a $\mathbb{Z}_2$-gradation (or just supermanifolds).

 A little more specifically, a vector bundle structure $E \rightarrow M$  is encoded by  assigning a weight of zero to the base coordinates and one to the linear coordinates on the total space. Thus, there is essentially a  one-to-one  correspondence between vector bundles and  graded bundles for which we can assign the weight zero and one. The condition of the weight to be  one for the fibre coordinates is a restatement of linearity. Thus philosophically, a graded bundle should be viewed as a ``non-linear or higher vector bundle" via the relaxing of the assignment of weight.

This philosophy also allows a higher order generalisaton of a Lie algebroid to appear. Recall that a Lie algebroid structure on $E \rightarrow M$ can be viewed as a  weight one homological vector field on the supermanifold $\Pi E$  \cite{Vaintrob:1997}. Here $\Pi$ is the parity reversion functor and it acts by shifting the Grassmann parity of the linear coordinates relative to the linear coordinates on $E$. The weight is assigned as previously explained. By \emph{homological} we mean an odd vector field that self-commutes. Voronov proposed that a non-linear or higher Lie algebroid should be understood as a homological vector field of weight one on a non-negatively graded supermanifold  \cite{Voronov:2001qf,voronov-2010}. It is important to note that Voronov's definition requires us to start from the underlying  structure of a supermanifold; without super it would be impossible to obtain a homological vector field. This, we believe,  has hampered the understanding of examples and potential applications of non-linear Lie algebroids   in  standard (even) differential geometry and geometric mechanics.

Recall that a Lie algebroid structure can also be viewed as a linear Poisson structure on the total space of the dual bundle $E^{*}$. In the graded language one can say we have a  weight minus one Poisson structure. Furthermore, in \cite{Grabowski:1999} a Lie algebroid  structure (and some  natural generalisations there of) on $E$  was considered as a double vector bundle morphism
\begin{equation}
\epsilon : \sT^{*}E \rightarrow \sT E^{*},
\end{equation}
\noindent covering the identity on $E^{*}$.

It is  mainly this approach to Lie algebroids as double vector bundle morphisms and linear Poisson structures that we generalise in this paper to construct a higher analogue of a general algebroid, although we also provide  other descriptions. The main reason for taking certain maps as our definition of a general algebroid (say skew or Lie) is that it is precisely these maps that are fundamental in geometric mechanics; we develop these ideas in the graded setting in \cite{Bruce:2014b} and apply them to higher order mechanics. In particular, the various Lie-type brackets associated with a Lie algebroid  will be viewed as secondary or derived notions rather than fundamental. For an overview of the various bracket operations found in algebra, geometry and mechanics the reader can consult \cite{Grabowski:2013a}. Similarly, we recognise and appreciate the power of supermanifolds in describing Lie algebroids, though from the point of view of geometric mechanics, homological vector fields do not shed light on formulating Lagrangian and Hamiltonian mechanics.

Another approach to higher analogues of Lie algebroids  based on  generalising the \emph{kappa-relation} $\kappa : \sT E \rRelation \sT E$, which is dual to the double vector bundle morphism $\epsilon$,  has recently  been explored by J\'{o}\'{z}wikowski \& Rotkiewicz \cite{Jozwikowski:2014}. Their basic idea is that reductions of higher tangent bundles of Lie groupoids should be examples of ``higher algebroids". They also explore the dual map in their version of a higher algebroid to tackle higher order Lagrangian mechanics. However, their approach is essentially based on ``lifting"  an almost Lie algebroid structure on a vector bundle to a particular canonically constructed graded bundle. The constructions are somewhat complicated and the interested reader is directed towards the original literature.

As graded bundles are not just vector bundles, a little more work has to be done in constructing their duals. Indeed, the notion of an object dual to a graded bundle has been an open question and here we propose a (partial) solution.  In order to do this we employ a \emph{canonical linearisation} process to obtain from a graded bundle a double graded bundle which has as vector bundle structure as one of the graded subbundles. We will properly define the linearisation of a graded bundle  in due course. However, for now note that the linearisation  and the linear dual of a graded bundle are particular reductions of the tangent and cotangent bundle, respectively,  of the graded bundle in question  (see Section \ref{sec:associating double}). Thus, both the linearisation and the linear dual are naturally double graded bundles.

With the notion of a linear dual of a graded bundle in place one can more or less follow standard constructions to define a ``graded algebroid", one only needs to be careful with the extra weights inherited from the initial graded bundle and the linearisation. As the  linearisation of  a graded bundle is a reduction of its tangent bundle, \emph{weighted algebroids}  as we shall call them, are  particular morphisms of certain  triple graded bundles. In fact, we are able to define and study a slightly wider class of objects by first defining $\mathcal{GL}$-bundles, which are a natural abstraction of the tangent bundle of a graded bundle. In particular, $\mathcal{GL}$-bundles have a vector bundle structure as one of the graded subbundles, similar to the linearisation of a graded bundle.

Weighted Lie algebroids can  be understood as particular graded Poisson structures on the dual of the underlying $\mathcal{GL}$-bundle. As $\mathcal{GL}$-bundle have a linear structure, taking the dual and applying the parity  function make sense. Importantly, we can pass from such a graded Poisson structure to a  homological vector field on the parity shifted version of the  $\mathcal{GL}$-bundle of weight $(0,1)$, or simply total weight one.  Thus, weighted   Lie algebroids give class of objects very similar to Voronov's higher Lie algebroids.  We stress at this stage that the (bi-)weight of the associated homological vector field is not a forced condition, but rather it follows very naturally from the initial definitions in terms of graded morphisms.  Thus, much like the classical case of Lie algebroids, we can start from purely even manifolds.  This opens up canonical examples and potential applications.  In particular the tangent bundle of a graded bundle and the linearisation of the k-th order tangent bundle of a manifold are both canonical examples of a weighted  Lie algebroids.  Moreover, we shall show that the reduction of a higher order tangent bundle of a Lie groupoid, via its linearisation, also comes equipped canonically with the structure of a weighted algebroid. Interestingly, the weighted Lie algebroid associated with the higher order tangent bundle of a Lie groupoid is an example of what is known in the literature as a \emph{prolongation of a fibred manifold with respect to a Lie algebroid} (cf. \cite{Cortes:2009} and references therein).  We shall comment on this further in Section \ref{sec:examples}.

We also note that weighted algebroids should be viewed as a higher generalisation of $\mathcal{VB}$-algebroids as defined by Gracia-Saz \&  Mehta \cite{Gracia-Saz:2009}, who also showed that they are equivalent to  Mackenzie's $\mathcal{LA}$-vector bundles \cite{Mackenie:1998}. $\mathcal{VB}$-algebroids are also studied in  recent papers by Brahic,  Cabrera \& Ortiz \cite{Brahic:2014} and Bursztyn, Cabrera \& del Hoyo \cite{Bursztyn:2014}. Loosely, $\mathcal{VB}$-algebroids are ``Lie algebroids in the category of vector bundles", while  $\mathcal{LA}$-vector bundles are ``vector bundles in the category of Lie algebroids".  These  ``double objects" are double vector bundles with additional structure and can be thought of as generalised Lie algebroid representations. Mackenzie's $\mathcal{LA}$-vector bundles were originally defined while working towards the notion of a double Lie algebroid. As a remark, the original definition of a double Lie algebroid by Mackenzie is rather complicated and was reformulated and vastly simplified  Voronov \cite{Voronov:2012}  in terms of a pair of commuting homological vector fields on a (super) double vector bundle carrying wight $(1,0)$ and $(0,1)$. Grabowski \& Rotkiewicz \cite{Grabowski2009} further simplified  this picture by realising that the structure  of double Lie algebroids is really encoded in four vector fields by encoding the double vector bundle structure in a pair of commuting (h-complete) Euler vector fields.

Weighted algebroids can be viewed as particular examples of double graded bundles, in particular we require that one of the graded subbundles be a vector bundle,  that carry an additional structure.  We can then think of weighted Lie algebroids as ``graded bundles in the category of Lie algebroids" or equivalently as ``Lie algebroids in the category of graded bundles".  From this perspective our approach is a mixture of  higher and multiple. However,  we will not make explicit use of the categorical descriptions of  ``double objects" in the sense of Ehresmann in this paper.

We again remark that our main motivation for this work was to understand what kind of higher versions of Lie algebroids can be found in the  theory of higher order mechanics  and field theory. We want to generalise the geometric approaches indicated for standard mechanics in \cite{Grabowska:2008,Grabowska:2006} and for higher-order Lagrangians in \cite{Grabowska2014,Grabowska2014a} to ``higher Lie algebroids''.  Extending the geometric tools of the Lagrangian formalism on tangent bundles to Lie algebroids was motivated by the fact that reductions usually push one out of the environment of tangent bundles. In a similar way, reductions of higher order tangent bundles will push one into the environment of ``higher Lie  algebroids". However, it is not immediately clear exactly what the most suitable and encompassing definition of a ``higher algebroid" is in the context of higher order mechanics. The concepts of Voronov \cite{voronov-2010} or J\'o\'zwikowski \& Rotkiewicz \cite{Jozwikowski:2014,Jozwikowski:2015} seem not to fit exactly to our needs. We  explore the applications of weighted algebroids in both the higher order Lagrangian and Hamiltonian formalisms in a separate publication \cite{Bruce:2014b}. The notion of a \emph{weighted groupoid} as a Lie groupoid carrying a compatible graded structure, and the  Lie theory relating such objects to weighted Lie algebroids is developed by us in \cite{Bruce:2014c}.

In this paper we present a systematic study of  the geometry of graded bundles, focusing on the pure and deep mathematical side of their linearisation and weighted algebroids. As we shall see, basic examples of weighted Lie algebroids arise in relation to graded Poisson structures on graded bundles, tangent bundles of graded bundles and the linearisation of higher order tangent bundles. From this perspective, weighted algebroids seem a very natural unifying concept in the theory of graded bundles. \smallskip

\noindent \textbf{Remarks on nomenclature.} The objects we define and study are algebroids that carry in addition to their standard weight, which is  associated with the underlying vector bundle structure, an extra independent  weight. One is tempted to simply refer to such objects as ``graded algebroids" however, our subsequent  notion of a ``graded Lie algebra" would not coincide with standard definitions. The nomenclature ``higher" or ``non-linear" already has specific meaning. Thus, we choose the well-suited nomenclature ``weighted algebroids". However, we must stress that this notion is not to be confused with the notion of a weight in the sense of the representation theory of Lie algebras and similar.\smallskip

\noindent \textbf{Our use of supermanifolds.} Although we formulate weighted algebroids initially in the fully commutative setting, it is clear that the framework of supermanifolds is very elegant and powerful in the context of algebroid-like objects. In particular we will make use of the description of multivector fields and the Schouten bracket in terms of functions on the antitangent bundle and it's canonical odd symplectic structure. Furthermore, non-trivial homological vector fields  are `odd', so can only be defined on supermanifolds. The physicist's correct and  intuitive definition of a supermanifold as a manifold with both commuting and anticommuting coordinates will suffice for this paper.  That said, technically  we will follow the ``Russian School"  and understand supermanifolds in terms of locally superringed spaces. There are now several good introductory books on the theory supermanifolds including \cite{Carmeli:2011,Deligne:1999,Manin1997,Varadarajan2004} which the reader can consult.\smallskip

\noindent \textbf{An after thought.} After we had written the first version of this paper, an arXiv preprint by Jotz Lean  appeared \cite{JotzLean:2015}. One key result of her work is the categorical equivalence between N-manifolds of degree $2$ and what she calls  \emph{metric double vector bundles}. Mod Grassmann parity, N-manifolds of degree $2$ are essentially the same as graded bundles of degree $2$. Thus, the categorical equivalence as described by Jotz Lean is implicitly included in our constructions, however it requires some work to make this explicit. As establishing categorical equivalences is not the goal of this paper, we postpone a careful comparison of the linearisation functor and the constructions of Jotz Lean to a separate paper \cite{Bruce:2015}.

\smallskip

\noindent \textbf{Arrangement of the paper.} In Section \ref{sec:graded bundles} we recall some of the basic theory of graded bundles,  $n$-tuple graded bundles and develop  some further technology to be used throughout the rest of this paper. In Section \ref{sec:associating double} we present details of the linearisation of a graded bundle and use this to define the notion of the linear dual of a graded bundle. In Section \ref{sec:weighted algebroids} we define the notion of weighted (skew and Lie) algebroids as particular morphisms of certain triple graded bundles.  In this section we also comment on how   weighted algebroids are related to Voronov's higher Lie algebroids and how they represent a generalisation of the $\mathcal{VB}$-algebroids of  Gracia-Saz \& Mehta, as well as their understanding by Bursztyn, Cabrera \& del Hoyo \cite{Bursztyn:2014}, and so the $\mathcal{LA}$-vector bundles of  Mackenzie.  In Section \ref{sec:examples} we present further  examples of weighted Lie algebroids  and in particular we show how they arise via reduction and linearisation of higher tangent bundles of Lie groups and Lie groupoids.

\section{Graded and $n$-tuple graded bundles}\label{sec:graded bundles}
\subsection{Graded manifolds and graded bundles}
The general theory of graded manifolds as employed in this work was developed  by Voronov  in $\cite{Voronov:2001qf}$. Following the standard notion of a supermanifold, a graded manifold is defined in terms of its structure sheaf which is $\mathbb{Z}\times \mathbb{Z}_{2}$ graded and the sign factors in the commutation rules are given by projection onto the second factor of the grading. We assume that there exists an atlas consisting of affine charts in which the coordinate functions have both well defined  integer \emph{weight} and Grassmann \emph{parity} represented by an element of $\mathbb{Z}_2$. Furthermore, geometric objects on graded manifolds, such as vector fields, also carry weight.  The graded structure is conveniently encoded in a weight vector field, which via its Lie derivative counts the weight of homogeneous objects. Changes of local coordinates respect the weight and Grassmann parity. Similarly, morphisms in the category of graded manifold are similarly required to respect the weight and Grassmann parity.

An important class of graded manifold are those that carry  non-negative integer weights. That is the structure sheaf is $\mathbb{N}\times\mathbb{Z}_{2}$ graded.   For clarity and with our applications in mind we will further restrict to the strictly commutative case. That is, the underlying structure will be that of a genuine manifold and not a supermanifold. However, this restriction is inessential for the geometric constructions presented in this section.

We will require that  the weight vector fields on  non-negatively  graded manifolds to be \emph{h-complete} \cite{Grabowski2012}.   To explain this concept properly, let us  denote the algebra of homogenous in weight functions on $F$ as $\mathcal{A}(F) = \bigoplus_{i \in \mathbb{R}} \mathcal{A}^{i}(F)$, where $\mathcal{A}^{i}(F)$ is the vector space of  homogeneous functions $f$ of weight $i$. Note that a priori the weight of a given homogeneous function need not be integer valued, but rather the weight can be any real number, including irrational and negative. Generically the weight vector field induces a one-parameter group of automorphisms of the algebra of homogeneous functions as $\Phi_{\lambda}(f) = e^{\lambda w}f$ where $f \in \mathcal{A}^{w}(F)$. One can also think in terms of an action of the multiplicative group $(\mathbb{R}_{>0 }, \cdot)$ viz $h_{t}(f) = \Phi_{\ln(t)}(f) = t^{w}f$.

The h-completeness of the weight vector field means that the action $h_{t}$ can be extended to a smooth action of the moniod $(\mathbb{R}, \cdot)$ on $F$. In particular, this means that only non-negative integer weights are allowed; $\mathcal{A}(F) = \bigoplus_{i \in \mathbb{N}}\mathcal{A}^{i}(F)$.  This algebra, which is dense on the algebra of all smooth function on $F$ with respect to any reasonable $C^{\infty}$-topology, is referred to as the\emph{ algebra of polynomial functions} on $F$.  Such actions are known as \emph{homogeneity structures} \cite{Grabowski2012}. This means that for $t \neq 0$ the action $h_{t}$ is a diffeomorphism of $F$ and when $t=0$ it is a smooth surjection $\tau=h_0$ onto $F_{0}=M$, with the fibres being diffeomorphic to $\mathbb{R}^{N}$.  What we obtain are particular \emph{polynomial bundles} $\tau:F\to M$, i.e. fibrations which locally look like $U\times\R^N$ and the change of coordinates (for a certain choice of an atlas) are polynomial in $\R^N$. For this reason graded manifolds with non-negative weights and h-complete weight vector fields are also known as \emph{graded bundles} \cite{Grabowski2012}. The  h-completeness condition  implies that  graded bundles are determined by the algebra of homogeneous functions on them.  Importantly, each $\mathcal{A}^{i}(F)$ is a locally free and finitely generated $C^{\infty}(M)$-module, meaning that each $\mathcal{A}^{i}(F)$ is the module of sections of some vector bundle over $M$. Canonical examples of graded bundles are, for instance, vector bundles, $n$-tuple vector bundles, higher tangent bundles $\sT^kM$, and multivector bundles $\we^n\sT E$ of vector bundles $\tau:E\to M$ with respect to the projection $\we^n\sT\tau:\we^n\sT E\to \we^n\sT M$ (see \cite{Grabowska2014}).

 \begin{remark}
 While it is possible to consider manifolds with a gradation  that does not automatically lead to h-complete weight vector fields, such manifolds are far less rigid in their structure and can exhibit some pathological behavior from our point of view. We should also remark that Roytenberg,  \v{S}evera   and Voronov assume that on a graded supermanifold  the Grassmann even coordinates of non-zero weight are ``cylindrical"  \cite[Definition 4.1]{Voronov:2001qf} from the very beginning. This, together with the fact that functions of odd coordinates are automatically polynomial, means that  the weight vector fields on non-negatively graded supermanifolds or N-manifolds  are h-complete.
 \end{remark}

 \begin{example}
Consider the vector space $\R$ with the standard Euler vector field $\nabla=x\pa_x$. The open submanifold
$F=\{x\in\R:x>0\}$ with the restriction of this Euler vector field $\nabla_F={x\pa_x}_{|F}$ as the weight vector field is a graded manifold. The weight vector field is in this case complete but not h-complete, as the action $h_t(x)=tx$ has no limit at $t\to 0$ and hence cannot be extended to the action of $\R$. What is more, there are homogenous functions with non-integer degrees: $f_a(x)=x^a$ is of degree $a$ for any $a\in\R$. The homogeneous functions $f_a$ are not polynomials in the homogeneous coordinate $x$ if $a \notin \mathbb{N}$.
\end{example}

The previous example shows that it is quite possible to have weight vector fields that are complete, but not h-complete; the above graded structure is not fully encoded in an action of $(\mathbb{R}, \cdot)$. Moreover, without h-completeness  homogeneous functions are not necessarily polynomial in homogeneous coordinates. From our perspective such graded manifolds, although well defined, have a far less rigid structure.

\begin{example}
Now consider instead the graded manifold $F'=\{x\in\R:|x|<1\}$ with the corresponding restriction $\nabla_{F'}$ of
$\nabla$ as the weight vector field, then $\nabla_{F'}$ is not even complete as the corresponding homotheties $h_t$, though defined for all $|t|\le 1$, thus also for 0, cannot be extended for $|t|>1$. On the other hand, all homogeneous functions have integer non-negative degrees and are
polynomials in the homogeneous coordinate $x$. The algebra $\mathcal{A}(F')$ is the algebra of polynomials in $x$ exactly like $\mathcal{A}(\R)$ although $F'$ and $\R$ are not isomorphic as graded manifolds, as there are no diffeomorphisms between the interval $(-1,1)$ and $\R$ respecting weights.
\end{example}

The above example shows that it is  reasonable to understand a general `graded manifold' as a manifold with a consistently chosen atlas of homogeneous coordinates, without insisting on the weight vector field being complete, let alone h-complete. Clearly, we cannot understand such graded structure as being associated with a smooth action of the multiplicative monoid of reals.

\medskip
A fundamental result is that any smooth action of the multiplicative monoid  $(\R,\cdot)$ on a manifold leads to a $\mathbb{N}$-gradation of that manifold. A little more carefully, the category of graded bundles is  equivalent to the category of $(\mathbb{R}, \cdot)$-manifolds and equivariant maps. A canonical construction of this correspondence goes as follows.
Take $ p \in F$ and consider $t \mapsto h_{t}(p)$ as a smooth curve $\zg_h^p$ in $F$. This curve meets  $M=h_0(F)$ for $t=0$ and is constant for $p\in M$.

\begin{theorem}[\cite{Grabowski2012}]
For any homogeneity structure $h:\R\ti F\to F$, the subset $M=h_0(F)$ is a submanifold of $F$
and there is $k\in\mathbb{N}$ such that
the map
\be\label{Phi}\Phi_h^k:F\to \sT^k F_{|M}\,,\quad \Phi^k_h(p)=\sj^k_0(\zg_h^p)\,,
\ee
is an equivariant (with respect to the monoid actions of $(\R,\cdot)$ on $F$ and $\sT^k F$) embedding of $F$ onto a graded submanifold of the graded bundle $\sT^k F_{|M}$. In particular, there is an atlas on $F$ consisting of homogeneous function.
\end{theorem}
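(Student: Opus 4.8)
The plan is to build, directly from the monoid action $h$, an atlas of homogeneous functions, and then to read off the embedding $\Phi^k_h$ from the Taylor coefficients of the orbit curves $\zg_h^p$. First I would establish that $M=h_0(F)$ is a submanifold. The map $\tau=h_0$ is a smooth idempotent, since $h_0\circ h_0 = h_{0\cdot 0}=h_0$; the image of a smooth idempotent coincides with its fixed-point set and is therefore a closed submanifold. The same monoid identity gives $h_t\circ h_0=h_0$, so every point of $M$ is fixed by the whole action, $h_t|_M=\mathrm{id}_M$. This lets me choose, near any $m\in M$, coordinates $(x^a,y^\mu)$ adapted to $M=\{y=0\}$ in which $h_0(x,y)=(x,0)$.

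Next I would produce homogeneous functions. For $f\in C^\infty(F)$ set $g(t,p)=f(h_t(p))$; this is smooth on $\R\ti F$ by h-completeness, and the cocycle property $h_t\circ h_s=h_{ts}$ yields the scaling relation $g(ts,p)=g(t,h_s(p))$. Defining the Taylor components $f_{(i)}(p)=\tfrac1{i!}\pa_t^i g(t,p)|_{t=0}$, the substitution $u=ts$ in $\pa_t^i g(ts,p)|_{t=0}$ shows $f_{(i)}(h_s(p))=s^i f_{(i)}(p)$, i.e. each $f_{(i)}$ is homogeneous of weight $i$. Applying this to coordinate functions transverse to $M$, and combining with the base coordinates $x^a$ of weight $0$, produces candidate homogeneous coordinates.

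The crux, and the step I expect to be the main obstacle, is to show that this homogeneous expansion terminates at a finite, uniform degree $k$, equivalently that the orbit curves are polynomial. Writing Taylor's formula with remainder, $g(t,p)=\sum_{i<N}f_{(i)}(p)t^i+t^NR_N(t,p)$, the scaling relation forces $s^N R_N(ts,p)=R_N(t,h_s(p))$; setting $t=1$ and letting $s\to 0$ already shows the remainder degenerates along $M$. The real work is to upgrade this into genuine polynomiality: one must prove that only non-negative integer weights occur, with non-negativity and integrality both coming from smoothness of the action across $t=0$, and that, because $F$ is finite-dimensional and each weight space $\mathcal{A}^i(F)$ is a finitely generated $C^\infty(M)$-module, there are only finitely many positive weights, bounded by some $k$. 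For that $k$ the functions $g(t,p)$ are polynomial of degree $\le k$ in $t$, and the homogeneous functions of weight $\le k$ furnish the desired atlas, proving the last assertion.

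Finally I would assemble $\Phi^k_h$. Equip $\sT^k F$ with its canonical homogeneity structure given by reparametrising curves, $s\cdot \sj^k_0\zg=\sj^k_0(\zg(s\,\cdot\,))$, which assigns weight $i$ to the $i$-th jet coordinate. Since $\zg_h^p(0)=h_0(p)\in M$, the map $\Phi^k_h(p)=\sj^k_0(\zg_h^p)$ lands in $\sT^k F_{|M}$, and equivariance is immediate from $\Phi^k_h(h_s(p))=\sj^k_0(t\mapsto h_{ts}(p))=s\cdot\Phi^k_h(p)$. In the homogeneous coordinates above, $h_t^*y^\mu_w=t^w y^\mu_w$ means that the Taylor coefficients of $\zg_h^p$ are exactly the homogeneous coordinates of $p$; hence $\Phi^k_h$ simply copies these coordinates into the corresponding jet coordinates. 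This makes $\Phi^k_h$ an injective immersion whose image is the graded submanifold of $\sT^k F_{|M}$ cut out by setting the remaining jet coordinates to zero, which completes the argument.
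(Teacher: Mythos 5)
Your outline matches the skeleton of the argument in the cited source \cite{Grabowski2012} (note that the present paper only quotes this theorem and refers there for the proof): $h_0$ is a smooth idempotent, so $M=h_0(F)$ is its closed fixed-point set and a submanifold, fixed pointwise by every $h_t$; the Taylor coefficients $f_{(i)}$ of $t\mapsto f(h_t(p))$ are homogeneous of weight $i$ by the substitution argument you give; and equivariance of $\Phi^k_h$ is exactly the one-line jet computation you perform. Up to there the proposal is sound.

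The decisive step, however, is genuinely missing, and the justification you sketch for it cannot be repaired as stated. Invoking that each $\mathcal{A}^i(F)$ is a finitely generated $C^{\infty}(M)$-module is circular: that is a consequence of the graded bundle structure the theorem is establishing, not an available hypothesis. Worse, the claim that for suitable $k$ the functions $g(t,p)=f(h_t(p))$ are polynomial of degree $\le k$ in $t$ is false for a general smooth $f$: already for $F=\R$ with $h_t(y)=ty$ and $f=\varphi$ arbitrary, $g(t,p)=\varphi(t\,y(p))$ is not polynomial. Orbit curves are polynomial only in homogeneous coordinates, whose existence is precisely what is to be proved, so your remainder identity $s^NR_N(ts,p)=R_N(t,h_s(p))$ cannot by itself kill the flat part. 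The missing idea, which is how the cited proof closes the gap, is to linearise the action along $M$: at $m\in M$ one obtains a smooth monoid representation $t\mapsto A(t)=\sT_mh_t$ on $T_mF$ with $A(ts)=A(t)A(s)$ and $A(1)=\operatorname{id}$; writing $A(e^u)=e^{uX}$ for $t>0$, smoothness of $A$ across $t=0$ forces $X$ to be diagonalisable with non-negative \emph{integer} eigenvalues, so $A(t)=\sum_i t^iP_i$ is a finite sum over commuting projectors (finitely many weights because $T_mF$ is finite-dimensional, with locally constant ranks, giving a locally uniform $k$). This makes $\Phi^k_h$ an immersion at points of $M$ via the inverse function theorem applied to the $f_{(i)}$ of adapted coordinates, and then equivariance together with the fact that $h_t$ is a diffeomorphism for $t\neq 0$ propagates immersivity and injectivity from a neighbourhood of $M$ to all of $F$ (if $\Phi^k_h(p)=\Phi^k_h(q)$, then $\Phi^k_h(h_tp)=\Phi^k_h(h_tq)$; for small $t\neq 0$ both points lie where $\Phi^k_h$ is injective, and $h_t$ is injective). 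Only once the embedding is in hand does one pull back homogeneous coordinates from $\sT^kF_{|M}$ to obtain the atlas; your order of construction, atlas first and embedding second, inverts this and thereby rests entirely on the unproved step.
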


Any $k$ described by the above theorem we call a \emph{degree} of the homogeneity structure $h$. We stress that  a graded bundle is not just a manifold with consistently defined homogeneous local coordinates. The additional condition is that the weight vector field encoding the graded structure is  h-complete; that is the associated one-parameter group of diffeomorphisms  can be extended to an action of the monoid $(\mathbb{R}, \cdot)$. Importantly, we can use the homogeneity structure to neatly encode the graded structure and phrase much of the theory accordingly.

One can pick an affine (maximal) atlas of $F$ consisting of charts for which we  have homogeneous local coordinates $(x^{A}, y_{w}^{a})$, where $\w(x^{A}) =0$ and  $\w(y_{w}^{a}) = w$ with $1\leq w \leq k$, for some $k \in \mathbb{N}$ known as the \emph{degree}. That is we will group all the coordinates with non-zero weight together.  Here $a$ should be considered as a ``generalised index" running over all the possible weights. The label $w$ in this respect is somewhat redundant, but it will come in very useful when checking the weight of various expressions.  The local changes of coordinates  are of the form

\begin{eqnarray}\label{eqn:translaws}
x^{A'} &=& x^{A'}(x),\\
\nonumber y^{a'}_{w} &=& y^{b}_{w} T_{b}^{\:\: a'}(x) + \sum_{\stackrel{1<n  }{w_{1} + \cdots + w_{n} = w}} \frac{1}{n!}y^{b_{1}}_{w_{1}} \cdots y^{b_{n}}_{w_{n}}T_{b_{n} \cdots b_{1}}^{\:\:\: \:\:\:\:\:a'}(x),
\end{eqnarray}
where $T_{b}^{\:\: a'}$ are invertible and the $T_{b_{n} \cdots b_{1}}^{\:\:\: \:\:\:\:\:a'}$ are symmetric in lower indices.

A graded bundle  of degree $k$  admits a sequence of  polynomial fibrations, where a point of $F_{l}$ is a class of the points of $F$ described  in an affine coordinate system by the coordinates of weight $\leq l$, with the obvious tower of  surjections

\begin{equation}\label{eqn:fibrations}
F=F_{k} \stackrel{\tau^{k}}{\longrightarrow} F_{k-1} \stackrel{\tau^{k-1}}{\longrightarrow}   \cdots \stackrel{\tau^{3}}{\longrightarrow} F_{2} \stackrel{\tau^{2}}{\longrightarrow}F_{1} \stackrel{\tau^{1}}{\longrightarrow} F_{0} = M,
\end{equation}

\noindent where the coordinates on $M$ have zero weight. Note that  $F_{1} \rightarrow M$ is a linear fibration and the other fibrations $F_{l} \rightarrow F_{l-1}$ are affine fibrations in the sense that the changes of local coordinates for the fibres are linear plus and additional additive terms of appropriate weight (cf. \ref{eqn:translaws}).  The model fibres here are $\mathbb{R}^{N}$ (cf.  \cite{Grabowski2012}).  We will also on occasion employ the projections $\tau^{i}_{j} : F_{i} \rightarrow F_{j}$, which are defined as the appropriate composition of the above projections. We will also use on occasion $\tau := \tau^{k}_{0} : F_{k} \rightarrow M$.\smallskip

There is also a ``dual"  sequence of submanifolds and their inclusions

\begin{equation}\label{eqn:submanifolds}
M := F_{0}= F^{[k]}  \hookrightarrow F^{[k-1]} \hookrightarrow \cdots \hookrightarrow F^{[0]} = F_{k},
\end{equation}

\noindent where we define, locally but correctly,

\begin{equation*}
F^{[i]} := \left\{\left. p \in F_{k} \right| y_{w}^{a} = 0 \:\:\: \textnormal{if} \:\: w \leq i   \right\}.
\end{equation*}

\noindent If we define $\mathcal{A}_{l}(F)$ to be the subalgebra of $\mathcal{A}(F)$ locally generated by functions of weight $\leq l$, then corresponding to the sequence of submanifolds and their inclusions is the filtration of algebras $C^{\infty}(M) = \mathcal{A}_{0}(F) \subset \mathcal{A}_{1}(F) \subset \cdots \subset \mathcal{A}_{k}(F) = \mathcal{A}(F)$.

\noindent It will also be useful to consider the submanifold $\bar{F}_{l}:= F_{l}^{[l-1]}$ which is in fact linearly fibred   over $M$, with the linear coordinates carrying weight $l$.  The module of sections of $\bar{F}_{l}$ is identified with the $C^{\infty}(M)$-module $\mathcal{A}^{l}(F)$. In words, ``\emph{you project higher to lower, but set to zero lower to higher}".

\begin{example}
The degree two case is sufficient to illustrate the general situation. On a graded manifold of degree $2$, which we denote as $F_{2}$, we can find homogeneous local coordinates $(x^{A}, y^{a}, z^{i})$ of weight $0,1,2$ respectively. The admissible changes of local coordinates are of the form;

\begin{equation}
x^{A'} = x^{A'}(x), \hspace{10pt} y^{a'} = y^{b}T_{b}^{\:\: a'}(x), \hspace{10pt}
z^{i'} = z^{j}T_{j}^{\:\: i'}(x) + \frac{1}{2!}y^{a}y^{b}T_{ba}^{\:\:\: i'}(x).
\end{equation}

From these transformation laws it is clear that $F_{2} \rightarrow F_{1}$ is an affine fibration while $F_{1}\rightarrow F_{0} =M$ is a linear  fibration. Also it is easy to see that $F^{[1]}$ admits local coordinates $(x^{A}, z^{i})$. Due to the admissible changes of local coordinates   $F^{[1]} \rightarrow M$ is again a linear fibration. Note that trying to construct a submanifold of $F_{2}$ by setting $z^{i} =0$ is not going to work due to the affine nature of the transformation laws. Also note that for this low degree example we have $\bar{F}_{2} = F^{[1]}_{2} = F^{[1]}$.
\end{example}

From the transformation laws of the local coordinates on $F$ it is clear that for any graded bundle of degree $k$ we have a filtration of integrable vertical distributions associated with the series of projections (\ref{eqn:fibrations})
 \begin{equation}
 \mathcal{V}_{k} \subset \mathcal{V}_{k-1} \subset \cdots \subset \mathcal{V}_{1} \subset \sT F_{k},
\end{equation}

\noindent which admit local bases $\mathcal{V}_{1} = \Span \{ \frac{\partial}{\partial y_{1}} , \frac{\partial}{\partial y_{2}} , \cdots \frac{\partial}{\partial y_{k}}\}$, $\mathcal{V}_{2}= \Span\{ \frac{\partial}{\partial y_{2}} , \frac{\partial}{\partial y_{3}} , \cdots \frac{\partial}{\partial y_{k}}\}$, $\cdots $, $\mathcal{V}_{k}= \Span \{ \frac{\partial}{\partial y_{k}}\}$, where we  have suppressed the indices for clarity. For example, the weight vector field, given in homogeneous  local coordinates by

\begin{equation}
\Delta_{F} = \sum_{w=1}^{k}  w \: y^{a}_{w} \frac{\partial}{\partial y^{a}_{w}},
\end{equation}

\noindent belongs to $\mathcal{V}_{1}$, that is, it is vertical with respect to the projection $\tau : F_{k} \rightarrow M$.

Morphisms  between graded bundles are necessarily polynomial in the non-zero weight coordinates and respect the weight. Again, one can apply the symbolic notation to get at the structure of morphisms between graded bundles. Morphisms of graded bundles necessarily  intertwine the respective weight vector fields \cite{Grabowski2012,Voronov:2001qf}, or equivalently the respective homogeneity structures \cite{Grabowski2012}.

\begin{example}
The standard examples of graded bundles include;
 \begin{enumerate}
 \item Any manifold if we assign weight zero to all the coordinates.
 \item Total spaces of vector bundles  $\tau: E \rightarrow M$ if we assign any positive weight to the linear coordinates and zero to the base coordinates. The natural choice is to assign weight one to  the fibre coordinates.
 \item The  $k$-th order tangent bundle $\sT^{k}M$ of a manifold $M$ naturally has the structure of a graded bundle of degree $k$.
 \item The tangent bundle of $k^{1}$-velocities $\sT^{\times k}M = \sT M \times_{M}\sT M \times_{M}\cdots \times_{M}\sT M$ (k-times) is graded bundle of degree $k$ by assigning weight $i$ $(1\leq i \leq k)$ to  the linear coordinates of the $i$-th factor.
 \item A little more generally  a \emph{split graded bundle} is of the form $F_{k} = E_{1}\times_{M}E_{2} \times_{M} \cdots \times_{M}E_{k}$ where each $E_{i}$ is the total space of some vector bundle and we assign weight $i$ to the linear coordinates  of $E_i$. Clearly graded bundles of degree $1$ are always split as they correspond to vector bundles.
 \item Multivector bundles $\we^n\sT E$ over vector bundles $\tau:E\to M$ (or, more generally, graded bundles) with respect to the projection $\we^n\sT\tau:\we^n\sT E\to \we^n\sT M$ (see \cite{Grabowska2014}).
 \item Inductively: tangent (more generally: higher tangent) and cotangent bundles, $\sT^rF$ and $\sT^*F$ of graded bundles $F$ (see \cite{Grabowski2009,Grabowski2013} and the next section).
\end{enumerate}
\end{example}

\begin{remark}
Note that we have a Gaw\c{e}dzki--Batchelor-like theorem here stating that every graded bundle is \emph{non-canonically isomorphic} to a split graded bundle. The split form of a graded bundle \emph{is} canonical and given by  $F_{k} \simeq \bar{F}_{1}\times_{M}\bar{F}_{2}\times_{M} \cdots \times_{M} \bar{F}_{k} $, but the isomorphism itself  is non-canonical. While this has been folklore for a number of years in the super-context, the first formal proof can be found in \cite{Bonavolonta:2013}.  Recall that every (real) supermanifold is isomorphic to a supermanifold  of the form $\Pi E$ where $E$ is an ordinary smooth vector bundle. However, the category of supermanifolds is not equivalent to that of vector bundles: a morphism of vector bundles is equivalent to a morphism of supermanifolds that is strictly homogeneous in weight, while a general morphism of supermanifolds  need only be parity preserving. A similar situation arises in the context of graded bundles. In particular the category of graded bundles is \emph{not} equivalent to that of vector bundles as in general we allow  morphisms that are not strictly  linear as long as they preserve the weight,  we know that they are generally polynomial. We will not make any explicit use of a splitting in this work, other than the partial splitting described below.
\end{remark}

\begin{proposition}\label{prop:partialsplitting}
 Any graded bundle $F_{k}$ of degree $k>1$ can always be non-canonically partially split as $F_{k} \simeq F_{k-1} \times_{M}\bar{F}_{k}$.
\end{proposition}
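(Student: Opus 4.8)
The plan is to realise the projection $\tau^{k}:F_{k}\to F_{k-1}$ as an affine bundle whose model vector bundle is the pullback of $\bar{F}_{k}\to M$, and then to split this affine bundle by a \emph{graded} section manufactured from a partition of unity, the point being that such a section exists and is automatically weight-compatible. First I would read the affine structure off the transition laws (\ref{eqn:translaws}). For the top weight $w=k$ they become $y^{a'}_{k}=y^{b}_{k}T_{b}^{\:\:a'}(x)+\sum_{n\geq 2,\, w_{1}+\cdots+w_{n}=k}\frac{1}{n!}y^{b_{1}}_{w_{1}}\cdots y^{b_{n}}_{w_{n}}T^{\:\:\:\:\:\:\:\:a'}_{b_{n}\cdots b_{1}}(x)$. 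Since $n\geq 2$ and each $w_{i}\geq 1$ force $w_{i}\leq k-1$, every term in the sum is a (pullback of a) function on $F_{k-1}$, while the leading term is linear in $y_{k}$ with exactly the matrix $T_{b}^{\:\:a'}(x)$ that governs the transition functions of $\bar{F}_{k}=F_{k}^{[k-1]}$. Hence $\tau^{k}$ is an affine bundle modelled on $(\tau^{k-1}_{0})^{*}\bar{F}_{k}$, whose total space is precisely $F_{k-1}\times_{M}\bar{F}_{k}$.

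Next I would produce a global graded section $\sigma:F_{k-1}\to F_{k}$ of $\tau^{k}$. Over a chart $U_{\alpha}\subseteq M$ carrying homogeneous coordinates the prescription ``set $y^{a}_{k}=0$'' defines a local section $\sigma_{\alpha}$ which is manifestly a morphism of graded bundles, as it preserves the weight of every coordinate function. Choosing a partition of unity $\{\rho_{\alpha}\}$ on $M$ subordinate to the cover, I would set $\sigma:=\sum_{\alpha}(\tau^{k-1}_{0})^{*}(\rho_{\alpha})\,\sigma_{\alpha}$. Because $\sum_{\alpha}\rho_{\alpha}=1$ this is an affine combination of sections of an affine bundle, hence a well-defined global section; and because the coefficients $(\tau^{k-1}_{0})^{*}\rho_{\alpha}$ have weight zero while each $\sigma_{\alpha}$ preserves weight, the affine combination again preserves weight, so $\sigma$ is a graded section. (Concretely, writing $\sigma=\sigma_{\beta_{0}}+\sum_{\alpha}(\tau^{k-1}_{0})^{*}(\rho_{\alpha})(\sigma_{\alpha}-\sigma_{\beta_{0}})$ exhibits it as $\sigma_{\beta_{0}}$ shifted by a weight-$k$ section of the model bundle.)

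Finally, the section $\sigma$ trivialises the affine bundle: the map $\Psi:F_{k}\to F_{k-1}\times_{M}\bar{F}_{k}$, $\Psi(p)=\bigl(\tau^{k}(p),\, p\ominus\sigma(\tau^{k}(p))\bigr)$, where $\ominus$ is the affine difference valued in the model bundle $(\tau^{k-1}_{0})^{*}\bar{F}_{k}$, is a diffeomorphism with inverse $(q,v)\mapsto\sigma(q)\oplus v$; since $\tau^{k}$, $\sigma$ and the affine operations are all weight-compatible, $\Psi$ is an isomorphism of graded bundles over $M$, and it is non-canonical precisely because it depends on the choice of $\sigma$ (equivalently of the partition of unity). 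I expect the only delicate point to be verifying that the graded structure survives the affine averaging, i.e.\ that an affine combination of weight-preserving sections with weight-zero coefficients is again weight-preserving; this is immediate once the affine-bundle structure and its compatibility with the homogeneity structures are in place. Iterating this one-step splitting down the tower (\ref{eqn:fibrations}) would then recover the full (Batchelor-type) canonical split form $F_{k}\simeq\bar{F}_{1}\times_{M}\cdots\times_{M}\bar{F}_{k}$.
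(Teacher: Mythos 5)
Your proof is correct, but it takes a genuinely different route from the paper's. The paper argues entirely on the function-algebra side: it isolates $\mathcal{A}^{\langle k\rangle}(F)=\operatorname{Alg}(\mathcal{A}_{k-1}(F))\cap\mathcal{A}^{k}(F)$, the weight-$k$ polynomials in lower-weight variables, forms the short exact sequence of $C^{\infty}(M)$-modules $0\to\mathcal{A}^{\langle k\rangle}(F)\to\mathcal{A}^{k}(F)\to\Gamma(\bar{F}_{k}^{*})\to 0$, and splits it by projectivity of $\Gamma(\bar{F}_{k}^{*})$; a chosen splitting produces global weight-$k$ fibre functions transforming linearly, whence the diffeomorphism. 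You instead geometrise: you read off from (\ref{eqn:translaws}) that $\tau^{k}:F_{k}\to F_{k-1}$ is an affine bundle modelled on $(\tau^{k-1}_{0})^{*}\bar{F}_{k}$ (the paper records this affine character after (\ref{eqn:fibrations}) but does not exploit it in the proof), and you trivialise it by a global section built from affine combinations of the local ``$y_{k}=0$'' sections. The two constructions carry equivalent data --- a module splitting $\bar{F}_{k}^{*}\to\mathcal{A}^{k}(F)$ is the same thing as a choice of globally defined linearised weight-$k$ coordinates, i.e.\ a graded section of the affine bundle (its zero locus) --- and both ultimately rest on paracompactness, since the projectivity/splitting used by the paper is itself proved with partitions of unity. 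What your version buys is an explicit geometric mechanism and a transparent location of the non-canonicity (the choice of $\sigma$); what the paper's buys is brevity and consistency with its systematic use of the algebras $\mathcal{A}^{i}(F)$. One point you flag as delicate is indeed the crux, and your resolution is exactly right: an arbitrary partition of unity on $F_{k-1}$ would destroy homogeneity, but pullbacks from $M$ are precisely the weight-zero functions (as $\mathcal{A}^{0}(F)=C^{\infty}(M)$), so the affine average of weight-preserving sections with such coefficients is again weight-preserving, and the resulting fibre coordinate $v^{a}_{k}=y^{a}_{k}-\sigma^{a}_{k}$ transforms linearly with weight $k$, giving an isomorphism of graded bundles onto $F_{k-1}\times_{M}\bar{F}_{k}$.
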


\begin{proof}
For simplicity we will assume that $F_{0} = M$ is connected. Denote with $\mathcal{A}^{\langle k\rangle}(F)$ the vector bundle of polynomials of degree $k$ in variables of degree smaller that $k$ (it is a locally free sheaf of $C^{\infty}(M)$-modules). In other words, if $\operatorname{Alg}(\mathcal{A}_{k-1}(F))$ is the algebra generated by functions of degrees $\le k-1$, $\mathcal{A}_{k-1}(F) = \mathcal{A}^{0}(F)\oplus\mathcal{A}^{1}(F) \oplus \cdots \oplus\mathcal{A}^{k-1}(F)$, then
$$\mathcal{A}^{\langle k\rangle}(F)=\operatorname{Alg}\left(\mathcal{A}_{k-1}(F)\right) \cap \mathcal{A}^{k}(F)$$
and
\begin{equation*}
\mathcal{A}^{k}(F)\setminus\mathcal{A}^{\langle k\rangle}(F) \simeq \bar{F}_{k}^*\,.
\end{equation*}
Thus, we have the short exact sequence of $C^{\infty}(M)$-modules
\begin{equation*}
0 \rightarrow \mathcal{A}^{\langle k\rangle}(F) \cap \mathcal{A}^{k}(F) \longrightarrow \mathcal{A}^{k}(F) \longrightarrow \bar{F}_{k}^* \rightarrow 0,
\end{equation*}
which is non-canonically split as $\Gamma(\bar{F}_{k}^*)$ is a projective module. Then with a chosen splitting it follows that
\begin{equation*}
\mathcal{A}(F) = A_{k-1}(F) \oplus_{C^{\infty}(M)}\bar{F}_{k}^*.
\end{equation*}
Then with a  choice of splitting we see that $F_{k}$ is diffeomorphic to $F_{k-1} \times_{M}\bar{F}_{k}$.
\end{proof}

\begin{remark}
The above argument can then be iterated to show that any graded bundle can non-canonically be split. The downside to the Gaw\c{e}dzki--Bachelor-like theorem for graded bundles is that in general we have no preferred choice of splitting, we only know that splittings always exist.
\end{remark}

\begin{example}
It is well known that the second order tangent bundle splits as  $\sT^{2}M \simeq \sT M \times_{M}\sT M $, but of course non-canonically. A little more generally we have $\sT^{k}M \simeq \sT^{k-1}M \times_{M} \sT M$, with the obvious assignment of weights. Iterating this we see that the k-th order tangent bundle is diffeomorphic to the tangent bundle of $k^{1}$-velocities, $\sT^{k}M \simeq \sT^{\times k}M$, albeit non-canonically.
\end{example}

\begin{example}
Similarly to the pervious example  $\sT (\sT^{*}M) \simeq \sT^{*} (\sT M) \simeq \sT M\times_{M} \sT^{*}M \times_{M} \sT^{*}M$ where we consider the total weight of the double vector bundles in question. Again this splitting  is  non-canonical. See the next subsection for details on double graded bundles.
\end{example}

\begin{example}
Sometimes a splitting can be canonical. It is well known that for a vector bundle  $\pi:E\rightarrow M$, the associated vertical bundle canonically splits as $VE \simeq E \times_{M} E \subset \sT E$. In the context of graded bundles the homogeneous local coordinates on $VE$ are $(x, y , \dot{y})$, which can naturally be assigned the weights of $0, 1$ and $2$ respectively.  Thus we have an example of a canonically split graded bundle of degree $2$. In fact this example generalises to $V^{k}E := \ker(\sT^{k}\pi ) \simeq E \times_{M} E \times_{M} \cdots \times_{M} E \subset \sT^{k} E$, where the fibred product is repeated $k$-times. Again the weight is assigned as $i$ to the fibre coordinates of the $i$-th factor and so we have a canonically split graded bundle of degree $k$.
\end{example}

\subsection{Double and $n$-tuple graded bundles}
To recap, a graded bundle is really a pair $(F, \Delta)$ where $F$ is a smooth manifold for which the structure sheaf carries an $\mathbb{N}$-grading and $\Delta$ is the associated h-complete weight vector field. For example, if the weight is constrained to be either zero or one, then the weight vector field is precisely a  vector bundle structure on $F$ and will be generally referred  to as an \emph{Euler vector field}. The notion of a double vector bundle \cite{Pradines:1974} (or a higher $n$-tuple vector bundle)   is conceptually  clear in the graded language in terms of mutually commuting weight vector fields; see  \cite{Grabowski2009,Grabowski2012}. This leads to the higher analogues known as \emph{$n$-tuple graded bundles}.

\begin{definition}
An  \emph{$n$-tuple graded bundle} is a manifold $\mathcal{M}$  for which  its structure sheaf carries an $\mathbb{N}^{n}$-grading such that all the weight vector fields are h-complete and pairwise commuting.  In particular a \emph{double graded bundle} consists of a manifold and a pair of mutually commuting weight vector fields. If all the weights are either zero or one then we speak of an \emph{$n$-tuple vector bundle}.
\end{definition}

\begin{remark}
Equivalently, one can define an $n$-tuple graded bundle in terms of pairwise commuting homogeneity structures. Depending questions posed, using homogeneity structures may offer direct answers that are far less tractable than using local coordinates or weight vector fields.
\end{remark}

 Let $\mathcal{M}$ be an $n$-tuple graded bundle and $(\Delta^{1}_{\mathcal{M}} , \cdots , \Delta^{n}_{\mathcal{M}})$ be the collection of pairwise commuting weight vector fields.The local triviality of $n$-tuple graded bundles was established in \cite{Grabowski2012}. This means that we can always equip an $n$-tuple graded bundle with  an atlas such that the charts consist of coordinates that are simultaneously homogeneous with respect to the  weights associated with each weight vector field. Thus, we can always equip an $n$-tuple graded bundle with homogeneous local coordinates $(y)$ such that $\underline{\w}(y) = (\w_{1}(y) , \cdots \w_{n}(y) ) \in \mathbb{N}^{n}$. The changes of local coordinates must respect the weights.

Each $\Delta^{s}_{\mathcal{M}}$  $(1 \leq s \leq n)$ defines a submanifold $\mathcal{M}_{s} \subset \mathcal{M}$ for which $\Delta^{s}_{\mathcal{M}}=0$. Moreover, as the weight vector fields are h-complete, we have a bundle structure, that is a surjective submersion $h_{0}^{s}:\mathcal{M} \rightarrow \mathcal{M}_{s}$ defined by the homogeneity structures associated with  each $\Delta^{s}_{\mathcal{M}}$.  As the homogeneity structures (or equivalently the weight vector fields) all pairwise commute, we have a whole diagram of fibrations  $h_{0}^{i_{1}}\circ h_{0}^{i_{2}}\circ \cdots \circ h_{0}^{i_{n}}: \mathcal{M} \rightarrow \mathcal{M}_{i_{1}}\cap \mathcal{M}_{i_{2}} \cap \cdots \cap \mathcal{M}_{i_{n}}$, where the fibres are homogeneity spaces (cf. \cite{Grabowski2012}).

 \begin{example}All $n$-tuple vector bundles, e.g. double vector bundles, are examples of $n$-tuple graded bundles. Here each of the weights that make up the multi-weight are restricted to be either zero or one. Thus, $n$-tuple vector bundles are encoded in a collection of n pairwise commuting Euler vector fields.
 \end{example}

\begin{example}
Consider an arbitrary graded bundle $F_{k}$ of degree $k$. Let us employ  homogeneous coordinates $(x^{A}, y_{w}^{a})$ with $1 \leq w \leq k$. The weight vector field is as before given by
\begin{equation*}
\Delta_{F} := \sum_{w} w y_{w}^{a}\frac{\partial}{\partial y_{w}^{a}}.
\end{equation*}
The tangent bundle $\sT F_{k}$ naturally has the structure of a double graded bundle. To see this we employ homogeneous coordinates  with respect to the bi-weight which consists of the natural lifted weight and the weight due the vector bundle structure of the tangent bundle \cite{Grabowski:1995,Yano:1973};

\begin{equation*}
 (\underbrace{x^{A}}_{(0,0)}, \hspace{5pt} \underbrace{y^{a}_{w}}_{(w,0)},\hspace{5pt} \underbrace{\dot{x}_{1}^{B}}_{(0,1)}, \hspace{5pt}  \underbrace{\dot{y}^{b}_{w+1}}_{(w,1)}).
 \end{equation*}
 In other words, we have the first weight vector field being simply the \emph{tangent lift} \cite{Grabowski:1995,Yano:1973} of the weight vector field in $F_{k}$ and the second being the natural Euler  vector field on a tangent bundle;
 \begin{eqnarray}
 \Delta^{1}_{\sT F} &=& \D_{\sT}\Delta_{F} = \sum_{w} w y_{w}^{a}\frac{\partial}{\partial y_{w}^{a}} + \sum_{w} w \dot{y}_{w+1}^{a}\frac{\partial}{\partial \dot{y}_{w+1}^{a}},\\
 \nonumber \Delta^{2}_{\sT F} &=& \dot{x}_{1}^{A}\frac{\partial}{\partial \dot{x}_{1}^{A}} + \sum_{w} \dot{y}_{w+1}^{a}\frac{\partial}{\partial \dot{y}_{w+1}^{a}}.
 \end{eqnarray}
 \begin{tabular}{p{10cm} p{5cm}}
\noindent It is not hard to see that the zeros of these weight vector fields describe the following  diagram of graded bundles and their morphisms:

&
\vspace{-20pt}
\begin{center}
\leavevmode
\begin{xy}
(10,50)*+{\sT F_{k}}="a";%
(0,40)*+{F_{k}}="b"; (20,40)*+{\sT M}="c";%
(10,30)*+{ M}="d";%
{\ar "a";"b"};%
{\ar "a";"c"};%
{\ar "b";"d"};%
{\ar "c";"d"};%
\end{xy}
\end{center}
\end{tabular}\\
 This can be slightly generalized by considering multivector bundles $\we^r\sT F_k$. They are vector bundles over $F_k$, but generally higher graded bundles with respect to the projection
$$\we^r\sT\tau:\we^r\sT^rF_k\to\we^r\sT M\,.$$
We will not consider this case in detail here just referring to the paper \cite{Grabowska2014}.

Within this example we see that the tangent bundle $\sT E$ of a vector bundle $E$ is a double vector bundle, as is well know. What is less known is the fact that $\we^r\sT E$ is no longer a double vector bundle but a double bundle for which one structure is a vector bundle and the second
is a graded bundle (that is vector only if $r\le 1$).
\end{example}

One can construct new weight vector fields on $\mathcal{M}$,  by taking positive linear combinations of the weight vector fields; eg. $\Delta_{\mathcal{M}} = a_{1} \: \Delta^{1}_{\mathcal{M}} + a_{2} \: \Delta^{2}_{\mathcal{M}} + \cdots a_{n}  \: \Delta^{n}_{\mathcal{M}}$ where $a_{I} \in \mathbb{N}$. However, note that in general positive linear combinations of Euler vector fields are not Euler vector fields; i.e. $n$-tuple vector bundles are not really just vector bundles in a different guise. Also  note that sometimes it is possible to take linear combinations with some negative coefficients and still produce a weight vector field. We shall see an example of this in the next section.

\begin{example}
Similarly to the case of the tangent bundle, the cotangent bundle $\sT^{*}F_{k}$ of a graded bundle comes naturally with the structure of a double graded bundle. However,  simply using the naturally induced weight would mean that the ``momenta" will have a negative component of their bi-degree and so the associated weight vector field cannot be h-complete; we would not remain in the category of graded bundles. Instead one needs to consider a \emph{phase lift} of the weight vector field on $F_{k}$  (cf. \cite{Grabowski2013}). The homogeneous coordinates naturally induced on $\sT^{*}F_{k}$ are
\begin{equation*}
 (\underbrace{x^{A}}_{(0,0)}, \hspace{5pt} \underbrace{y^{a}_{w}}_{(w,0)},\hspace{5pt} \underbrace{p^{1}_{B}}_{(0,1)}, \hspace{5pt}  \underbrace{p_{b}^{1-w}}_{(-w,1)}),
 \end{equation*}
where $1\leq w \leq k$. The associated weight vector fields are
\begin{eqnarray}
 \bar{\Delta}^{1}_{\sT^{*}F} &=& \D_{\sT^{*}}\Delta = \sum_{w} w y_{w}^{a}\frac{\partial}{\partial y_{w}^{a}} -\sum_{w} w p_{a}^{1-w}\frac{\partial}{\partial p_{a}^{1-w}},\\
 \nonumber \bar{\Delta}^{2}_{\sT^{*}F} &=& p^{1}_{A}\frac{\partial}{\partial p^{1}_{A}} + \sum_{w} p^{1-w}_{a}\frac{\partial}{\partial p^{1-w}_{a}}.
 \end{eqnarray}
 The \emph{k-phase lift} of the weight vector field $\Delta_{F}$ essentially  produces a shift in the bi-weight to ensure that everything is non-negative. It amounts  to a redefinition of the first weight vector field viz $\Delta^{1}_{\sT^{*} F} =  \bar{\Delta}^{1}_{\sT^{*}F} +  k\: \bar{\Delta}^{2}_{\sT^{*}F}$. With a slight relabeling of our homogeneous coordinates we have
 \begin{equation*}
 (\underbrace{x^{A}}_{(0,0)}, \hspace{5pt} \underbrace{y^{a}_{w}}_{(w,0)},\hspace{5pt} \underbrace{\pi^{k+1}_{B}}_{(k,1)}, \hspace{5pt}  \underbrace{\pi_{b}^{k-w+1}}_{(k-w,1)}),
 \end{equation*}
 and the weight vector fields are now
 \begin{eqnarray}
 \Delta^{1}_{\sT^{*}F} &=&  \sum_{w} w y_{w}^{a}\frac{\partial}{\partial y_{w}^{a}} + k \pi_{A}^{k+1} \frac{\partial }{\partial \pi_{A}^{k+1}} + \sum_{w} (k-w) \pi_{a}^{k-w+1}\frac{\partial}{\partial \pi_{a}^{k-w+1}},\\
 \nonumber \Delta^{2}_{\sT^{*}F} &=& \pi_{A}^{k+1}\frac{\partial}{\partial \pi^{k+1}_{A}} + \sum_{w} \pi^{k-w+1}_{a}\frac{\partial}{\partial \pi^{k-w+1}_{a}}.
 \end{eqnarray}
\begin{tabular}{p{10cm} p{5cm}}
\noindent It is not hard to see that the zeros of these weight vector fields describe the following diagram of graded bundles and their morphisms:

&
\vspace{-20pt}
\begin{center}
\leavevmode
\begin{xy}
(10,50)*+{\sT^{\ast} F_{k}}="a";%
(0,40)*+{F_{k}}="b"; (20,40)*+{\bar{F}_{k}}="c";%
(10,30)*+{ M}="d";%
{\ar "a";"b"};%
{\ar "a";"c"};%
{\ar "b";"d"};%
{\ar "c";"d"};%
\end{xy}
\end{center}
\end{tabular}\\

Similarly as above we can obtain a double graded bundle $\we^r\sT^* F_k$ which is a vector bundle over $F_k$, but generally a higher graded bundle with respect to the canonical projection $\we^r\sT^*F_k\to\we^r\bar{F}^{*}_{k}$. In particular, we see that the cotangent bundle $\sT^* E$ of a vector bundle $E$ is a double vector bundle with the second projection onto $E^*$, as is well-know.
\end{example}

\begin{warning}
Throughout this paper we will equip the cotangent bundles of graded bundle with  the k-phase lifted weight vector field, to ensure do not leave the category of $n$-tuple graded bundles.
\end{warning}

\begin{example}
Via iterating the examples of the tangent and cotangent bundles of a graded bundle we see that $\sT \sT F_{k}$, $\sT \sT^{*} F_{k},\sT^{*} \sT F_{k}$ and $\sT^{*}\sT^{*}M$ are triple graded bundles. The obvious similar statements hold for higher iterations of $\sT$ and $\sT^{*}$.
\end{example}

One can pass from an $n$-tuple graded bundle to an (n-r)-tuple graded bundle via the process of taking linear combinations of weight vector fields.  In particular one can construct the \emph{total weight vector field} as the sum of all the weight vector fields on an $n$-tuple graded bundle. However, note that the passage from an $n$-tuple graded bundle to a graded bundle via collapsing the weights is far from unique.

\begin{example}
The tangent bundle $\sT F_{k}$ of a graded bundle of degree $k$ can be considered as a graded bundle of degree $k+1$ via the total weight which is described by the total weight vector field
\begin{equation*}
\Delta_{\sT F_{k}} := \sum_{w} w y_{w}^{a}\frac{\partial}{\partial y_{w}^{a}} + \dot{x}_{1}^{A}\frac{\partial}{\partial \dot{x}_{1}^{A}} + \sum_{w} (w+1)\dot{y}^{a}_{w+1} \frac{\partial}{\partial \dot{y}_{w+1}^{a}}.
\end{equation*}
\end{example}

\begin{example}
The cotangent bundle $\sT^{*} F_{k}$ of a graded bundle of degree $k$ can be considered as a graded bundle of degree $k+1$ via the total weight which is described by the total weight vector field
\begin{equation*}
 \Delta_{\sT^{*}F_{k}} := \sum_{w} w y_{w}^{a}\frac{\partial}{\partial y_{w}^{a}} + (k+1) \pi_{A}^{k+1} \frac{\partial }{\partial \pi_{A}^{k+1}} + \sum_{w} (k-w+1) \pi_{a}^{k-w+1}\frac{\partial}{\partial \pi_{a}^{k-w+1}}.
 \end{equation*}
 It is this weight that Roytenberg \cite{Roytenberg:2001} employed for the case of a vector bundle  $F_{k} = E$ to obtain canonical examples of symplectic N-manifolds of degree 2.
\end{example}

\begin{remark}
We will choose to label  homogeneous local  coordinates of $n$-tuple graded bundles by the total weight, as we have already done so in the previous examples. This choice will be convenient for checking the consistency of various expression involving local coordinates.
\end{remark}

A general $n$-tuple graded bundle admits many locally trivial fibrations compatible with the graded structure. The theory is already richer for double graded bundles than the classical double vector bundles. These fibrations can be seen as a choice of a  graded bundle structure via taking a suitable linear combination of the weight vector fields.

To set some notation, if $\Delta$ is a weight vector field on $\mathcal{M}$, then we denote by $\mathcal{M}[\Delta \leq l]$ the base manifold of the locally trivial fibration defined by taking the weight $>l$ coordinates with respect to this complimentary  weight to be the fibre coordinates. We have a natural projection that we will denote as
 \begin{equation*}
\textnormal{p}^{\mathcal{M}}_{[\Delta \leq l]}: \mathcal{M} \rightarrow \mathcal{M}[\Delta \leq l].
\end{equation*}
The following is a simple but very useful observation.
\begin{proposition}\label{ntproj}
Let $\mathcal{M}$ be an $n$-tuple graded bundle and $(\Delta^{1}_{\mathcal{M}} , \cdots , \Delta^{n}_{\mathcal{M}})$ be the corresponding pairwise commuting weight vector fields.
Then the weight vector fields $(\Delta^{1}_{\mathcal{M}} , \cdots , \Delta^{n}_{\mathcal{M}})$ are $\textnormal{p}^{\mathcal{M}}_{[\Delta^1 \leq l]}$-projectable and their projections define a structure of an $n$-tuple graded bundle on $\mathcal{M}[\zD^1\le l]$. With respect to this structure, $\textnormal{p}^{\mathcal{M}}_{[\Delta \leq l]}: \mathcal{M} \rightarrow \mathcal{M}[\Delta^1 \leq l]$ is a morphism of $n$-tuple graded bundles.
\end{proposition}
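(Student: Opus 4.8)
The plan is to reduce everything to a computation in simultaneously homogeneous local coordinates, where the projection $\textnormal{p} := \textnormal{p}^{\mathcal{M}}_{[\Delta^1 \leq l]}$ becomes transparent, and then to promote the local statements to global ones using the homogeneity structures. By the local triviality of $n$-tuple graded bundles established in \cite{Grabowski2012} I may choose an atlas of charts with coordinates $(y)$ that are simultaneously homogeneous, $\underline{\w}(y) = (\w_1(y), \ldots, \w_n(y)) \in \mathbb{N}^n$. I split such a chart as $(u, v)$, where the $u$ are the coordinates with $\w_1 \leq l$ and the $v$ those with $\w_1 > l$. The base $\mathcal{M}[\Delta^1 \leq l]$ is, by construction, the degree-$\le l$ truncation of the graded bundle $(\mathcal{M}, \Delta^1_{\mathcal{M}})$, so that $\textnormal{p}$ is locally $(u,v) \mapsto u$. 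The structural input I would highlight first is that the $u$-coordinates transform among themselves: since all weights are non-negative, any monomial of $\w_1$-weight $\leq l$ can only involve factors of $\w_1$-weight $\leq l$, so the transition laws for the $u$ never call on the $v$. This is exactly what makes $\textnormal{p}$ a well-defined surjective submersion (it is the fibration \eqref{eqn:fibrations} for the weight $\Delta^1$) and shows the $u$ descend to coordinates on $\mathcal{M}[\Delta^1 \leq l]$ with unchanged multi-weights.

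Next I would establish projectability. In the chosen coordinates each weight vector field reads $\Delta^s_{\mathcal{M}} = \sum_u \w_s(u)\, u\, \partial_u + \sum_v \w_s(v)\, v\, \partial_v$, whence $\Delta^s_{\mathcal{M}}(u^i) = \w_s(u^i)\, u^i$ depends only on the $u$. Thus $\Delta^s_{\mathcal{M}}$ is $\textnormal{p}$-projectable for every $s$ (including $s=1$), with projection locally given by $\Delta^s_{\mathcal{M}[\Delta^1 \leq l]} = \sum_u \w_s(u)\, u\, \partial_u$; equivalently $\Delta^s_{\mathcal{M}}$ is $\textnormal{p}$-related to this field. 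This already delivers the projectability claim and, since being $\textnormal{p}$-related to the projected weight vector fields is precisely the condition for $\textnormal{p}$ to be a morphism, it simultaneously yields the final assertion once the target is known to be an $n$-tuple graded bundle.

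It then remains to see that the projected fields define an $n$-tuple graded bundle on $\mathcal{M}[\Delta^1 \leq l]$. They are manifestly of weight-vector-field form with non-negative integer weights $\w_s(u) \in \mathbb{N}$, and they pairwise commute, since $\textnormal{p}$-related images of the pairwise-commuting fields $\Delta^s_{\mathcal{M}}$ satisfy $[\Delta^s_{\mathcal{M}[\Delta^1 \leq l]}, \Delta^{s'}_{\mathcal{M}[\Delta^1 \leq l]}] = \textnormal{p}_*[\Delta^s_{\mathcal{M}}, \Delta^{s'}_{\mathcal{M}}] = 0$. The substantive point is h-completeness, which I would verify at the level of homogeneity structures. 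Writing $h^s$ for the homogeneity structure of $\Delta^s_{\mathcal{M}}$, one has $(h^s_t)^{*} y = t^{\w_s(y)} y$ on homogeneous coordinates; hence if $\textnormal{p}(m) = \textnormal{p}(m')$, i.e. $u(m) = u(m')$, then $u(h^s_t(m)) = t^{\w_s(u)} u(m) = t^{\w_s(u)} u(m') = u(h^s_t(m'))$, so $\textnormal{p} \circ h^s_t$ is constant on $\textnormal{p}$-fibres and $h^s$ descends to a smooth map $\bar h^s : \mathbb{R} \times \mathcal{M}[\Delta^1 \leq l] \to \mathcal{M}[\Delta^1 \leq l]$ (smoothness descending because $\textnormal{p}$ is a submersion). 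The monoid-action identities and pairwise commutativity of the $\bar h^s$ are inherited from those of the $h^s$ by surjectivity of $\textnormal{p}$, so each $\bar h^s$ is a genuine homogeneity structure integrating $\Delta^s_{\mathcal{M}[\Delta^1 \leq l]}$. This makes $\mathcal{M}[\Delta^1 \leq l]$ an $n$-tuple graded bundle, and the descent relation $\textnormal{p} \circ h^s_t = \bar h^s_t \circ \textnormal{p}$ (equivalently the $\textnormal{p}$-relatedness above) says exactly that $\textnormal{p}$ intertwines all the homogeneity structures, i.e. is a morphism of $n$-tuple graded bundles.

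I expect the main obstacle to be precisely the h-completeness of the projected fields for $s \neq 1$: projectability and commutativity are formal consequences of $\textnormal{p}$-relatedness, and h-completeness of $\Delta^1_{\mathcal{M}[\Delta^1 \leq l]}$ is the standard truncation statement, but for the transverse weights one must genuinely check that the $\mathbb{R}_{>0}$-action extends across $t=0$ to a monoid action — which is where the descent argument for $h^s$, resting on the non-negativity of weights and the scaling formula $(h^s_t)^* y = t^{\w_s(y)} y$, does the real work. The remaining care is bookkeeping: confirming that the local expressions glue, which is automatic once the $u$ are known to be pullbacks of global coordinates on $\mathcal{M}[\Delta^1 \leq l]$.
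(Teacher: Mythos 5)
Your proof is correct and follows essentially the same route as the paper: the paper's own argument is a one-line version of your descent step, observing that the $(\mathbb{R},\cdot)$-actions $h^{s}$ restrict to the subalgebra generated by $\Delta^{1}$-homogeneous functions of degree $\leq l$ precisely because the actions commute, and hence descend to $\mathcal{M}[\Delta^{1} \leq l]$. Your coordinate computation with the scaling formula $(h^{s}_{t})^{*}y = t^{\w_{s}(y)}y$, together with the checks of projectability, commutativity and the monoid identities, is just the explicit local-coordinate elaboration of that same mechanism, with the added (harmless) verification that the transition laws for the $\w_{1}$-degree $\leq l$ coordinates close among themselves.
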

\begin{proof}
It follows from the fact that the $(\R,\cdot)$-actions corresponding to the projections can be viewed as the restrictions of the original actions to the subalgebra generated by $\zD^1$-homogeneous functions of degree $\le l$. Such restrictions exist due to commutativity of the actions.
\end{proof}

\begin{example}
By employing the total weight we have $\sT E[\Delta_{\sT E} \leq 1]  \simeq E \times_{M} TM$ for any vector bundle $E$ considered as a graded bundle.
\end{example}

\begin{example}
Similarly, by employing the total weight we  have $\sT^{*} E[\Delta_{\sT^{*} E} \leq 1]  \simeq E \times_{M} E^{*}$ for any vector bundle $E$ considered as a graded bundle. In this way the \emph{Pontryagin bundle} or \emph{generalised tangent bundle} $\sT M \times_{M} \sT^{*}M$ can be constructed from $\sT \sT^{*}M \simeq \sT^{*}\sT^{*}M $ together with the corresponding projection.
\end{example}

In almost an identical way to the graded bundle case (cf. (\ref{eqn:submanifolds})), we can define certain submanifolds  of $\mathcal{M}$ by consistently setting particular coordinates to zero in a well defined manner. First, for  $\underline{l}$ and $\underline{m} \in \mathbb{N}^{n}$ we can define a partial ordering as $\underline{l} \preceq \underline{m} \Leftrightarrow \forall j \:\: l_{j} \leq m_{j}$. Then we can define

\begin{equation}
\mathcal{M}^{[\: \underline{m}\:  ]} :=  \left\{ \left. p \in \mathcal{M} \right| y = 0 \:\: \textnormal{if} \:\: \underline{\w}(y) \neq \underline{0} \: \: \textnormal{and} \:\: \underline{\w}(y) \preceq \underline{m} \right\}.
\end{equation}

As the changes of local coordinates must respect each of the weights independently and the fact that all the weights are non-negative  it is clear that  $\mathcal{M}^{[\underline{m} ]}$  are well defined $n$-tuple graded subbundles of $\mathcal{M}$.

\begin{example}\label{ex1}
It is easy to see that with respect to the standard bi-weight on $\sT F_{k}$ we have $\sT F_{k}^{[(0,1)]} = VF_{k}$, that is the vertical bundle with respect  to the projection $\tau: F_{k} \rightarrow M$. The weight vector fields on the vertical bundle are simply the appropriate restrictions of those on the tangent bundle.  Via passing to the total weight we see that $VF_{k}$ can be considered as a graded bundle of degree $k+1$. Let us remark that one can shift the first component of bi-weight to allow us to consider the vertical bundle as a double graded bundle of total degree $k$.  In terms of the weight vector fields this is simply the redefinition $\Delta_{VF}^{1} \mapsto \Delta_{VF}^{1}- \Delta_{VF}^{2}$ and $\Delta_{VF}^{2} \mapsto \Delta_{VF}^{2}$ and then considering the redefined total weight.  We will use this assignment of weights for the vertical bundle in the next section.
\end{example}

For further details of graded (super)manifolds see for example \cite{Cattaneo:2011,Grabowski2012,Grabowski2013,Roytenberg:2001,Voronov:2001qf}. For details of the classical theory of double vector bundles the reader can consult Mackenzie \cite{Mackenzie2005} and/or Konieczna \& Urba\'{n}ski \cite{Konieczna:1999}.  The original definition of a double vector bundle by Pradines \cite{Pradines:1974} has rather complicated compatibility conditions between the vector bundle structures. Grabowski \& Rotkiewicz \cite{Grabowski2009} showed how double vector bundles can be understood in terms of scalar multiplication only and also provided a proper understanding of this in terms of commuting Euler vector fields.

\section{Linearisations and linear duals}\label{sec:associating double}
\subsection{The linearisation  of a  graded bundle}
It is well know that there is a canonical embedding  $\sT^{k}M \subset \sT(\sT^{k-1}M)$ via ``holonomic vectors".  In this subsection we wish to mimic this embedding  for general graded bundles as closely as possible. To pass from a graded bundle to a double graded bundle we employ a systematic linearisation procedure. Obviously, $\sT(\sT^{k-1}M) \rightarrow \sT^{k-1}M$ is a vector bundle and it is this feature that we wish to generalise to arbitrary graded bundles.

Consider $F_{k}$ equipped with local coordinates $(x^{A}, y_{w}^{a}, z^{i}_{k})$, where the weights are assigned as $\w(x)=0$, $\w(y_{w}) = w$ $(1\leq w < k)$ and $\w(z) =k$. From now on it will be convenient to single out the highest weight coordinates as well as the zero weight coordinates. From the previous section we can construct the vertical bundle as $VF_{k} = \sT F_{k}^{[(0,1)]}$ which is really the starting place of the linearisation  procedure. Let us on $VF_{k}$ employ homogeneous local coordinates with the induced bi-weight

\begin{equation*}
 (\underbrace{x^{A}}_{(0,0)}, \hspace{5pt} \underbrace{y^{a}_{w}}_{(w,0)},\hspace{5pt}  \underbrace{z^{i}_{k}}_{(k,0)}; \hspace{5pt}  \underbrace{\dot{y}^{b}_{w}}_{(w-1,1)}, \hspace{5pt} \underbrace{\dot{z}_{k}^{j}}_{(k-1,1)}),
 \end{equation*}

\noindent where we have shifted the first component of the weight of the linear fibre coordinates so that the vertical bundle itself a graded bundle of degree $k$ by employing the total weight.

\begin{definition}
The \emph{linearisation of a graded bundle} $F_{k}$ is the double graded bundle defined as
\begin{equation*}
D(F_{k}) := VF_{k}[\Delta_{VF_k}^{1} \leq k-1]\,,
\end{equation*}
so that we have the natural projection $\textnormal{p}^{VF_{k}}_{D(F_{k})} : VF_{k} \rightarrow D(F_{k})$.
\end{definition}
Of course, the double graded bundle structure comes from that on $VF_k$ (cf. Proposition \ref{ntproj} and Example \ref{ex1}), so the first structure is of degree $k-1$ and the second is  of degree 1 (linear).

In simpler terms, in any natural homogeneous system on coordinates on the vertical bundle $VF_{k}$ one projects out the highest weight coordinates on $F_{k}$ to obtain $D(F_{k})$. One can easily check in local coordinates that doing so is well defined. Thus on $D(F_{k})$ we have local homogeneous coordinates

\begin{equation*}
 (\underbrace{x^{A}}_{(0,0)}, \hspace{5pt} \underbrace{y^{a}_{w}}_{(w,0)}; \hspace{5pt}  \underbrace{\dot{y}^{b}_{w}}_{(w-1,1)}, \hspace{5pt} \underbrace{\dot{z}_{k}^{i}}_{(k-1,1)}).
 \end{equation*}
and their associated weight vector fields

\begin{eqnarray}
 \zD^1=\Delta^{1}_{D(F_{k})}&=&\sum_{1\leq w < k} w \: y_{w}^{a} \frac{\partial}{\partial y_{w}^{a}} + \sum_{1\leq w < k} (w-1) \: \dot{y}_{w}^{a} \frac{\partial}{\partial \dot{y}_{w}^{a}} + (k-1) \: \dot{z}_{k}^{i}\frac{\partial }{\partial \dot{z}_{k}^{i}}\\
\nonumber \zD^2=\Delta^{2}_{D(F_{k})} &=& \sum_{1\leq w < k}  \dot{y}_{w}^{a} \frac{\partial}{\partial \dot{y}_{w}^{a}} +  \dot{z}_{k}^{i}\frac{\partial }{\partial \dot{z}_{k}^{i}}.
\end{eqnarray}
\noindent Note that with this assignment of the weights the linearisation of a graded bundle of degree $k$ is itself a graded bundle of degree $k$ when passing from the bi-weight to the total weight associated with
\be
\zD=\Delta_{D(F_{k})}=\Delta^{1}_{D(F_{k})}+\Delta^{2}_{D(F_{k})} =\sum_{1\leq w < k} w \: y_{w}^{a} \frac{\partial}{\partial y_{w}^{a}} + \sum_{1\leq w < k} w \: \dot{y}_{w}^{a} \frac{\partial}{\partial \dot{y}_{w}^{a}} + k \: \dot{z}_{k}^{i}\frac{\partial }{\partial \dot{z}_{k}^{i}}\,.
\ee

For completeness, the  admissible changes of local coordinates for the dotted coordinates are:

\begin{eqnarray}\label{eqn:dottedtranslaws}
\dot{y}^{a'}_{w} &=& \dot{y}_{w}^{b} T_{b}^{\:\: a'}(x) + \sum_{w_{1}+ \cdots+  w_{n} =w} \frac{1}{(n-1)!}\:\: \dot{y}^{b_{1}}_{w_{1}} y^{b_{2}}_{w_{2}} \cdots y^{b_{n}}_{w_{n}} T_{b_{n} \cdots b_{1}}^{\:\:\:\:\:\:\:\: a'}(x), \\
\label{eqn:dottedtranslaws1} \dot{z}_{k}^{i'} &=& \dot{z}_{k}^{j} T_{j}^{\:\: i'}(x) + \sum_{w_{1}+ \cdots +  w_{n} =k} \frac{1}{(n-1)!}\:\: \dot{y}^{b_{1}}_{w_{1}} y^{b_{2}}_{w_{2}} \cdots y^{b_{n}}_{w_{n}} T_{b_{n} \cdots b_{1}}^{\:\:\:\:\:\:\:\: i'}(x),
\end{eqnarray}
where $1 \leq w< k$ and they are obtained by differentiation of the undotted coordinates.

It is important to note that the linearisation has the structure of a vector bundle $D(F_{k}) \rightarrow F_{k-1}$, hence the nomenclature ``linearisation".  The vector bundle structure is clear from the construction and transparent from the above admissible changes of local coordinates which are linear in dotted coordinates. Via inspection, we see that have (partially) polarized the polynomial coordinate changes (\ref{eqn:translaws}) to obtain a linear structure. That is via differentiation we have  increased the number of homogeneous coordinates in a consistent way in order to build a vector bundle structure.

\begin{example}
Considering the total space of a vector bundle $E$ as a graded bundle of degree one, we get $D(E) \simeq E$. The homogeneous coordinates on the vertical bundle are $(x,y, \dot{y})$. Then via the linearisation procedure we see that $D(E)$ comes with local coordinates $(x,\dot{y})$ and so can be identified with $E$.
\end{example}

\begin{example}
If $F_{2} = \sT^{2}M$, then $D(F_{2}) \simeq \sT(\sT M)$.
\end{example}

\subsection{Functorial properties}
Let us note that the construction of the linearisation has functorial properties.

\begin{theorem}\label{theom:functorial linearisation}
Linearisation is a functor from the category of graded bundles to the category of double graded bundles.
\end{theorem}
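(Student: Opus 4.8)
The plan is to realise the linearisation $D$ as a composite of three elementary operations, each of which I would verify to be functorial, so that $D$ inherits functoriality. Writing $D(F_{k}) = VF_{k}[\Delta^{1}_{VF_{k}} \leq k-1]$, the three operations are: the tangent functor $\sT$; passage to the vertical subbundle $V$ (the kernel of $\sT\tau$, equipped with the weight shift of Example \ref{ex1}); and the weight reduction $\mathcal{M} \mapsto \mathcal{M}[\Delta^{1} \leq k-1]$ governed by Proposition \ref{ntproj}. For a morphism $\phi \colon F \to G$ of graded bundles I would construct $D(\phi)$ by applying these three operations in turn to $\phi$, and then check the two functor axioms.

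First I would apply the tangent functor. Since $\sT$ is functorial and morphisms of graded bundles intertwine the weight vector fields, $\sT\phi \colon \sT F \to \sT G$ is a morphism of double graded bundles for the bi-weight (lifted weight, Euler weight). Next I would restrict to the vertical bundles. As $\phi$ is a morphism of graded bundles it sends the weight-zero part to the weight-zero part, hence covers an induced base map $\phi_{0}$ on $M$, giving $\sT\tau_{G}\circ \sT\phi = \sT\phi_{0}\circ\sT\tau_{F}$; consequently $\sT\phi$ carries $VF = \ker\sT\tau_{F}$ into $VG = \ker\sT\tau_{G}$, and after the weight relabelling of Example \ref{ex1} (which is merely a fixed linear combination of the two commuting weights) the restriction $V\phi$ is a morphism of double graded bundles. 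Finally I would pass through the reduction projections $p_{F}\colon VF \to D(F)$ and $p_{G}\colon VG \to D(G)$: the assertion to be proved is that $p_{G}\circ V\phi$ factors, uniquely, through $p_{F}$, and this factorisation is $D(\phi)$.

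The crux — and the step I expect to be the main obstacle — is this last descent. Here I would use that the image of $p_{F}^{*}$ is exactly the subalgebra of $\mathcal{A}(VF)$ generated by the coordinates of first-weight $\leq k-1$, that is, those functions not involving the projected-out coordinate $z^{i}_{k}$ of top first-weight $k$. Because $V\phi$ respects the first weight, the pullback of a surviving coordinate of $D(G)$, having first-weight $\leq k-1$, is a first-weight $\leq k-1$ function on $VF$, and therefore — all weights being non-negative, so that no monomial of first-weight $\leq k-1$ can contain a coordinate of first-weight $k$ — it cannot involve $z^{i}_{k}$; concretely this is read off from the transformation laws (\ref{eqn:dottedtranslaws})--(\ref{eqn:dottedtranslaws1}), whose right-hand sides are assembled only from the lower-weight coordinates. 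Proposition \ref{ntproj} supplies the complementary fact that each reduction is a genuine graded-bundle fibration, so that $p_{F}$ and $p_{G}$ are the correct surjections to descend along. Once the descent is secured, the functor axioms are immediate: $D(\mathrm{id}) = \mathrm{id}$ and $D(\psi\circ\phi) = D(\psi)\circ D(\phi)$ follow from the corresponding identities for $\sT$, for the vertical restriction, and for the reduction, together with the uniqueness of a map descending through the surjections $p_{F}$ and $p_{G}$.
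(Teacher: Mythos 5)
Your proposal is correct and follows essentially the same route as the paper's own proof: the paper likewise constructs $D\phi$ by lifting $\phi$ to the vertical bundles ($V\phi$) and then projecting out the top-weight undotted coordinates through the canonical surjections $VF\to D(F)$, checking the functor axioms via the chain rule. Your weight-counting justification of the descent (a homogeneous function of first-weight $\leq k-1$ cannot involve the weight-$k$ coordinates, since all weights are non-negative) is exactly what underlies the paper's ``project out $z$'' step, which it leaves to inspection --- noting only that, like the paper's coordinate formulas, your argument tacitly treats source and target as having the same degree $k$.
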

\begin{proof}
First we need to prove that given some morphism $\phi: F_{k} \rightarrow F_{\bar{k}}'$ between graded bundles then canonically there is a unique induced morphism of double graded bundles $D\phi : D(F_{k}) \rightarrow D(F_{\bar{k}}')$. To do this we will employ homogeneous local coordinates $(x^{A}, y^{a}_{w}, z^{i}_{k})$ on $F_{k}$ and $(\bar{x}^{\alpha}, \bar{y}^{\mu}_{\bar{w}}, \bar{z}^{p}_{\bar{k}})$  on $F_{\bar{k}}'$, where $1 \leq w < k$ and $1 \leq \bar{w} < \bar{k}$. Then the components of the morphism $\phi$ are
\begin{equation*}
\phi^{*}\bar{x}^{\alpha} = \phi^{\alpha}(x), \hspace{25pt} \phi^{*}\bar{y}_{\bar{w}}^{\mu} = \phi^{\mu}_{\bar{w}}(x,y,z), \hspace{25pt}\phi^{*}\bar{z}^{p}_{\bar{k}} = \phi^{p}_{\bar{k}}(x,y,z).
\end{equation*}
\noindent Note the the components of $\phi$ that depend on $y$ and $z$ are necessarily polynomial in these coordinates and are of the necessary weight. Next we lift $\phi$ to $V\phi : VF_{k} \rightarrow VF_{\bar{k}}'$. In obvious notation, this lifted map acts on the linear fibre coordinates as
\begin{eqnarray}
\nonumber (V\phi)^{*}\dot{\bar{y}}^{\mu}_{\bar{w}} &=& \dot{y}_{w}^{a} \frac{\partial \phi^{\mu}_{\bar{w}}}{\partial y^{a}_{w}}(x,y,z) + \dot{z}_{k}^{i} \frac{\partial \phi^{\mu}_{\bar{w}}}{\partial z^{i}_{k}}(x,y,z), \\
\nonumber (V\phi)^{*}\dot{\bar{z}}^{p}_{\bar{k}} &=& \dot{y}_{w}^{a} \frac{\partial \phi^{p}_{\bar{k}}}{\partial y^{a}_{w}}(x,y,z) + \dot{z}_{k}^{i} \frac{\partial \phi^{p}_{\bar{k}}}{\partial z^{i}_{k}}(x,y,z).
\end{eqnarray}
\noindent Just by inspection it is clear that we have a morphism in the category of double graded bundles. Passing to the respective linearisations is achieved via the canonical projections $VF_{k} \rightarrow D(F_{k})$ and $VF_{\bar{k}}' \rightarrow D(F_{\bar{k}}')$.

\begin{tabular}{p{5cm} p{10cm}}
\begin{center}
\leavevmode
\begin{xy}
(0,20)*+{D(F_{k})}="a"; (25,20)*+{D(F^{\prime}_{\bar{k}})}="b";%
(0,0)*+{F_{k-1}}="c"; (25,0)*+{F^{\prime}_{\bar{k}-1}}="d";%
{\ar "a";"b"}?*!/_3mm/{D\phi};%
{\ar "a";"c"}; {\ar "b";"d"};%
{\ar "c";"d"}?*!/^3mm/{\psi};%
\end{xy}
\end{center}
&
\vspace{15pt}
In the diagram the morphism $\psi:F_{k-1} \rightarrow F'_{\bar{k}-1}$ is the appropriate  restriction of $\phi$ induced by the obvious projections.
\end{tabular}

\noindent  Then in homogeneous local coordinates we have
 \begin{eqnarray}
 \nonumber \psi^{*}\bar{x}^{\alpha} &=& \phi^{\alpha}(x),\\
  \nonumber \psi^{*}\bar{y}_{\bar{w}}^{\mu} &=& \left. \phi^{\mu}_{\bar{w}}(x,y,z) \right|_{z},\\
  \nonumber (D\phi)^{*}\dot{\bar{y}}^{\mu}_{\bar{w}} &=& \dot{y}_{w}^{a} \left.\frac{\partial \phi^{\mu}_{\bar{w}}}{\partial y^{a}_{w}}(x,y,z)\right|_{z} + \dot{z}_{k}^{i} \left.\frac{\partial \phi^{\mu}_{\bar{w}}}{\partial z^{i}_{k}}(x,y,z)\right|_{z}, \\
\nonumber (D\phi)^{*}\dot{\bar{z}}^{p}_{\bar{k}} &=& \dot{y}_{w}^{a}\left. \frac{\partial \phi^{p}_{\bar{k}}}{\partial y^{a}_{w}}(x,y,z)\right|_{z} + \dot{z}_{k}^{i} \left.\frac{\partial \phi^{p}_{\bar{k}}}{\partial z^{i}_{k}}(x,y,z)\right|_{z},
\end{eqnarray}
where ``$|_{z}$" signifies that we project out the $z$ coordinate. Locally, we can view this as setting $z=0$.  Via inspection it is clear that we have a well defined morphism of double graded bundles. We now need to verify the expected properties of the linearisation.

\begin{itemize}
\item Clearly, we have that $D(\Id_{F}) = \Id_{D(F)}$.  Thus, $D$  respects the identities.
\item  The property $D(\phi \circ \chi) =  D(\phi) \circ D(\chi)$  where $\phi : F \rightarrow F'$ and $\chi : F' \rightarrow F''$ follows from the chain rule.
\end{itemize}
Thus $D$  is a functor.
\end{proof}

 Another important property of the linearisation describes the following theorem.
\begin{theorem}\label{theom:embedding}
There exists a canonical weight preserving  embedding of the graded bundle $F_{k}$
\begin{equation*}
\iota_{F_k} : F_{k} \hookrightarrow D(F_{k}),
\end{equation*}
 where we employ the total weight on $D(F_{k})$, given by the image of the weight vector field $\Delta_{F} \in \Vect(F)$ considered as a geometric section of $VF_{k}$. That is we have the following commutative diagram

\begin{center}
\leavevmode
\begin{xy}
(0,20)*+{F_{k}}="a"; (25,20)*+{V F_{k}}="b";%
(25,0)*+{D(F_{k})}="d";%
{\ar "a";"b"}?*!/_3mm/{\Delta_{F_{k}}};%
{\ar "a";"d"}?*!/^3mm/{\iota_{F_{k}}};%
{\ar "b";"d"}?*!/_7mm/{\textnormal{p}_{D(F_{k})}^{V F_{k}}};%
\end{xy}
\end{center}

\noindent In natural local coordinates the nontrivial part of the embedding  is given by
\begin{equation*}
\iota_{F_k}^{*}(\dot{y}^{a}_{w}) = w \: y^{a}_{w}, \hspace{25pt} \iota_{F_k}^{*}(\dot{z}_{k}^{i}) = k\: z_{k}^{i}.
\end{equation*}
\end{theorem}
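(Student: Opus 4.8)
The plan is to define $\iota_{F_k}$ exactly as the triangle in the statement prescribes, namely as the composition
$$\iota_{F_k} = \textnormal{p}^{VF_{k}}_{D(F_{k})} \circ \Delta_{F_k},$$
and then simply to read off its properties. The first thing to record is that this composition makes sense. The weight vector field $\Delta_F = \sum_{1\le w<k} w\,y^a_w\,\partial/\partial y^a_w + k\,z^i_k\,\partial/\partial z^i_k$ has no $\partial/\partial x^A$ component, so, as already observed in Section~\ref{sec:graded bundles}, it lies in $\mathcal{V}_{1}$ and is vertical with respect to $\tau:F_k\to M$. Hence it is a genuine section $\Delta_{F_k}:F_k\to VF_k$ of the vertical bundle, and post-composing with the canonical projection $\textnormal{p}^{VF_{k}}_{D(F_{k})}$ of the linearisation produces a map $\iota_{F_k}:F_k\to D(F_k)$. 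Since both $\Delta_F$ (the Euler/weight vector field of the graded structure) and the projection are canonically attached to $F_k$, the map $\iota_{F_k}$ is canonical; no consistency across charts needs to be verified by hand, it is automatic from the intrinsic construction.

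Next I would compute the map in the homogeneous coordinates $(x^A,y^a_w,z^i_k)$ on $F_k$ and $(x^A,y^a_w;\dot{y}^b_w,\dot{z}^i_k)$ on $D(F_k)$. Viewing $\Delta_F$ as a section of $VF_k$, the fibre coordinates $\dot{y},\dot{z}$ record its components along $\partial/\partial y,\partial/\partial z$, so $\Delta_{F_k}^*(\dot{y}^a_w)=w\,y^a_w$ and $\Delta_{F_k}^*(\dot{z}^i_k)=k\,z^i_k$. The projection $\textnormal{p}^{VF_{k}}_{D(F_{k})}$ forgets only the weight-$k$ coordinate $z^i_k$ and leaves the dotted coordinates untouched, which reproduces the stated formulas $\iota_{F_k}^*(\dot{y}^a_w)=w\,y^a_w$, $\iota_{F_k}^*(\dot{z}^i_k)=k\,z^i_k$, together with $\iota_{F_k}^*(x^A)=x^A$ and $\iota_{F_k}^*(y^a_w)=y^a_w$. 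From this explicit form the embedding claim follows at once: the image is the closed submanifold $\{\dot{y}^a_w=w\,y^a_w\}$ of $D(F_k)$, on which $(x^A,y^a_w,\dot{z}^i_k)$ are free, and $\iota_{F_k}$ is a diffeomorphism onto it with smooth inverse $(x,y;\dot{y},\dot{z})\mapsto(x,y,\dot{z}/k)$ (using $k\ge 1$). Equivalently, $\mathrm{d}x^A,\mathrm{d}y^a_w$ and $\mathrm{d}\dot{z}^i_k=k\,\mathrm{d}z^i_k$ recover every base direction, so $\iota_{F_k}$ is an injective immersion onto a closed submanifold, hence an embedding.

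Finally I would check that $\iota_{F_k}$ preserves the total weight $\Delta=\Delta^1_{D(F_k)}+\Delta^2_{D(F_k)}$ on $D(F_k)$, under which $x^A,y^a_w,\dot{y}^b_w,\dot{z}^i_k$ carry weights $0,w,w,k$ respectively. Comparing with the weights $0,w,k$ of $x^A,y^a_w,z^i_k$ on $F_k$, each of the four pullback relations sends a homogeneous generator to a function of the same total weight; by the characterisation of graded-bundle morphisms as maps respecting weight (equivalently, intertwining the weight vector fields / homogeneity structures) recalled in Section~\ref{sec:graded bundles}, this shows that $\iota_{F_k}$ intertwines $\Delta_F$ with the total weight vector field on $D(F_k)$, which is exactly the commutativity asserted in the triangle. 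The only point requiring care is the intrinsic, coordinate-free reading of "$\Delta_F$ as a section of $VF_k$" and of its image in $D(F_k)$; there is no analytic obstacle, since the verticality of $\Delta_F$ and the canonicity of the linearisation projection guarantee well-definedness, and the remaining computations are routine. As a sanity check, in the degree-one case $D(E)\simeq E$ and $\iota_E$ reduces to the identity, consistent with $\iota_E^*(\dot{z}^i_1)=z^i_1$.
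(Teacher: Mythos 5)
Your proposal is correct and follows essentially the same route as the paper's proof: both define $\iota_{F_k}$ as the composition $\textnormal{p}^{VF_{k}}_{D(F_{k})} \circ \Delta_{F_k}$, using the verticality of the weight vector field to view it as a section of $VF_k$, and then verify the embedding and weight-preservation via the homogeneous coordinate formulas. You merely spell out details the paper leaves implicit, in particular the explicit description of the image as $\{\dot{y}^a_w = w\,y^a_w\}$ with smooth inverse $\dot{z}^i_k \mapsto \dot{z}^i_k/k$, which confirms that projecting from $VF_k$ to $D(F_k)$ loses no information on the holonomic image since $\dot{z}^i_k = k\,z^i_k$ recovers the discarded coordinate $z^i_k$.
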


\begin{proof}
 All maps are covariantly defined. It is clear via local coordinates that $\iota_{F_k}$ is a morphism of graded manifolds. Now, as $\Delta_{F_k}$ is a section of the vertical bundle, its image defines an embedding of $F_{k}$ in $VF_{k}$. The projection to $D(F_{k})$ then defines an embedding of $F_{k}$ in $VF_{k}$.
\end{proof}
\noindent
Elements of $\iota_{F_k}(F_{k})$ we will refer to as \emph{holonomic vectors} in $D(F_{k})$.

\begin{example}
It is worth considering a low degree example very explicitly. Consider $F_{3}$ equipped with natural coordinates $(x^{A}, y^{a}, z^{i}, w^{\kappa})$ of weight $0,1,2,3$ respectively. (We have a slight change in notation here, but that hopefully will not cause confusion.) The admissible transformation laws here are of the form;
\begin{align*}
& x^{A'}= x^{A'}(x), y^{a'} = y^{b}T_{b}^{a'}(x), & \\
& z^{i'} = z^{j}T_{j}^{\:\:i'}(x)+ \frac{1}{2!} y^{a}y^{b}T_{ba}^{i'}(x),& \\
& w^{\kappa'} = w^{\mu}T_{\mu}^{\:\: \kappa'}(x) + z^{i}y^{a}T_{ai}^{\:\:\: \kappa'}(x) + \frac{1}{3!}y^{a}y^{b}y^{c}T_{cba}^{\:\:\:\: \kappa'}(x).&
\end{align*}

\noindent On $D(F_{3})$ we have the local coordinates $(x^{A}, y^{a}, z^{i}, \dot{y}^{b}, \dot{z}^{j}, \dot{w}^{\kappa})$. The admissible transformation laws for the extra coordinates  are obtained by differentiation;
\begin{align*}
&\dot{y}^{a} = \dot{y}^{b}T_{b}^{a'}(x),&\\
&\dot{z}^{i'} = \dot{z}^{j}T_{j}^{\:\:i'}(x)+ \dot{y}^{a}y^{b}T_{ba}^{i'}(x),&\\
&\dot{w}^{\kappa'} = \dot{w}^{\mu}T_{\mu}^{\:\: \kappa'}(x) + \dot{z}^{i}y^{a}T_{ai}^{\:\:\: \kappa'}(x) + z^{i}\dot{y}^{a}T_{ai}^{\:\:\: \kappa'}(x)+  \frac{1}{2!}\dot{y}^{a}y^{b}y^{c}T_{cba}^{\:\:\:\: \kappa'}(x).&
\end{align*}

\noindent Now set $\dot{y}=y$, which clearly is consistent with the transformation laws. The transformation law for $\dot{z}$ becomes
\begin{equation*}
\dot{z}^{i'} = \dot{z}^{j}T_{j}^{\:\:i'}(x)+ y^{a}y^{b}T_{ba}^{i'}(x),
\end{equation*}
which is ``twice" that of  the transformation law for $z$. Thus we can set $\dot{z} = 2z$. Now using both the pervious identifications the transformation law for $\dot{w}$ becomes
\begin{equation*}
\dot{w}^{\kappa'} = \dot{w}^{\mu}T_{\mu}^{\:\: \kappa'}(x) + 3z^{i}y^{a}T_{ai}^{\:\:\: \kappa'}(x)  + \frac{1}{2!}y^{a}y^{b}y^{c}T_{cba}^{\:\:\:\: \kappa'}(x),
\end{equation*}
which is a ``thrice"  that of the transformation law for $w$. Thus we can set $\dot{w} = 3 w$. Then rather explicitly we see  that the numerical prefactor is just the weight of the respective coordinate and so this corresponds to the weight vector field on $F_{3}$. Thus we see that the embedding is indeed defined as the image of the weight vector field on $F_{3}$ considered as a geometric section.
\end{example}

\begin{example}
Let us consider $F_{2} = \sT E$, where $E$ is the total space of a vector bundle over $M$. Note here that we have the natural structure of a double vector bundle and so by taking the total weight we have a degree 2 graded bundle. Let us equip $\sT E$ with natural local coordinates $(x^{A}, y^{a}, \dot{x}^{B}, \dot{y}^{b})$. The admissible changes of local coordinates are of the form\\

\begin{tabular}{p{10cm} p{5cm}}
\begin{tabular}{l}
$x^{A'} = x^{A'}(x)$,  \hspace{15pt} $y^{a'} = y^{b}T_{b}^{\:\: a'}$,\\
$\dot{x}^{A'} = \dot{x}^{B}\frac{\partial x^{A'}}{\partial x^{B}}$, \hspace{15pt} $\dot{y}^{a'} = \dot{y}^{b}T_{b}^{\:\: a'} + y^{b}\dot{x}^{B} \frac{\partial T_{b}^{\:\: a'}}{\partial x^{B}}$.
\end{tabular}\\
\vspace{15pt}
\noindent which give the  double vector bundle structure
&
\vspace{-50pt}
\begin{center}
\leavevmode
\begin{xy}
(10,50)*+{\sT E}="a";%
(0,40)*+{E}="b"; (20,40)*+{\sT M}="c";%
(10,30)*+{ M}="d";%
{\ar "a";"b"};%
{\ar "a";"c"};%
{\ar "b";"d"};%
{\ar "c";"d"};%
\end{xy}
\end{center}
\end{tabular}

Now we construct $D( \sT E) \subset \sT(\sT E)$, which we equip with natural local coordinates $(x^{A}, y^{a}, \dot{x}^{B}, \delta y^{b}, \delta \dot{x}^{C}, \delta \dot{y}^{c})$. The admissible changes of local coordinates for the ``extra" coordinates are of the form

\vspace{10pt}
\begin{tabular}{p{10cm} p{5cm}}
\begin{tabular}{l}
$\delta y^{a'} = \delta y^{b}T_{b}^{\:\: a'}$,\hspace{15pt}$\delta \dot{x}^{A'} = \delta \dot{x}^{B}\frac{\partial x^{A'}}{\partial x^{B}}$, \\
$\delta \dot{y}^{a'} = \delta \dot{y}^{b}T_{b}^{\:\: a'} + \delta y^{b} \dot{x}^{B}\frac{\partial T_{b}^{\:\: a'}}{\partial x^{B}} +  y^{b} \delta \dot{x}^{B}\frac{\partial T_{b}^{\:\: a'}}{\partial x^{B}} $
\end{tabular}\\
\vspace{15pt}
\noindent which gives the  double vector bundle
&
\vspace{-50pt}
\begin{center}
\leavevmode
\begin{xy}
(10,50)*+{D(\sT E)}="a";%
(0,40)*+{E\times_{M} \sT M}="b"; (20,40)*+{E\times_{M} \sT M}="c";%
(10,30)*+{ M}="d";%
{\ar "a";"b"};%
{\ar "a";"c"};%
{\ar "b";"d"};%
{\ar "c";"d"};%
\end{xy}
\end{center}
\end{tabular}

We see that $D(\sT E) \subset \sT(\sT E)$ is a double vector bundle that closely resembles $\sT E$, but it is symmetric in its legs. From the coordinate transformations it is clear that the total weight vector field on $\sT E$, which is given by $\Delta_{\sT E} = y^{a}\frac{\partial }{\partial y^{a}} + \dot{x}^{A} \frac{\partial }{\partial x^{A}} + 2\: \dot{y}^{a} \frac{\partial}{\partial \dot{y}^{a}}$ produces the embedding $\sT E \hookrightarrow D(\sT E)$.
\end{example}

\subsection{The linearisation diagram}
A quick inspection shows that
\bea\label{linprop}
D(F_{k})[\zD^2\le 0]&\simeq&F_{k-1}\,,\\
D(F_{k})[\zD^1\le k-2]&\simeq&D(F_{k-1})\,,\nn\\
D(F_{k})[\zD\le k-1]&\simeq&VF_{k-1}\,,\nn
\eea
and that we get the following commutative diagram of graded bundle morphisms
\be\label{shortdiagram}
\xymatrix@R15mm @C15mm{V F_{k}\ar[d]_{d_k}\ar[rr]^{V\zt^{k-1}}&&V F_{k-1}\ar[d]^{d_{k-1}}\\
D(F_k)\ar[rr]^{D\zt^{k-1}} \ar[d]_{p_{k-1}}\ar[rru]^{v_{k-1}} && D(F_{k-1})\ar[d]^{{p_{k-2}}} \\
F_{k-1}\ar[rr]^{{\zt^{k-2}}}\ar[rru]^{\iota_{{k-1}}} && F_{k-2} }
\ee
Here,
$$d_k=\textnormal{p}^{VF_{k}}_{D(F_{k})}\,,\quad p_{k-1}=\textnormal{p}^{D(F_{k})}_{[\zD^2\le 0]}\,,\quad v_{k-1}=\textnormal{p}^{D(F_{k})}_{[\zD\le k-2]}\,,\quad\zi_{k-1}=\zi_{F_{k-1}}\,.
$$
Of course, with the use of the tower (\ref{eqn:fibrations}) of fibrations, this can be prolonged by induction, so we get the following.
\begin{theorem} Any graded bundle $F_k$ of degree $k$ gives rise to the commutative diagram of
graded bundle morphisms
\be\label{lindiagram}
\xymatrix@R15mm 
{V F_{k}\ar[d]_{d_k}\ar[rr]^{V\zt^{k-1}}&&V F_{k-1}\ar[d]^{d_{k-1}}\ar[rr]^{V\zt^{k-2}}&&V F_{k-2}\ar[d]^{d_{k-2}}&\cdots&VF_2\ar[d]^{d_2}\ar[rr]^{V\zt^{1}}&&V F_{1}\ar[d]^{d_1}\\
D(F_k)\ar[rr]^{D\zt^{k-1}} \ar[d]_{p_{k-1}}\ar[rru]^{v_{k-1}} && D(F_{k-1})\ar[rr]^{D\zt^{k-2}} \ar[d]_{p_{k-2}}\ar[rru]^{v_{k-2}}&&
D(F_{k-2})\ar[d]^{p_{k-3}}&\cdots& D(F_2)\ar[d]^{p_1}\ar[rr]^{D\zt^{1}} \ar[rru]^{v_1}&&
D(F_{1})\ar[d]^{p_0} \\
F_{k-1}\ar[rr]^{{\zt^{k-2}}}\ar[rru]^{\iota_{{k-1}}} && F_{k-2}\ar[rru]^{\iota_{{k-2}}}\ar[rr]^{{\zt^{k-3}}} && F_{k-3}& \cdots& F_1\ar[rru]^{\iota_{1}}\ar[rr]^{{\zt^{0}}} && M }
\ee
\end{theorem}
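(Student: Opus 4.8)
The plan is to prove the statement by induction along the tower of fibrations (\ref{eqn:fibrations}), taking the already-assembled short diagram (\ref{shortdiagram}) as the elementary building block and gluing one copy of it at each level $j=k,k-1,\dots,2$. Concretely, for each structure projection $\tau^{j-1}:F_j\to F_{j-1}$ I would apply the two functors $V$ (the vertical bundle) and $D$ (the linearisation, Theorem \ref{theom:functorial linearisation}) to obtain the horizontal arrows $V\tau^{j-1}$ and $D\tau^{j-1}$, and read off the vertical and diagonal arrows $d_j$, $p_{j-1}$, $v_{j-1}$, $\iota_{j-1}$ from the canonical projections and from the embedding of Theorem \ref{theom:embedding}. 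The whole of (\ref{lindiagram}) is then the concatenation of these short diagrams, so the content of the theorem reduces to two things: that the arrows of horizontally adjacent copies agree on their shared edges, and that each elementary square and triangle commutes.

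For the first point the key input is functoriality. Since $D$ and $V$ are functors and the composites $\tau^{i}\circ\tau^{i+1}$ are again structure projections of (\ref{eqn:fibrations}), one has $D(\tau^{i}\circ\tau^{i+1})=D\tau^{i}\circ D\tau^{i+1}$ and likewise for $V$; hence the top and middle rows are genuine composable chains of morphisms of ($n$-tuple) graded bundles, and the arrow emanating from any node is the same whether it is regarded as belonging to the left-hand or the right-hand copy of the short diagram. Moreover, Proposition \ref{ntproj} guarantees that each truncation $d_j=\textnormal{p}^{VF_j}_{D(F_j)}$, $p_{j-1}=\textnormal{p}^{D(F_j)}_{[\Delta^2\le 0]}$ and $v_{j-1}$ is a morphism of graded bundles, so every arrow in (\ref{lindiagram}) is automatically of the required type.

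For the second point the commutativity of the individual cells is checked once at a generic level $j$ and then transported along the induction. The top square $d_{j-1}\circ V\tau^{j-1}=D\tau^{j-1}\circ d_j$ expresses naturality of the canonical projection $d$; the middle square $p_{j-2}\circ D\tau^{j-1}=\tau^{j-2}\circ p_{j-1}$ is exactly the assertion, contained in Theorem \ref{theom:functorial linearisation}, that $D\tau^{j-1}$ is a vector-bundle morphism covering the induced base map. The two families of triangles, $d_{j-1}\circ v_{j-1}=D\tau^{j-1}$ and $p_{j-2}\circ\iota_{j-1}=\tau^{j-2}$, follow from the identifications (\ref{linprop}) together with the naturality of the embedding $\iota$ (Theorem \ref{theom:embedding}). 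All of these I would verify in homogeneous local coordinates, using the explicit formulas $\iota_{F}^{*}(\dot y^a_w)=w\,y^a_w$, $\iota_F^{*}(\dot z^i_k)=k\,z^i_k$ for the embedding, the transformation laws (\ref{eqn:dottedtranslaws})--(\ref{eqn:dottedtranslaws1}) for the linearised coordinates, and the rule that $D\tau^{j-1}$ is obtained from $V\tau^{j-1}$ by setting the top-weight coordinates to zero (the ``$|_z$'' operation of Theorem \ref{theom:functorial linearisation}).

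The step I expect to be the main obstacle is the bookkeeping around the diagonal maps $v_{j-1}$ and the compatibility of the three distinct truncation operations in (\ref{linprop})---cutting the second weight ($\Delta^2\le 0$), the first weight ($\Delta^1\le j-2$) and the total weight ($\Delta\le j-1$)---with the differentiation that produces the $V$- and $D$-arrows. One must check that projecting out the highest-weight fibre coordinates and then differentiating yields the same morphism as differentiating and then projecting. Because differentiation raises the linear (second) weight by one while the truncations act on the first or the total weight, these operations do commute; but verifying this cleanly at every level, and confirming that the resulting identification $D(F_j)[\Delta\le j-1]\simeq VF_{j-1}$ is precisely the arrow $v_{j-1}$ that closes the upper triangle, is where the care is needed. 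Once the generic level is settled, the induction merely stacks the verified short diagrams, and the global commutativity of (\ref{lindiagram}) follows.
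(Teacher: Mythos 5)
Your proposal is correct and takes essentially the same route as the paper: the paper establishes the short diagram (\ref{shortdiagram}) by coordinate inspection of the identifications (\ref{linprop}) and then simply states that (\ref{lindiagram}) is obtained by prolonging it by induction along the tower (\ref{eqn:fibrations}), which is precisely your gluing of short diagrams level by level. Your explicit verification of the squares and triangles (naturality of $d$, functoriality of $D$ and $V$, and the compatibility of the truncations with differentiation) merely spells out what the paper dismisses as ``a quick inspection.''
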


 \subsection{$\mathcal{GL}$-bundles}
\noindent Mimicking the properties of the linearisation of a graded bundle consider an  abstract double graded bundle with one structure of degree $k-1$ and the other of degree $1$ (vector bundle).
\medskip
\begin{definition} A double graded bundle
$(D_{k}, \Delta^{1}, \Delta^{2})$ such that
$\zD^1$ is of degree $k-1$ and $\zD^2$ is of degree 1, i.e. $\Delta^{2}$ is an Euler vector field, will be referred to as a  \emph{graded--linear bundle} of degree $k$, or for short a $\mathcal{GL}$-bundle.
\end{definition}
\begin{remark}
Of course, one can define a $\mathcal{GL}$-bundle in terms of a pair of commuting homogeneity structures of degree $k-1$ and $1$. From this perspective, it is very clear that a  $\mathcal{GL}$-bundle is a vector bundle in the category of graded bundles or evidently vice versa.
\end{remark}
\noindent It follows that $D_{k}$ is of total weight $k$ with respect to $\Delta := \Delta^{1} + \Delta^{2}$ and a vector bundle structure
\begin{equation*}
\textnormal{p}^{D_{k}}_{B_{k-1}}: D_{k} \rightarrow B_{k-1}.
\end{equation*}
with respect to the projection onto the submanifold $B_{k-1} := D_{k}[\Delta^{2} \leq 0]$ which inherits a graded bundle structure of degree $k-1$.

\begin{tabular}{p{5cm} p{10cm}}
\begin{center}
\leavevmode
\begin{xy}
(0,20)*+{D_{k}}="a";  (30,20)*+{A_{k-1}}="b";%
(0,0)*+{B_{k-1}}="c"; %
{\ar "a";"b"}?*!/_4mm/{\textnormal{p}_{A_{k-1}}^{D_{k}}};%
{\ar "a";"c"}?*!/^7mm/{\textnormal{p}_{B_{k-1}}^{D_{k}}};%
{\ar "b";"c"}?*!/_3mm/{\textnormal{p}};%
\end{xy}
\end{center}
&
\vspace{20pt}
To set some further notation we define
$A_{k-1}:= D_{k}[\Delta \leq k-1]$. Note
that we have also
a vector bundle structure

$\textnormal{p}: A_{k-1} \rightarrow B_{k-1}$.
\end{tabular}

\begin{example}
The reader can easily verify that both $\sT F_{k-1}$ and $\sT^{*}F_{k-1}$ are $\mathcal{GL}$-bundles of total degree $k$, where $F_{k-1}$ is a graded bundle of degree $k-1$. Similarly, it is easy to see that a double vector bundle is a $\mathcal{GL}$-bundle of degree $2$.
\end{example}
\noindent
We require an internal characterisation of the linearisation of a graded bundle for later applications.
 It is clear from (\ref{linprop}) that a necessary condition for $D_k$ to be a linearisation of a graded bundle of degree $k$ is $A_{k-1}\simeq VB_{k-1}$.
More precisely, there exists a isomorphism $I: VB_{k-1} \rightarrow A_{k-1}$ such that the following diagram commutes:

\be\label{necessary}
\leavevmode
\begin{xy}
(0,20)*+{D_{k}}="a";  (30,20)*+{A_{k-1}}="b";%
(0,0)*+{B_{k-1}}="c"; (30,0)*+{VB_{k-1}}="d"; %
{\ar "a";"b"}?*!/_4mm/{\textnormal{p}_{A_{k-1}}^{D_{k}}};%
{\ar "a";"c"}?*!/^7mm/{\textnormal{p}_{B_{k-1}}^{D_{k}}};%
{\ar "b";"c"}?*!/_3mm/{\textnormal{p}};%
{\ar "d";"b"}?*!/^3mm/{I};%
{\ar "b";"c"}?*!/_3mm/{\textnormal{p}};%
{\ar "c";"d"}?*!/^3mm/{\Delta_{B_{k-1}}};%
\end{xy}
\ee
The above condition implies that we can always employ homogeneous local coordinates on $D_{k}$ of the form
\begin{equation*}
(\underbrace{x^{A}}_{(0,0)}, ~ \underbrace{y_{w}^{a}}_{(w,0)}, ~  \underbrace{\dot{y}_{w}^{b}}_{(w-1,1)}, ~ \underbrace{\bar{z}^{i}_{k}}_{(k-1,1)}).
\end{equation*}
The coordinate transformations for $x$ and $y$ are as standard for a degree $k-1$ graded bundle and the transformation law for $\dot{y}$ are of the form (\ref{eqn:dottedtranslaws}), that follows from the fact that we deal with the vertical bundle of $B_{k-1}$. We know that from the general structure of a double graded bundle that we must have transformation laws for $\bar{z}$ of the form
\begin{equation}\label{cc}
\bar{z}_{k}^{i'} = \bar{z}_{k}^{j}T_{j}^{\:\:i'}(x) + \sum_{\begin{tiny} w_{1} + \cdots w_{n} =k\end{tiny}} \frac{1}{(n-1)!}\dot{y}_{w_{1}}^{b_{1}}y_{w_{2}}^{b_{2}} \cdots y_{w_{l}}^{b_{n}}T_{b_{1}b_{2} \cdots b_{n}}^{\:\:\:\:\: \:\:\:\:\:\:\:\: i'}(x),
\end{equation}
where the combinatorial prefactor in the second term is chosen for later convenience.
They are of the form (\ref{eqn:dottedtranslaws1}), i.e. are obtained by differentiation of the undotted coordinates if and only if the tensors in the second term are symmetric in lower indices.
A $\mathcal{GL}$-bundle $D_k$ equipped with the isomorphism $I$ as in (\ref{necessary}) and an atlas such that the tensors $T_{b_{1}b_{2} \cdots b_{n}}^{\:\:\:\:\: \:\:\:\:\:\:\:\: i'}$
appearing in the coordinate change (\ref{cc}) are symmetric in lower indices will be called
\emph{symmetric}. For a symmetric $\mathcal{GL}$-bundle $D_k$ of degree $k$, the undotted coordinate changes define a graded bundle $F_k$ of degree $k$.
It is easy to see that $F_{k}$ is the submanifold in $D_{k}$ locally defined by $\dot{y}^{a}_{w} = w y_{w}^{a}$.  Thus, we can use coordinates $(x^{A}, y_{w}^{a}, \dot{y}_{w}^{b} , z_{k}^{i})$ on $F_{k}$, where $z^{i}_{k} = \frac{1}{k} \bar{z}^{i}_{k}$. It follows that $D_{k} \simeq D(F_{k})$ and that $\iota_{F_k}: F_{k} \hookrightarrow D(F_{k})$ corresponds to the inclusion $F_{k} \hookrightarrow D_{k}$. Thus, we get the following.

\begin{theorem}\label{theom:linearisation}
A $\mathcal{GL}$-bundle $D_{k}$ is the linearisation of a graded bundle $F_{k}$ if and only if
it is symmetric.
In this case, the graded bundle $F_{k}$ can be identified with the graded subbundle of ``holonomic vectors"  viz
\begin{equation*}
F_{k} \simeq (\textnormal{p}^{D_{k}}_{A_{k-1}})^{-1}\left( I ( \Delta_{B_{k-1}}(B_{k-1})) \right),
\end{equation*}
where $\Delta_{B_{k-1}} \in \Vect(B_{k-1})$ is the total weight vector field on $B_{k-1}$. The canonical embedding $\iota_{F_k} : F_{k} \hookrightarrow D_{k}$ is identified with the natural inclusion.
 \end{theorem}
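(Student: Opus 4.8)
The plan is to establish the two implications separately and then transcribe the resulting coordinate picture into the invariant statement, leaning on the coordinate computations already assembled before the theorem. The ``only if'' direction is the quicker one: if $D_k\simeq D(F_k)$ then, by the fibration identities (\ref{linprop}), its degree-$(k-1)$ base satisfies $B_{k-1}\simeq F_{k-1}$ and $A_{k-1}\simeq VB_{k-1}$, which provides the isomorphism $I$ of diagram (\ref{necessary}); moreover the changes of the top coordinate $\bar z$ are by construction of the form (\ref{eqn:dottedtranslaws1}), i.e. differentials of (\ref{eqn:translaws}), so the tensors $T^{\,i'}_{b_1\cdots b_n}$ are the structure tensors of $F_k$ and are symmetric in their lower indices. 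Hence $D_k$ is symmetric.

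For the ``if'' direction I would take a symmetric $\mathcal{GL}$-bundle $D_k$ in the distinguished atlas, with coordinates $(x^A,y^a_w,\dot y^b_w,\bar z^i_k)$ and the isomorphism $I$, and define $F_k$ as the locus $\{\dot y^a_w=w\,y^a_w\}$. The first point to settle is that this locus is coordinate independent. Substituting $\dot y^{b_1}_{w_1}=w_1 y^{b_1}_{w_1}$ into (\ref{eqn:dottedtranslaws}) and invoking symmetry of the lower indices, the decisive combinatorial identity is that for a symmetric tensor $\sum w_1\,y^{b_1}_{w_1}\cdots y^{b_n}_{w_n}T=\tfrac{w}{n}\sum y^{b_1}_{w_1}\cdots y^{b_n}_{w_n}T$, because after relabelling every slot contributes equally and $w_1+\cdots+w_n=w$; this converts the combinatorial weight $\tfrac{1}{(n-1)!}$ into $\tfrac{w}{n!}$. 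Consequently the restriction of (\ref{eqn:dottedtranslaws}) to the locus equals $w$ times (\ref{eqn:translaws}), so $\dot y^{a'}_w=w\,y^{a'}_w$ is preserved and the locus is a genuine graded submanifold. The same identity applied to (\ref{cc}) shows that $z^i_k:=\tfrac1k\bar z^i_k$ obeys exactly (\ref{eqn:translaws}); together with the standard changes of $x$ and $y_w$ this makes $F_k=\{\dot y=wy\}$ a graded bundle of degree $k$.

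It then remains to identify $D(F_k)$ with $D_k$. Applying the linearisation functor (Theorem~\ref{theom:functorial linearisation}) to $F_k$ amounts to differentiating its coordinate changes; the derivative of the law just obtained for $z$ is of the form (\ref{eqn:dottedtranslaws1}), and symmetry of the $T$'s collects the $n$ terms produced by the product rule into the single sum in (\ref{cc}). Matching $\dot z\leftrightarrow\bar z$ while keeping the $x,y,\dot y$ coordinates fixed yields a weight-preserving diffeomorphism $D(F_k)\simeq D_k$ intertwining $\Delta^1$ and $\Delta^2$, under which the holonomic embedding $\iota_{F_k}$ of Theorem~\ref{theom:embedding} (image of $\Delta_{F_k}$, i.e. $\dot y=wy$, $\dot z=kz$) becomes the inclusion $F_k\hookrightarrow D_k$. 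Finally one reads the locus invariantly: $\dot y^a_w=w y^a_w$ is the image of the total weight vector field $\Delta_{B_{k-1}}$ regarded as a section of $VB_{k-1}$, carried by $I$ into $A_{k-1}$, while the unconstrained $\bar z$ are exactly the fibres of $\textnormal{p}^{D_k}_{A_{k-1}}$; hence $F_k\simeq(\textnormal{p}^{D_k}_{A_{k-1}})^{-1}\!\bigl(I(\Delta_{B_{k-1}}(B_{k-1}))\bigr)$.

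The step I expect to be the main obstacle is the symmetrisation identity: it is the unique place where the symmetry hypothesis enters, and it simultaneously governs the well-definedness of the holonomic locus, the graded-bundle form of the induced transformation of $z$, and the agreement of the differentiated coordinate changes with (\ref{cc}). The remaining ingredients---functoriality, the fibration identities (\ref{linprop}), and the embedding of Theorem~\ref{theom:embedding}---can be cited directly, so the proof reduces essentially to making this combinatorial bookkeeping precise and checking that $I$ intertwines the relevant structures.
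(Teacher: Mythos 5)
Your proposal is correct and follows essentially the same route as the paper: both directions are handled in the distinguished atlas, with the symmetry of the tensors in (\ref{cc}) allowing the dotted transition laws to be ``integrated'' to undotted laws of the form (\ref{eqn:translaws}) defining a degree-$k$ graded bundle realised inside $D_k$ as the holonomic locus $\dot y^a_w = w\,y^a_w$ with $z^i_k=\tfrac{1}{k}\bar z^i_k$, exactly as in the discussion immediately preceding the theorem. The only difference is presentational: you spell out the symmetrisation identity converting the prefactor $\tfrac{1}{(n-1)!}$ into $\tfrac{w}{n!}$, which the paper leaves implicit behind ``it is easy to see''.
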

\begin{remark}
One can introduce a coordinate free definition of \emph{category of symmetric $\mathcal{GL}$-bundles of degree $k$} and to show its equivalence with the category of graded manifolds of degree $k$. Doing the linearisation inductively, one can also replace the category of symmetric $\mathcal{GL}$-bundles of degree $k$ with a certain category of \emph{symmetric $k$-tuple vector bundles}. For $k=2$, this yields a result similar to that of Jotz Lean \cite{JotzLean:2015}. A detailed study of such question is the subject of a separate paper \cite{Bruce:2015}.
\end{remark}

\begin{example}
The above theorem mimics the well known case of $\sT^{2}M$ being the affine bundle of holonomic vectors in $\sT (\sT M)$.
\end{example}

\begin{example}
As a counterexample, consider $\sT E$ where $E$  is a vector bundle. The reader can quickly convince themselves  that  $E\times_{M} \sT M \neq VE$ unless $E = \sT M$ and thus the tangent bundle of a general vector bundle is \emph{not} the linearisation of a  graded bundle of degree 2.
 \end{example}
\begin{example}\label{reduction}
Similarly to graded linear bundles we can consider \emph{graded principal bundles}, being double structures consisting of a $G$-principal bundle $P_k\to N$ equipped additionally with a compatible structure of a graded bundle of degree $k$. The latter is encoded by a weight vector field $\nabla$ (or the corresponding
action $t\mapsto h_t$ of multiplicative reals) and  compatibility clearly means that the $G$-action and the $\R$-action commute, i.e. the weight vector field $\nabla$ is $G$-invariant. This implies, in particular, that $N$ carries itself a (reduced) graded bundle structure, and that the action of $G$ on $P_k$ projects to its actions on $P_{k-1},\dots,P_0=M$.

If the $G$-action on $P_0$ is still free, we say that the graded principal bundle is \emph{basic}.
An example of a basic graded principal bundle with structure group $G$ is $\sT^kG$ on which $g\in G$ acts by $\sT^k g$ (see section \ref{group-red}).

\end{example}
A direct consequence of theorem \ref{theom:functorial linearisation} is that any graded $G$-principal structure on $P_k$ induces a (bi-)graded $G$-principal structure on its linearization, i.e. $D(P_k)\to D(P_k)/G$ is a principal bundle. One can easily prove the following.
\begin{theorem}\label{linearisation-reduction}
If $P_k$ is a basic graded principal bundle with the structure group $G$, then
$$D(P_k/G)\simeq D(P_k)/G\,,$$
i.e. the linearisation of the reduced bundle $N=P_k/G$ is canonically isomorphic with the reduction  $D(P_k)/G$ of the linearisation $D(P_k)$.
\end{theorem}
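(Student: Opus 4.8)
The plan is to deduce the isomorphism from the functoriality of linearisation (Theorem~\ref{theom:functorial linearisation}) and then to check that the natural candidate map is an isomorphism by a fibrewise argument. First I would observe that the quotient map $\pi\colon P_k\to N=P_k/G$ is a morphism of graded bundles, so applying $D$ produces a morphism of double graded bundles $D\pi\colon D(P_k)\to D(N)$. Since each $g\in G$ acts on $P_k$ by a graded bundle automorphism $\ell_g$, functoriality yields a $G$-action $g\mapsto D\ell_g$ on $D(P_k)$; this is exactly the bi-graded principal structure noted just before the theorem, so $D(P_k)\to D(P_k)/G$ is a principal bundle. Because $\pi\circ\ell_g=\pi$ for every $g$, functoriality gives $D\pi\circ D\ell_g=D\pi$, so $D\pi$ is $G$-invariant and descends to a morphism of double graded bundles $\overline{D\pi}\colon D(P_k)/G\to D(N)$. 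The entire content of the theorem is then that $\overline{D\pi}$ is an isomorphism.

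Next I would pin down the two vector bundle structures. Since the $G$-action commutes with the weight vector field it projects to free (and proper) actions on each $P_j$ in the tower (\ref{eqn:fibrations}), and as $\pi$ intertwines the weight vector fields it descends to the quotients, giving $N_j\simeq P_j/G$; in particular the degree-one structure of $D(N)$ has base $N_{k-1}\simeq P_{k-1}/G$. On the other side the vector bundle $D(P_k)\to P_{k-1}$ is $G$-equivariant (the action $D\ell_g$ covers the projected action of $\ell_g$ on $P_{k-1}$), so $D(P_k)/G\to P_{k-1}/G$ is again a vector bundle over the same base, and $\overline{D\pi}$ covers the identity on $N_{k-1}\simeq P_{k-1}/G$. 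It therefore suffices to prove that $\overline{D\pi}$ is a fibrewise linear isomorphism, for a vector bundle morphism covering the identity that is fibrewise bijective is automatically a diffeomorphism, and compatibility with the remaining weight vector field $\Delta^1$ is inherited from the fact that $D\pi$ is a double graded bundle morphism.

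The step I expect to carry the real weight is showing that $\pi$ restricts to a graded isomorphism on the fibres of the vertical structure, so that its differential is a fibrewise isomorphism of the linearisations. The \emph{basic} hypothesis is what makes this work: as $G$ acts freely on $P_0=M$ and preserves weights, the orbit map collapses only weight-zero base directions, so the fibre of $P_k\to P_0$ over a point $p$ maps under $\pi$ isomorphically onto the fibre of $N\to N_0$ over the image of $p$ (that fibre being a set of orbit representatives for the $G$-action assembling the corresponding fibre of $N$). Since the linearisation replaces the fibre coordinates $y^a_w,z^i_k$ by their vertical lifts $\dot y^a_w,\dot z^i_k$ transforming by the Jacobian as in (\ref{eqn:dottedtranslaws})--(\ref{eqn:dottedtranslaws1}), this fibrewise isomorphism differentiates to a linear isomorphism between the fibres of $D(P_k)\to P_{k-1}$ and those of $D(N)\to N_{k-1}$; hence the $G$-orbits are exactly the fibres of $D\pi$ and $\overline{D\pi}$ is the desired isomorphism. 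To make this rigorous I would choose a local trivialisation of $P_0\to M/G$ and lift homogeneous coordinates so that the $G$-directions sit among the weight-zero coordinates $x^A$; then $\pi$ simply forgets the $G$-coordinates while leaving the $(y^a_w,z^i_k)$-structure and its differentiated counterpart untouched, making the isomorphism manifest. The only genuine obstacle is verifying that such weight-compatible trivialisations exist, which follows from the weight-invariance of the action together with the basic hypothesis; the remaining verifications are the routine coordinate computations already implicit in (\ref{eqn:dottedtranslaws}).
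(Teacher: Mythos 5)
Your proposal is correct and follows exactly the route the paper intends: the paper itself offers no written proof (it only notes, just before the statement, that by Theorem~\ref{theom:functorial linearisation} the $G$-action lifts to the linearisation, making $D(P_k)\to D(P_k)/G$ a principal bundle, and then asserts the result is easy), and your argument --- descend $D\pi$ to $\overline{D\pi}\colon D(P_k)/G\to D(P_k/G)$ and verify it is a fibrewise linear isomorphism of vector bundles over $N_{k-1}\simeq P_{k-1}/G$ --- is the natural completion of that hint. You also correctly isolate the one place where the hypothesis carries weight: freeness of the $G$-action on $P_0$ is precisely what makes $\pi$ restrict to graded isomorphisms between the fibres of $P_k\to P_0$ and those of $N\to N_0$ (equivalently, it gives an equivariant identification of $P_k$ with the pullback of $N$ along $P_0\to N_0$), which differentiates to the required isomorphism on the fibres of the linearisations.
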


\subsection{The linear dual of a graded bundle}
The double graded  bundle $D(F_{k})$ is  a vector bundle over $F_{k-1}$ and so one can thus employ standard linear notions such as the dual.

\begin{definition}
The \emph{linear dual of a graded bundle} $F_{k}$ is the dual of the vector bundle $D(F_{k}) \rightarrow F_{k-1}$, and we will denote this $D^{*}(F_{k})$.
\end{definition}

\begin{proposition}
\begin{equation*}
D^{*}(F_{k}) \simeq \sT^{*}F_{k}[\Delta^{1}_{\sT^{*}F_{k}} \leq k-1],
\end{equation*}
\noindent which gives the canonical projection $\sT^{*}F_{k} \rightarrow D^{*}(F_{k})$.
\end{proposition}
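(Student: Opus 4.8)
The plan is to prove the identification in natural homogeneous coordinates over the common base $F_{k-1}$, matching the fibre coordinates through the tautological pairing. First I would equip $\sT^{*}F_{k}$ with the $k$-phase lifted coordinates, splitting the base coordinates of $F_{k}$ as $(x^{A},y^{a}_{w},z^{i}_{k})$ with $1\leq w<k$; the induced momenta are then $\pi^{k+1}_{B}$ (conjugate to $x^{B}$, bi-weight $(k,1)$), $\pi^{k-w+1}_{a}$ (conjugate to $y^{a}_{w}$, bi-weight $(k-w,1)$) and $\pi^{1}_{i}$ (conjugate to $z^{i}_{k}$, bi-weight $(0,1)$). Forming $\sT^{*}F_{k}[\Delta^{1}_{\sT^{*}F_{k}}\leq k-1]$ amounts to discarding exactly the coordinates of first weight $\geq k$, namely $z^{i}_{k}$ and $\pi^{k+1}_{B}$; by Proposition \ref{ntproj} both weight vector fields descend, so the result is a double graded bundle whose $\Delta^{2}\leq 0$ base is $F_{k-1}$ and which therefore carries a vector bundle structure $\sT^{*}F_{k}[\Delta^{1}_{\sT^{*}F_{k}}\leq k-1]\to F_{k-1}$ with fibre coordinates $\pi^{k-w+1}_{a}$ ($1\leq w<k$) and $\pi^{1}_{i}$.

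Next I would recognise these surviving momenta as the fibre coordinates of $D^{*}(F_{k})\to F_{k-1}$. The fibre coordinates of $D(F_{k})$ over $F_{k-1}$ are the $\dot y^{b}_{w}$ (bi-weight $(w-1,1)$) and $\dot z^{i}_{k}$ (bi-weight $(k-1,1)$), the first weights being those of $\Delta^{1}_{D(F_{k})}$, which differs from the vertical-lift weight by one Euler unit (Example \ref{ex1}). Under the tautological pairing between $\sT^{*}F_{k}$ and $\sT F_{k}$ the momentum $\pi^{k-w+1}_{a}$ is dual to $\dot y^{a}_{w}$, $\pi^{1}_{i}$ is dual to $\dot z^{i}_{k}$, and $\pi^{k+1}_{B}$ is dual to $\dot x^{B}$. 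The natural non-negative grading on $D^{*}(F_{k})$ is obtained by shifting the contragredient $\Delta^{1}$-action by $k-1$, giving the coordinate dual to $\dot y^{a}_{w}$ first weight $(k-1)-(w-1)=k-w$ and the coordinate dual to $\dot z^{i}_{k}$ first weight $(k-1)-(k-1)=0$; these are precisely the first weights of $\pi^{k-w+1}_{a}$ and $\pi^{1}_{i}$, so the two gradings coincide and we read off the identification $\pi^{k-w+1}_{a}\leftrightarrow(\dot y^{a}_{w})^{*}$, $\pi^{1}_{i}\leftrightarrow(\dot z^{i}_{k})^{*}$.

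The geometric reason behind this bookkeeping, which I would use to organise the transformation-law check, is that the two reductions defining $\sT^{*}F_{k}[\Delta^{1}\leq k-1]$ are dual to the two operations defining $D(F_{k})$. Dropping $\pi^{k+1}_{B}$ passes from $\sT^{*}F_{k}$ to $(VF_{k})^{*}$: the annihilator of $VF_{k}=\ker(\sT\tau)$ inside $\sT^{*}F_{k}$ is exactly spanned by the $dx^{A}$ directions, i.e. by $\pi^{k+1}_{B}$, so the quotient $\sT^{*}F_{k}\to (VF_{k})^{*}$ is dual to the inclusion $VF_{k}\hookrightarrow \sT F_{k}$. Dropping the base coordinate $z^{i}_{k}$ then passes from $F_{k}$ to $F_{k-1}$, matching the base change in $D(F_{k})=VF_{k}[\Delta^{1}_{VF_{k}}\leq k-1]$; since the transition functions of $\dot y^{b}_{w},\dot z^{i}_{k}$ involve only $x^{A}$ and the $y^{a}_{w}$ with $w<k$ (no $z$, as one sees from (\ref{eqn:dottedtranslaws}) and (\ref{eqn:dottedtranslaws1})), this base reduction is well defined on both the bundle and its dual. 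Consequently the transformation law of $(\pi^{k-w+1}_{a},\pi^{1}_{i})$ obtained by restricting the inverse-transpose cotangent transformation is exactly the inverse-transpose of the $(\dot y^{b}_{w},\dot z^{i}_{k})$-transformation, which identifies $\sT^{*}F_{k}[\Delta^{1}_{\sT^{*}F_{k}}\leq k-1]$ with $D^{*}(F_{k})$ as vector bundles over $F_{k-1}$; the asserted canonical projection $\sT^{*}F_{k}\to D^{*}(F_{k})$ is then simply $\textnormal{p}^{\sT^{*}F_{k}}_{[\Delta^{1}\leq k-1]}$.

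The main obstacle I anticipate is the weight bookkeeping across the two gradings: one must confirm that the single $k$-phase lift fixed on $\sT^{*}F_{k}$ restricts, after the reduction, to exactly the phase shift that renders the contragredient $\Delta^{1}$-action on $D^{*}(F_{k})$ non-negative and h-complete, accounting for the one Euler-unit shift already built into $\Delta^{1}_{D(F_{k})}$ (so that the cotangent lift $k$ and the dual shift $k-1$ agree up to that unit). The device that makes this automatic is the constancy of the first weight $k-1$ along every surviving pairing term, namely $\pi^{k-w+1}_{a}\dot y^{a}_{w}$ of first weight $(k-w)+(w-1)=k-1$ and $\pi^{1}_{i}\dot z^{i}_{k}$ of first weight $0+(k-1)=k-1$, the only term of first weight $k$ being $\pi^{k+1}_{B}\dot x^{B}$, which is precisely the one eliminated by the restriction to $VF_{k}$.
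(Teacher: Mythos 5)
Your proposal is correct and takes essentially the same route as the paper: the paper's proof simply declares the result a ``standard exercise in finding the form of the dual of the projection $\textnormal{p}^{VF_{k}}_{D(F_{k})} : VF_{k} \rightarrow D(F_{k})$ for local coordinates'' and leaves the details to the reader, and your two-step dualisation (quotienting $\sT^{*}F_{k}$ by the annihilator of $VF_{k}$, i.e.\ dropping $\pi^{k+1}_{B}$, then performing the base reduction dropping $z^{i}_{k}$, justified by the $z$-independence of the laws (\ref{eqn:dottedtranslaws})--(\ref{eqn:dottedtranslaws1})) is precisely that exercise carried out. Your weight bookkeeping, in particular the observation that every surviving pairing term $\pi^{k-w+1}_{a}\dot{y}^{a}_{w}$ and $\pi^{1}_{i}\dot{z}^{i}_{k}$ has constant first weight $k-1$ while the eliminated term $\pi^{k+1}_{B}\dot{x}^{B}$ has first weight $k$, matches the paper's phase-lift conventions and reproduces exactly the homogeneous coordinates and transformation laws the paper displays immediately after the proposition.
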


\begin{proof}
The proof is a standard exercise  in finding the form of the dual of the projection $\textnormal{p}^{VF_{k}}_{D(F_{k})} : VF_{k} \rightarrow D(F_{k})$ for local coordinates. Details are left to the reader.
\end{proof}
\noindent
In simpler terms, we can employ homogeneous coordinates inherited from  the cotangent bundle as
  \begin{equation*}
 (\underbrace{x^{A}}_{(0,0)} , ~ \underbrace{y_{w}^{a}}_{(w,0)}; ~ \underbrace{ \pi_{b}^{k-w+1}}_{{(k-w, 1)}} , \underbrace{\pi_{i}^{1}}_{(0,1)}),
 \end{equation*}

\noindent which have admissible changes of local (fibre) coordinates of the form

\begin{eqnarray}
\nonumber \pi_{a'}^{k-w+1} &=&  T_{a'}^{\:\: b}\pi_{b}^{k-w+1}  \:\:    {-}    \sum_{   \tiny\begin{array}{l}  w_{1} + \cdots\\ + w_{l}  = w'+w\end{array}} \frac{1}{l!}\:\: y_{w_{1}}^{b_{1}}\cdots y_{w_{l}}^{b_{l}}T_{a'}^{\:\: d}T_{d \: b_{l} \cdots b_{1} }^{\:\:\: \:\:\:\:\:\:\:\:\:c'}T_{c'}^{\:\: e}\pi_{e}^{k-w'+1}\\
\nonumber &{}& -\sum_{   \tiny\begin{array}{l}  w_{1} + \cdots\\ + w_{l} = k\end{array}} \frac{1}{l!}\:\: y_{w_{1}}^{b_{1}}\cdots y_{w_{l}}^{b_{l}}T_{a'}^{\:\: d}T_{d \: b_{l} \cdots b_{1} }^{\:\:\: \:\:\:\:\:\:\:\:\:i'}T_{i'}^{\:\: j}\pi^{1}_{j},\\
 \pi_{i'}^{1} &=& T_{i'}^{\:\: j}\pi_{j}^{1}.
\end{eqnarray}

Note that by passing to total weight the linear dual of a graded bundle of degree $k$ is itself a graded bundle of degree $k$. By inspection of the local coordinates and their transformation rules it is clear that $D^{*}(F_{k})$ is a $\mathcal{GL}$-bundle.

\begin{example}
If $F_{2} := \sT^{2}M$, then  $D(\sT^{2}M) \simeq \sT(\sT M)$. The linear dual here is $D^{*}(\sT^{2}M) \simeq \sT^{*}(\sT M)$  as expected from the theory of double vector bundles.
\end{example}

As standard we have the duality between $D(F_{k})$ and $D^{*}(F_{k})$ as vector bundles over $F_{k-1}$, which can be understood geometrically via $\sT^{*}D(F_{k}) \simeq \sT^{*}D^{*}(F_{k})$. More importantly, the natural pairing of the (linear)  fibre coordinates of $D(F_{k})$ and $D^{*}(F_{k})$ given by $\sum \pi^{k-w+1}_{a}\dot{y}^{a}_{w} + \pi_{i}^{1}\dot{z}^{i}_{k}$  induces a pairing of the fibre coordinates of $F_{k}\rightarrow M $ and $D^{*}(F_{k}) \rightarrow F_{k-1}$ via the canonical embedding  $\iota_{F_k} : F_{k} \longhookrightarrow D(F_{k})$.

\begin{proposition} There is a natural `pairing'
\be
\delta_{F_{k}}: D^{*}(F_{k}) \times_{F_{k-1}} F_{k} \longrightarrow  \mathbb{R}[k]
\ee
given in local coordinates by
$$\delta_{F_{k}}^{*}(t)  = \sum w \: \pi^{k-w+1}_{a} y^{a}_{w} + k \: \pi_{i}^{1}z_{k}^{i}\,,
$$
\noindent where we assign weight $k$ to the global coordinate $t$ on $\mathbb{R}[k]$.
\end{proposition}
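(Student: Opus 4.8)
The plan is to obtain $\delta_{F_k}$ as the composition of the tautological duality pairing of the vector bundle $D(F_k)\to F_{k-1}$ with its dual, precomposed with the canonical embedding $\iota_{F_k}$ of Theorem~\ref{theom:embedding}. Since $D^*(F_k)$ is \emph{defined} as the dual of $D(F_k)\to F_{k-1}$, there is a canonical evaluation map
\begin{equation*}
\textnormal{ev}:\ D^*(F_k)\times_{F_{k-1}} D(F_k)\longrightarrow \mathbb{R},\qquad \textnormal{ev}=\sum_{w} \pi^{k-w+1}_a\,\dot{y}^a_w + \pi^1_i\,\dot{z}^i_k,
\end{equation*}
which is exactly the natural pairing recalled just above the statement. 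As $\iota_{F_k}$ covers the identity on $F_{k-1}$, it induces $\textnormal{id}\times_{F_{k-1}}\iota_{F_k}$, and I would simply set $\delta_{F_k}:=\textnormal{ev}\circ(\textnormal{id}\times_{F_{k-1}}\iota_{F_k})$. Both factors are manifestly coordinate-free, so $\delta_{F_k}$ is well defined and natural; in particular no direct check against the (rather involved) transformation rules for the dual fibre coordinates is required.

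The coordinate expression is then immediate from Theorem~\ref{theom:embedding}, which gives $\iota_{F_k}^*(\dot{y}^a_w)=w\,y^a_w$ and $\iota_{F_k}^*(\dot{z}^i_k)=k\,z^i_k$. Substituting these into $\textnormal{ev}$ yields
\begin{equation*}
\delta_{F_k}^*(t)=\sum_{w}\pi^{k-w+1}_a\,(w\,y^a_w)+\pi^1_i\,(k\,z^i_k)=\sum_{w} w\,\pi^{k-w+1}_a\,y^a_w + k\,\pi^1_i\,z^i_k,
\end{equation*}
which is the claimed formula.

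It remains to see that $\delta_{F_k}$ is a morphism of graded bundles landing in $\mathbb{R}[k]$, i.e.\ that $\delta_{F_k}^*(t)$ is homogeneous of weight $k$. The fibred product $D^*(F_k)\times_{F_{k-1}}F_k$ should be equipped with the weight vector field restricting to the degree $(k-1)$ structure $\Delta^1$ of the $\mathcal{GL}$-bundle $D^*(F_k)$ on the $\pi$-fibres and to the total weight of $F_k$ on the $z$-fibres. These two agree on the common base $F_{k-1}$ (both assign $y^a_w$ the weight $w$), so they glue to a genuine weight vector field assigning $\pi^{k-w+1}_a\mapsto k-w$, $\pi^1_i\mapsto 0$, $y^a_w\mapsto w$, $z^i_k\mapsto k$. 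Under it the two monomials of $\delta_{F_k}^*(t)$ have weights $(k-w)+w=k$ and $0+k=k$, so $\delta_{F_k}^*(t)$ is homogeneous of weight $k$ as required.

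The one point that demands genuine attention is this weight bookkeeping. Had one used the \emph{total} weight of $D^*(F_k)$, which assigns $\pi^{k-w+1}_a$ the weight $k-w+1$ because of the extra unit carried by the Euler (fibre-linear) structure $\Delta^2$ dual to the fibres of $D(F_k)$, every monomial would come out with weight $k+1$. The resolution is that the pairing is by construction linear along $\Delta^2$, so the relevant grading is the complementary graded weight $\Delta^1$, which assigns $\pi^{k-w+1}_a$ precisely the weight $k-w$ dual to the weight $w$ of $y^a_w$; this is what produces the total weight $k$ and pins down the target $\mathbb{R}[k]$. The construction and the derivation of the coordinate form are otherwise purely formal.
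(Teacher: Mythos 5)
Your construction---composing the tautological evaluation pairing of the vector bundle $D(F_{k})\to F_{k-1}$ with $\mathrm{id}\times_{F_{k-1}}\iota_{F_{k}}$ and substituting $\iota_{F_k}^{*}(\dot{y}^{a}_{w})=w\,y^{a}_{w}$, $\iota_{F_k}^{*}(\dot{z}^{i}_{k})=k\,z^{i}_{k}$ from Theorem \ref{theom:embedding}---is exactly the argument the paper intends, which is why the proposition carries no separate proof: it formalises the sentence immediately preceding it, describing this very composition. Your additional weight bookkeeping is also correct and consistent with the paper's bi-weight conventions (each monomial has bi-weight $(k,1)$, i.e.\ weight $k$ with respect to the degree-$(k-1)$ structure $\Delta^{1}$ glued with the total weight of $F_{k}$), and it usefully pins down why the target is $\mathbb{R}[k]$ rather than $\mathbb{R}[k+1]$.
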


\begin{remark} As well as the above natural pairing $\delta_{F_{k}}$, the linear dual of a graded bundle is a \emph{dual object} in the sense that one can pass from the graded bundle to its linear dual and vice versa ``without loss of information". Starting from the dual $D^{*}(F_{k})$ one canonically passes to $D(F_{k})$ as standard, and then via Theorem \ref{theom:linearisation} one recovers $F_{k}$ using the holonomic vectors.  However, the linear dual is  not completely satisfactory as an object dual to a graded bundle, since this `dualisation' is clearly not reflexive. This was the reason, why we called it `linear dual' and not just `dual'. None the less, the linear dual plays an important role in applications cf. \cite{Bruce:2014b}.
\end{remark}
There is a substructure of the linear dual of a graded bundle  that, up to our knowledge, first  appeared in the work of  Miron \cite{Miron:2003} for the case of higher order tangent bundles in the context of higher order Hamiltonian mechanics.

\begin{definition}
The \emph{Mironian} of a graded bundle is the double graded bundle defined as
\begin{equation*}
\textnormal{Mi}(F_{k}) := D^{*}(F_{k})[\Delta_{D^{*}(F)}^{1} + k\: \Delta_{D^{*}(F)}^{2} \leq k]
\end{equation*}
\end{definition}

\noindent It is not hard to see that the local coordinates on the Mironian inherited from the coordinates on the linear dual are

\begin{equation*}
 (\underbrace{x^{A}}_{(0,0)} , ~ \underbrace{y_{w}^{a}}_{(w,0)}; ~ \underbrace{\pi_{i}^{1}}_{(0,1)}),
 \end{equation*}

 \noindent and so the Mironian of $F_{k}$ has the structure of a vector bundle over $F_{k-1}$. Moreover, we have the identification $\textnormal{Mi}(F_{k}) \simeq F_{k-1} \times_{M} \bar{F}_{k}^{*}$.

\begin{example}
If $F_{2} =  \sT^{2}M$, then $\textnormal{Mi}(\sT^{2}M) =  \sT M \times_{M}\sT^{*}M $.
\end{example}

A natural question here is \emph{can one recover $F_{k}$ from $\textnormal{Mi}(F_{k})$?} As the Mironian of $F_{k}$ is a vector bundle over $F_{k-1}$, we can take the standard dual and obtain

\begin{equation*}
\textnormal{Mi}^{*}(F_{k}) = F_{k-1} \times_{M}\bar{F}_{k} \simeq F_{k},
\end{equation*}

\noindent where we have made use of the Gaw\c{e}dzki--Bachelor-like theorem for graded bundles, see Proposition \ref{prop:partialsplitting}. It seems we can recover $F_{k}$ from just $\textnormal{Mi}(F_{k})$, however the identification of $\textnormal{Mi}^{*}(F_{k})$  with $F_{k}$ is non-canonical and thus the identification is really only up to isomorphism classes.  In simpler language, the Mironian misses the full transformation law for the highest weight coordinates on $F_{k}$. Thus we can never fully recover the initial graded bundle from the Mironian without the extra data of a partial splitting. Therefore, the Mironian, like the linear dual,  is \emph{not} a completely satisfactory notion of a dual object to a graded bundle. The Mironian also plays an important role in geometric mechanics.

Amongst all the possible fibrations of $\sT^{*}F_{k}$, there is a privileged role for

\vspace{10pt}
 \begin{tabular}{p{5cm} p{10cm}}
\leavevmode
\begin{center}
\begin{xy}
(10,60)*+{\sT^{*}F_{k}}="a"; %
(0,40)*+{F_{k}}="b";  (20,40)*+{D^{*}(F_{k})}="c";%
(20,25)*+{\textnormal{Mi}(F_{k})}="d";%
(10,10)*+{F_{k-1}}="e";%
(10,0)*+{M}="f";%
{\ar "a";"b"}; {\ar "a";"c"}; %
{\ar "b";"e"}; {\ar "d";"e"};%
{\ar "c";"d"};%
{\ar "e";"f"};%
\end{xy}
\end{center}
&
\vspace{10pt}
in the sense that the vector bundle on the right-hand side of the diagram is the right object to play the r\^ole of a dual of a graded bundle. Note that the dimension of $F_{k}$ and $D^{*}(F_{k})$ are different in general, but the dimension of the typical fibres over $M$ and $F_{k-1}$ respectively are equal.
\end{tabular}

\subsection{Application of the parity reversion functor}
Superising graded  and $n$-tuple graded bundles beyond the linear case of vector and $n$-tuple vector bundles is a non-trivial task with as of today no systematic procedure. In fact, it is far from obvious that an arbitrary  graded bundle has a well-defined superisation.  The linearisation of a graded bundle allows us to define a ``partial superisation" of the graded bundle (see \cite{Bruce:2015} where this procedure is iterated). In particular, as both $D(F_{k})$ and $D^{*}(F_{k})$ are vector bundles over $F_{k-1} $ with some additional weights, one can directly employ the parity reversion functor to consistently define $\Pi D(F_{k})$ and $\Pi D^{*}(F_{k})$, which are then graded super bundles in our language. That is the coordinates now carry both weight and Grassmann parity.  The parity reversion of the Mironian can also be directly constructed.

Alternatively, one could define  $\Pi D(F_{k})$ and $\Pi D^{*}(F_{k})$ as reductions of $\Pi \sT F_{k}$ and $\Pi \sT^{*}F_{k}$ following the constructions presented earlier in this section.

\begin{example}
If $F_{2} = \sT^{2}M$, then $\Pi D(\sT^{2}M) \simeq \Pi \sT (\sT M)$ and $\Pi D^{*}(\sT^{2}M) \simeq \Pi \sT^{*} (\sT M)$. Furthermore, we have $\Pi \textnormal{Mi}(\sT^{2}M) \simeq \sT M \times_{M} \Pi \sT^{*} M$.
\end{example}

\begin{example}\label{exm:odd dvb}
To illustrate the more general situation consider an arbitrary graded bundle of degree 2. The supermanifold  $\Pi D(F_{2})$ comes with natural homogeneous coordinates
\begin{equation*}
\{\underbrace{x^{A}}_{(0,0)}, ~ \underbrace{y^{a}}_{(1,0)}; \vspace{5pt} \underbrace{\zx^{b}}_{(0,1)}, ~ \underbrace{\theta^{i}}_{(1,1)}  \},
\end{equation*}
 where the last two coordinates are Grassmann odd. The admissible changes of local coordinates are of the form
\begin{align*}
&x^{A'} = x^{A'}(x),  & & y^{a'} = y^{b}T_{b}^{\:\: a'}(x),&\\
&\zx^{a'} = \zx^{b}T_{b}^{\:\: a'}(x), &  & \theta^{i'} = \theta^{j}T_{j}^{\:\: i'}(x) + \zx^{a}y^{b}T_{ba}^{\:\:\: i'}(x).&
\end{align*}
\end{example}

\begin{remark}
The multi-graded supermanifolds constructed above, and employed in the rest of this paper, are not quite as general as the non-negatively graded (super)manifolds  defined by Voronov \cite{Voronov:2001qf}. In particular we see that the supermanifolds here arise via the application of the parity reversion functor which acts on linear coordinates only. Thus, the Grassmann parity and the weights are not completely independent: the Grassmann parity will be specified by one component of multi-weights employed which will be either zero or one.
\end{remark}

Also any graded--linear bundle admits a parity reversion over the linear structure. That is $\Pi D_{k} \rightarrow B_{k-1}$ makes sense and this induces $\Pi A_{k-1} \rightarrow B_{k-1}$  in a consistent manner. However, there is no obvious total parity reversion unless we are specifically dealing with a double vector bundle structure.

The linearisation and linear dual of a graded bundle do allow us to employ ``linear constructions" in the theory of graded bundles. In particular they will be fundamental in defining weighted algebroids in the next section.

\section{Weighted algebroids}\label{sec:weighted algebroids}
\subsection{Weighted algebroids; the general structure}
The approach developed here is to generalise the double vector bundle morphism $\epsilon: \sT^{\ast} E \rightarrow \sT E^{\ast}$ covering the identity on $E^{\ast}$ that encodes the Lie algebroid structure on  vector  bundle $E \rightarrow M$. A more general double vector bundle morphism of this type is known as a \emph{general algebroid} \cite{Grabowski:1999}. The main motivation for defining Lie algebroid-like objects in terms of double vector morphisms is the fact that such morphisms are the key in developing geometric mechanics.

Recall that  $D(F_{k}) \rightarrow F_{k-1}$ is  a vector bundle structure carrying some additional weights. We wish to use this linear structure to mimic the standard constructions in the theory of general algebroids to define a higher analogue of a Lie algebroid on $F_{k}$. In order to obtain slightly more general algebroid structures we will start from an arbitrary  $\mathcal{GL}$-bundle and then later pass to the special case of the linearisation of a graded bundle.

\begin{definition}\label{def:GLalgebroid}
 A \emph{weighted algebroid of degree k} is a morphisms of triple graded bundles
 \begin{equation}\label{epsilon}
 \epsilon : \sT^{\ast} D_{k} \rightarrow \sT D^{\ast}_{k}
 \end{equation}
  \noindent covering the identity on the double graded bundle $D^{\ast}_{k}$.  The \emph{anchor map} of a weighted algebroid is the map $\rho : D_{k} \rightarrow \sT B_{k-1}$ underlying the map $\epsilon$, see the diagram below.
\end{definition}

\begin{equation*}\xymatrix{
  \sT^\ast D_k \ar[rr]^{\epsilon} \ar[dr]
 \ar[dd]
 & & \sT D^*_k\ar[dr]\ar[dd]
 & \\
  & D_{k}\ar[rr]^/-20pt/{\zr}\ar[dd]
 & &  \sT B_{k-1}\ar[dd]\\
 D^\ast_{k}\ar[rr]^/+20pt/{id}\ar[dr]
 & & D^\ast_{k}\ar[dr] &   \\
 &B_{k-1}\ar[rr]^{id} & & B_{k-1}
}
\end{equation*}

\noindent Recall that we have defined $B_{k-1} :=  D_{k}[\Delta^{2} \leq 0]$, which by assumption is a graded bundle of degree $k-1$. We will denote a weighted algebroid  as a pair $(D_{k}, \epsilon)$, where the graded structure described by the weight vector fields is understood.

\begin{remark}
In a more categorical language, our definition of a weighted algebroid is equivalent to a general algebroid in the category of graded bundles. In particular, we see that the structure defining map $\epsilon$ by definition is a morphism of graded bundles by forgetting the linear structures.
\end{remark}

The morphism (\ref{epsilon}) is known to be associated with a 2-contravariant tensor field $\Lambda_{\epsilon}$ on $D_k^*$. If  $\Lambda_{\epsilon}$ associated with $\epsilon$  is a bi-vector field on $D^{*}_{k}$, we speak about a \emph{weighted skew algebroid of degree k}. If the bi-vector field is a Poisson structure then we have a \emph{weighted Lie algebroid of degree k}.

\begin{remark}
As $B_{k-1}$ is a graded bundle, we can use the differential of the natural projections to $B_{k-l}$ for $1< l \leq k$ to define a series of anchors  $\zr_{k-l+1}:D_k\to \sT B_{k-l}$. In particular, we can define an anchor map $\zr_1:D_{k} \rightarrow \sT M$.  We will explore this properly for  weighted algebroids whose underlying structure is a linearisation of a graded bundle.
\end{remark}

We will primarily focus on weighted skew and Lie algebroids from now on.

\subsection{Weighted skew algebroids and almost Poisson structures}

In order to make the constructions explicit, we will  employ more compact notation with local coordinates. In particular we will choose homogeneous local coordinates
$$(\underbrace{x^{\alpha}_{u}}_{(u,0,0)},~\underbrace{y^{I}_{u+1}}_{(u,1,0)},~ \underbrace{p^{u+2}_{\beta}}_{(u,1,1)},~ \underbrace{\pi^{u+1}_{J}}_{(u,0,1)} ),$$
on $\sT^{*}D_{k}$. Here the $0 \leq u < k$. Similarly, we choose homogeneous local coordinates
$$(\underbrace{x^{\alpha}_{u}}_{(u,0,0)},~ \underbrace{\pi^{u+1}_{I}}_{(u,0,1)},~ \underbrace{\delta x^{\beta}_{u+1}}_{(u,1,0)},~ \underbrace{\delta \pi^{u+2}_{J}}_{(u,1,1)} ),$$
on $\sT D_{k}^{*}$.

\begin{proposition}\label{prop:bivectors}
There is a one-to-one correspondence between weighted  skew algebroids structures on $D_{k}$ and
\begin{enumerate}
\item
 bi-vector fields $\Lambda_{\epsilon}$ on $D^{*}_{k}$ of bi-weight $(1-k,-1)$;
\item skew algebroid brackets $[\cdot,\cdot]_\ze$ on the vector bundle  $D_k\to B_{k-1}$ which are
of degree $-k$, i.e. the bracket of homogeneous sections of degrees $r_1,r_2$ is of degree $r_1+r_2-k$;
\item homogeneous `Hamiltonians' $\sP_{\ze}$
of weight $(k-1,2, 1)$  on the graded super bundle $\Pi \sT^{*}D^{*}_{k}$.
 \end{enumerate}
\end{proposition}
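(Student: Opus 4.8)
The plan is to reduce all four items to the classical dictionary for general and skew algebroids of Grabowski \& Urba\'nski \cite{Grabowski:1999}, and then to verify that the extra gradings are respected with exactly the stated weights. All four descriptions encode one and the same object, the morphism $\epsilon$, so the strategy is to fix the linear structure $D_k\to B_{k-1}$ (the Euler field $\Delta^2$) and view $\epsilon$ as a morphism of vector bundles over $\mathrm{id}_{D_k^{*}}$, treating the remaining weight $\Delta^1$ and the cotangent/phase weight as bookkeeping to be carried through each identification.

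First, for (1)$\Leftrightarrow$(2) I would invoke the canonical double vector bundle isomorphism $\mathcal{R}:\sT^{*}D_k\to\sT^{*}D_k^{*}$ attached to the vector bundle $D_k\to B_{k-1}$ (the Tulczyjew/Mackenzie--Xu map). Then $\epsilon\circ\mathcal{R}^{-1}:\sT^{*}D_k^{*}\to\sT D_k^{*}$ is a morphism over $\mathrm{id}_{D_k^{*}}$, hence is the sharp map $\Lambda_\epsilon^{\sharp}$ of a unique $2$-contravariant tensor $\Lambda_\epsilon$ on $D_k^{*}$, and the requirement that $\Lambda_\epsilon$ be antisymmetric is precisely the \emph{skew} case. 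That $\epsilon$ is moreover a morphism of (triple) graded bundles, i.e.\ intertwines the lifted weight vector fields, is equivalent to a homogeneity condition on $\Lambda_\epsilon$; computing in the coordinates $(x^\alpha_u,y^I_{u+1},p^{u+2}_\beta,\pi^{u+1}_J)$ displayed above one reads off that this condition is exactly bi-weight $(1-k,-1)$. The $-1$ in the second slot is the familiar linearity of a skew tensor on the dual of a vector bundle, while the $1-k$ in the first slot reflects that the lifted $\Delta^1$ has degree $k-1$.

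For (2)$\Leftrightarrow$(3) I would use the standard bijection between linear (fibrewise degree $-1$) contravariant $2$-tensors on the dual bundle and bracket-plus-anchor structures on the bundle itself: fibre-linear functions on $D_k^{*}$ are sections of $D_k\to B_{k-1}$, and one sets $\ell_{[s_1,s_2]_\epsilon}=\{\ell_{s_1},\ell_{s_2}\}_{\Lambda_\epsilon}$ together with $\{\ell_s,g\}_{\Lambda_\epsilon}=\zr(s)(g)$ for functions $g$ on $B_{k-1}$ pulled back to $D_k^{*}$, the latter recovering the anchor $\zr$. Skewness of $\Lambda_\epsilon$ corresponds to skewness of the bracket, and, identifying the degree of a section with the total weight of $\ell_s$, the bi-weight $(1-k,-1)$ translates into the statement that $[\cdot,\cdot]_\epsilon$ lowers total degree by $k$. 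For (2)$\Leftrightarrow$(4) I would pass to the description of multivector fields as fibrewise-polynomial functions on the antitangent bundle: a bi-vector on $D_k^{*}$ corresponds to a function $\sP_\epsilon$ on $\Pi\sT^{*}D_k^{*}$ that is quadratic in the odd fibre momenta, with the Schouten bracket realised as the canonical odd Poisson bracket. The multivector degree $2$ gives the middle entry of $(k-1,2,1)$, and the outer entries $k-1$ and $1$ come from the lifted $\Delta^1,\Delta^2$ weights of $\Lambda_\epsilon$ combined with the weights carried by the two conjugate odd momenta under the requisite phase lift.

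The main obstacle, everything else being an application of known correspondences, is the weight accounting: three independent gradings are in play at once, and the cotangent and antitangent bundles must be equipped with the phase-lifted weight vector fields (as fixed in the Warning above) to remain inside the category of non-negatively graded bundles. The delicate points I expect are checking that $\mathcal{R}$ and the sharp map are homogeneous of the correct degrees, verifying that the naive versus phase-lifted conventions on $\sT^{*}D_k$, $\sT D_k^{*}$ and $\Pi\sT^{*}D_k^{*}$ are mutually consistent, and confirming that $\sP_\epsilon$ acquires precisely the weight $(k-1,2,1)$ rather than a shifted value. These are direct but bookkeeping-heavy verifications in the homogeneous coordinates introduced before the proposition.
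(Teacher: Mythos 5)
Your proposal is correct and rests on the same backbone as the paper's proof: reduce everything to the classical Grabowski--Urba\'nski dictionary for (skew) algebroids and then track the extra weights, which the paper itself declares to be ``the only real complication''. The one genuine divergence is in how you obtain item 2. The paper proves the items in the order $1\to 3\to 2$: it first writes the bi-vector in homogeneous coordinates, converts it to the Hamiltonian $\sP_{\ze}$ on $\Pi\sT^{*}D_k^{*}$, and only then produces the bracket via the derived-bracket formula $[s_1,s_2]_{\ze}=\SN{\SN{s_1,\sP_{\ze}},s_2}$, with sections of degree $r$ realised as functions of tri-weight $(r-1,0,1)$ and the canonical Schouten bracket of tri-weight $(1-k,-1,-1)$, so the degree count $r_1+r_2-k$ drops out immediately. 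You instead go $1\to 2$ directly through the classical correspondence $\ell_{[s_1,s_2]_\ze}=\{\ell_{s_1},\ell_{s_2}\}_{\Lambda_\ze}$ and $\{\ell_s,g\}=\zr(s)(g)$ on fibre-linear functions, treating item 3 as a separate translation via the odd cotangent bundle; this is equally valid and arguably more elementary, at the cost of having to verify the weight shift for $\ell_s$ by hand, whereas the paper's derived-bracket route makes the $-k$ degree of the bracket an automatic consequence of the tri-weights already computed for $\sP_{\ze}$ and the Schouten bracket. Your use of the Tulczyjew/Mackenzie--Xu symplectomorphism $\sT^{*}D_k\to\sT^{*}D_k^{*}$ to extract $\Lambda_\ze^{\sharp}$ matches the paper's remark following the proof, and your weight assignments $(1-k,-1)$, degree $-k$, and $(k-1,2,1)$ all agree with the paper's computations. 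One cosmetic slip: the proposition has three enumerated items, so your references to ``four items'' and ``(4)'' (counting the morphism $\ze$ itself as a description) should be renumbered to avoid confusion.
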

 \begin{proof}
 It is well known that there is a one-to-one correspondence between skew algebroid structures on vector bundles and linear bi-vector fields on the dual vector bundle \cite{Grabowski:1999}. The only real complication in the current situation is the presence of the extra weights in addition to the weight associated with the linear structures. Thus we will not repeat the classical proofs in full, rather we will concentrate on ensuring the various weights are correct.\medskip

 \noindent It is well known that given a linear bi-vector field $\Lambda_{\epsilon}$ gives rise to the vector bundle morphisms

 \begin{equation*}
 \bar{\Lambda}_{\epsilon}: \sT^{*}D^{*}_{k} \longrightarrow \sT D^{*}_{k},
\end{equation*}

\noindent which is induced by the the left interior product of a one-form, that is a section of $\sT^{*}D^{*}_{k}$ and the bi-vector field. The fact that the bi-vector field is linear translates into the graded language as the statement that the bi-vector field is of bi-weight $(\ast, -1)$, where $\ast$ is to shortly be determined.  As the morphism  $\bar{\Lambda}_{\epsilon}$ also preserves total weight, we know  that
\begin{equation*}
\bar{\Lambda}_{\epsilon}^{*}(\delta x^{\alpha}_{u+1}) = y^{I}_{u-u'+1}P_{I}^{\alpha}[u'](x),
\end{equation*}
\noindent where $P_{I}^{\alpha}[u'](x)$ is of tri-weight $(u', 0,0)$ and necessarily polynomial in coordinates of non-zero tri-weight. In the above expression, and similar expressions, any coordinates such that the label falls outside of the range of the total weight are set to zero.

 This means that the bi-vector field must be of the form
\begin{equation*}
\Lambda_{\epsilon} = P^{\alpha}_{I}[u'](x)  \frac{\partial }{\partial \pi_{I}^{\rlap{$\scriptstyle k+u'-u$}}}\hspace{25pt}  \wedge \frac{\partial}{\partial x^{\alpha}_{u}}  + \textnormal{More},
\end{equation*}
\noindent remembering the shift in the weight required when defining the cotangent bundle of a graded bundle. Thus, we see that $\Lambda_{\epsilon}$ is of total weight $-k$ and thus of bi-weight $(1-k,-1)$. This establishes 1.\medskip

\noindent In homogeneous local coordinates  the bi-vector field encoding a weighted skew algebroid is of the form
\begin{equation}
\Lambda_{\epsilon} = P^{\alpha}_{I}[u'](x)  \frac{\partial }{\partial \pi_{I}^{\rlap{$\scriptstyle k+u'-u$}}}\hspace{25pt}  \wedge \frac{\partial}{\partial x^{\alpha}_{u}}   + \frac{1}{2!} P_{IJ}^{K}[u'](x)\pi_{K}^{k-u} \frac{\partial }{\partial \pi_{J}^{\rlap{$\scriptstyle k+u''-u$}}}\hspace{25pt}  \wedge \frac{\partial }{\partial \pi_{I}^{\rlap{$\scriptstyle k+u'-u''$}}} \hspace{30pt}.
\end{equation}

\medskip

\noindent It will be convenient at this point to first prove 3. before 2. It is well known that there is a one-to-one correspondence between multivector fields on a manifold $N$ and functions on the supermanifold $\Pi T^{*}N$ and that the classical Schouten--Nijenhuis bracket is in correspondence with the Schouten (or odd Poisson) bracket associated with the odd symplectic structure on  $\Pi T^{*}N$.  Let us on $\Pi \sT^{*}D_{k}^{*}$  employ local homogeneous coordinates

\begin{equation*}
 (\underbrace{x^{\alpha}_{u}}_{(u,0,0)}, ~  \underbrace{\pi^{u+1}_{I}}_{(u,0,1)}, ~ \underbrace{\chi_{\beta}^{u+2}}_{(u,1,1)}, ~ \underbrace{\theta_{u+1}^{J}}_{(u,1,0)}).
\end{equation*}

\noindent Note that the assignment of the Grassmann parity is encoded in the assignment of the second component of the tri-weight in our convention. The association of a `Hamiltonian' with a multi-vector field is viz
\begin{align*}
& \chi_{\alpha}^{k+1-u} \longleftrightarrow \frac{\partial}{\partial x^{\alpha}_{u}}, & &   \theta_{u-u'+1}^{I} \longleftrightarrow \frac{\partial }{\partial \pi_{I}^{\rlap{$\scriptstyle k+u'-u$}}}\hspace{25pt},&
\end{align*}

\noindent noting the shift in weight \emph{and} Grassmann parity. Essentially one exchanges the wedge product of partial derivatives for the supercommutative product of the antimomenta to obtain

\begin{equation}
 \sP_{\ze} =  \theta^{I}_{u-u'+1}P^{\alpha}_{I}[u'](x) \chi_{\alpha}^{k+1-u} + \frac{1}{2!} \theta^{J}_{u-u''+1} \theta^{I}_{u''-u'+1} P_{IJ}^{K}[u'](x)\pi_{K}^{k-u},
\end{equation}

\noindent understood in the super-language as a function on $\Pi \sT^{*}(D^{*}_{k})$. Via inspection we see that $\sP_{\ze}$ is of tri-weight $(k-1,2,1)$. This establishes 3.\medskip

\noindent To establish 2.,  note that we can  interpret sections of degree $r$ of   $D_{k} \rightarrow B_{k-1}$ as functions on $\Pi T^{*}D_{k}^{*}$ of tri-weight $(r-1,0,1)$. Furthermore note that the canonical Schouten bracket on  $\Pi T^{*}D_{k}^{*}$ is of tri-weight $(1-k,-1,-1)$. Then using the derived bracket formalism:

\begin{equation*}
[s_{1}, s_{2}]_{\epsilon} =  \SN{  \SN{ s_{1},\sP_{\ze}}, s_{2}},
\end{equation*}

\noindent which is understood at this stage as a function on $\Pi T^{*}D_{k}^{*}$, is of tri-weight $(r_{1}+ r_{2}-k-1,0,1)$, where $s_{1}$ is a section of degree $r_{1}$ and $s_{2}$ is a section of degree $r_{2}$. There is a little leeway here with conventions and thus we will intentionally be slack with an overall sign. Note that the skew bracket on sections of   $D_{k} \rightarrow B_{k-1}$ closes and thus we can interpret $[s_{1}, s_{2}]_{\epsilon}$  as a skew algebroid bracket of weight $-k$. As standard, the existence of the anchor follows from the Leibniz rule of the almost Poisson bracket. Thus, we have established 2. If $\sP_{\ze}$ is Poisson, that is $\SN{\sP_{\ze},\sP_{\ze}}=0$ then the associated skew algebroid bracket is  of course a Lie bracket.
\end{proof}

\begin{example}\label{Atiyah}
There is a well known construction of the Lie algebroid $\A(P)$ associated to a $G$-principal bundle $P\to N$, called usually the \emph{Atiyah algebroid}. The sections of the corresponding vector bundle $\A(P)\to N$, usually denoted by $\sT P/G$, consist of $G$-invariant vector fields on $P$, and the Lie algebroid bracket is just the Lie bracket of vector fields. Dually, $\A^*(P)$ is the Poisson reduction of the cotangent bundle $\sT^*P$ with respect to the lifted action of $G$. If $P_k$ is a graded principal bundle, then $\A(P_k)=\sT P_k/G$ inherits a graded bundle structure from $\sT P_k$, which is compatible with the Lie algebroid structure, i.e. we get an example of a weighted Lie algebroid. Indeed, according to our weight-convention for coordinates on $\sT^*P_k$, degree $r$ sections of the vector bundle $\A(P_k)\to N$ correspond to $G$-invariant vector fields $X$ on $P_k$ which are of degree $r-k$ with respect to the weight vector field $\nabla$ on $P_k$, $[\nabla,X]=(r-k)X$, where $[\cdot,\cdot]$ is the Lie bracket of vector fields. Since $[\nabla,X_i]=(r_i-k)X_i$, $i=1,2$, we get  from the Jacobi identity
$$[\nabla,[X_1,X_2]]=(r_1+r_2-2k)[X_1,X_2]=\left((r_1+r_2-k)-k\right)[X_1,X_2]\,,$$
i.e. the bracket of sections of degrees $r_1,r_2$ is of degree $r_1+r_2-k$, thus is itself of degree $-k$.
\end{example}

The  morphism $\epsilon: \sT^{*}D_{k} \rightarrow \sT D^{*}_{k}$ that defines the weighted skew algebroid  can be deduced from a given bi-vector field by examining  $\bar{\Lambda}_{\epsilon}$ and the canonical symplectomorphism $\sT^{*}D_{k} \rightarrow \sT^{*}D_{k}^{*}$.   One can more-or-less read off the desired morphism following \cite{Grabowski:1999}
\begin{eqnarray}
\delta x_{u+1}^{\alpha} \circ \epsilon  &=& y^{I}_{u-u'+1} P^{\alpha}_{I}[u'](x)\\
\nonumber \delta \pi_{J}^{u+1}\circ \epsilon &=& P^{\alpha}_{J}[u'](x)p_{\alpha}^{u-u'+2} + y^{I}_{u''-u'+1}P_{IJ}^{K}[u'](x)\pi_{K}^{u-u''+1}.
\end{eqnarray}

\subsection{Weighted Lie algebroids as homological vector fields}
As $(D_{k}, \Delta^{1}, \Delta^{2})$ has the structure of a vector bundle, there exists a canonical (odd) symplectomorphism $R: \Pi \sT^{*}\Pi D_{k} \rightarrow \Pi \sT^{*}D_{k}^{*}$, see for example \cite{Bruce:2011} for details. We can then use this canonical symplectomorphism to pull-back  the `Hamiltonian' $\sP_{\epsilon}$ on $\Pi T^{*}D^{*}_{k}$ encoding a weighted skew algebroid to a function linear in odd momenta on $\Pi \sT^{*}\Pi D_{k}$. This linear function  can be interpreted as the (odd) symbol of an odd   vector field on $\Pi D_{k}$. As we have a symplectomorphism and the symbol maps the Lie bracket of vector fields to the Schouten bracket of the symbols, the odd vector field associated with a weighted Lie algebroid is in fact a homological vector field. The remarkable point is that this homological  vector field on $\Pi D_{k}$ is naturally of total weight one, thus weighted Lie algebroids are very similar to non-linear or higher Lie algebroid as defined by Voronov \cite{voronov-2010}. We again stress that the weight one condition is not imposed on the odd vector field, rather it comes naturally from the definition of a weighted skew algebroid in terms of graded morphisms.

\begin{theorem}\label{theom:odd vector field}
A  weighted skew  algebroid of degree $k$ can be defined as an odd vector field of weight $(0,1)$ on the double graded super bundle $\Pi D_{k}$. Moreover, if we deal with a weighted Lie  algebroid, then the odd vector field is a homological vector field.
\end{theorem}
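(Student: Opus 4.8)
The plan is to exploit the third description of a weighted skew algebroid furnished by Proposition \ref{prop:bivectors}, namely the homogeneous `Hamiltonian' $\sP_\epsilon$ of tri-weight $(k-1,2,1)$ on $\Pi\sT^*D^*_k$, and to transport it across the canonical odd symplectomorphism $R:\Pi\sT^*\Pi D_k\to\Pi\sT^*D^*_k$ recalled from \cite{Bruce:2011}. Since $R$ is an isomorphism of odd symplectic supermanifolds, $R^*\sP_\epsilon$ is again a function on $\Pi\sT^*\Pi D_k$; the whole point is that it is \emph{fibrewise-linear} in the cotangent momenta, hence the (odd) symbol of a vector field $Q$ on $\Pi D_k$. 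First I would recall, again from \cite{Bruce:2011}, that $R$ exchanges the even fibre coordinates $\pi$ of $D^*_k$ with the even momenta conjugate to the odd fibre coordinates of $\Pi D_k$, sends the odd antimomenta $\theta$ on $\Pi\sT^*D^*_k$ to the odd fibre coordinates of $\Pi D_k$, and matches the odd momenta $\chi$ conjugate to the base with the odd base-momenta on $\Pi\sT^*\Pi D_k$.

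With this dictionary the two summands of $\sP_\epsilon$ go over, respectively, to a term linear in the momenta conjugate to the base coordinates of $\Pi D_k$ (from $\theta^I P^\alpha_I\chi_\alpha$) and a term linear in the momenta conjugate to the odd fibre coordinates (from $\tfrac12\theta^J\theta^I P^K_{IJ}\pi_K$); that is, $R^*\sP_\epsilon$ is exactly the symbol of
\[
Q \;=\; \zeta^I\,P^\alpha_I(x)\,\frac{\partial}{\partial x^\alpha}\;+\;\tfrac12\,\zeta^J\zeta^I\,P^K_{IJ}(x)\,\frac{\partial}{\partial \zeta^K}\;+\;\cdots
\]
written in the odd fibre coordinates $\zeta$ of $\Pi D_k$. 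The next step is the weight bookkeeping: one checks term by term that $Q$ is Grassmann odd and of bi-weight $(0,1)$. For the displayed first term this is immediate, since $\zeta^I_{u-u'+1}$ carries bi-weight $(u-u',1)$, $P^\alpha_I[u']$ carries $(u',0)$ and $\partial/\partial x^\alpha_u$ carries $(-u,0)$, which sum to $(0,1)$; the analogous count for the second term is forced by the total-weight constraints already verified in the proof of Proposition \ref{prop:bivectors}.

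The equivalence of the two structures is then a formal consequence of the naturality of $R$ and of the symbol map. As $R$ is an odd symplectomorphism it intertwines the Schouten brackets, $\SN{R^*f,R^*g}=R^*\SN{f,g}$, while the symbol map carries the Lie bracket of vector fields on $\Pi D_k$ to the Schouten bracket of symbols. Hence $\mathrm{symb}([Q,Q])=\SN{R^*\sP_\epsilon,R^*\sP_\epsilon}=R^*\SN{\sP_\epsilon,\sP_\epsilon}$, so that $[Q,Q]=0$ exactly when $\SN{\sP_\epsilon,\sP_\epsilon}=0$; by Proposition \ref{prop:bivectors} the latter is precisely the weighted Lie condition. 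Running the argument backwards, applying $(R^{-1})^*$ to the symbol of an arbitrary odd weight-$(0,1)$ field, produces the inverse assignment, so the correspondence is bijective and the theorem follows in both its skew and its Lie (homological) form.

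I expect the real difficulty to lie entirely in the weight bookkeeping of the middle paragraph, and specifically in reconciling the \emph{two different phase-lift conventions} carried by $\Pi\sT^*D^*_k$ and $\Pi\sT^*\Pi D_k$. The map $R$ is built from the linear ($\Delta^2$) structure alone and relates the \emph{unshifted} cotangent lifts; once the non-negative ($k$-phase) gradings demanded by the Warning above are imposed on each side, $R$ effectively shifts the first weight, which is exactly why $\sP_\epsilon$ of $\Delta^1$-weight $k-1$ yields a field $Q$ of $\Delta^1$-weight $0$. Verifying that $R^*\sP_\epsilon$ is honestly linear in the momenta, rather than merely polynomial, under these shifted conventions, and that no coordinate labels fall outside the admissible weight ranges, is the one place where the computation genuinely has to be carried out rather than deferred to the chain rule.
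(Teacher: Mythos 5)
Your proposal is correct and follows essentially the same route as the paper's proof: pull back the `Hamiltonian' $\sP_{\epsilon}$ of Proposition \ref{prop:bivectors} along the canonical odd symplectomorphism $R:\Pi\sT^{*}\Pi D_{k}\to\Pi\sT^{*}D_{k}^{*}$, undo the symbol map (noting the weight shift by $(1-k,-1,-1)$) to obtain an odd vector field of bi-weight $(0,1)$, and deduce $[Q_{\epsilon},Q_{\epsilon}]=0$ in the Lie case from the compatibility of $R$ and the symbol map with the Schouten bracket. The only cosmetic deviation is the sign in front of the quadratic term of $Q$, which is immaterial since the paper itself is intentionally slack with the overall sign convention.
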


\begin{proof}
Let us equip the supermanifold $\Pi \sT^{*}\Pi D_{k}$ with local coordinates

\begin{equation*}
 (\underbrace{x^{\alpha}_{u}}_{(u,0,0)}, ~ \underbrace{\theta^{I}_{u+1}}_{(u,1,0)}, ~  \underbrace{\chi_{\beta}^{u+2}}_{(u,1,1)}, ~ \underbrace{\eta_{J}^{u+1}}_{(u,0,1)}  ),
\end{equation*}
where we have  of course anticipated the form of the  canonical symplectomorphism, which is given by

\begin{equation*}
R^{*}(\pi_{I}^{u+1}) = - \eta_{I}^{u+1}.
\end{equation*}
Then applying $R$ to $\sP_{\epsilon}$ and then undoing the symbol map and noting the change in Grassmann parity and the shift in the weight by $(1-k,-1,-1)$, we arrive at the following  vector field
\begin{equation*}
Q_{\epsilon} = \theta^{I}_{u-u'+1}P^{\alpha}_{I}[u'](x)\frac{\partial}{\partial x^{\alpha}_{u}} - \frac{1}{2!}\theta^{J}_{u-u'+1} \theta^{I}_{u''-u' +1}P_{IJ}^{K}[u'](x)\frac{\partial}{\partial \theta^{K}_{u+1}},
\end{equation*}

\noindent on $\Pi D_{k}$ which is of  bi-weight $(0,1)$, meaning it is \emph{Grassmann odd} and of \emph{total weight one}. If we have a weighted  Lie algebroid, then the associated bi-vector field is a Poisson structure and the odd vector field above is homological, that is $[Q_{\epsilon}, Q_{\epsilon}]=0$ \emph{via} our comments opening this subsection.
\end{proof}

\begin{corollary}\label{cor}
A weighted skew (Lie) algebroid can equivalently be defined as a skew (Lie) algebroid $E$ equipped with a homogeneity structure $h: \mathbb{R} \times E \rightarrow E$ that acts as skew (Lie) algebroid morphisms for all $t \in \mathbb{R}$.
\end{corollary}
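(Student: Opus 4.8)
The plan is to read the corollary off Theorem~\ref{theom:odd vector field}, which already identifies a weighted skew (Lie) algebroid of degree $k$ with an odd vector field $Q_\epsilon$ of bi-weight $(0,1)$ on $\Pi D_k$ (homological precisely in the Lie case). The double graded super bundle $\Pi D_k$ carries two commuting weight vector fields: the degree~$1$ Euler vector field $\Delta^2$ of the vector bundle $D_k\to B_{k-1}$ (shifted onto the parity-reversed fibres) and the degree~$k-1$ weight vector field $\Delta^1$. I would decompose the single condition ``$Q_\epsilon$ has bi-weight $(0,1)$'' into the two independent statements $[\Delta^2,Q_\epsilon]=Q_\epsilon$ and $[\Delta^1,Q_\epsilon]=0$, and show that the first produces the (skew/Lie) algebroid and the second produces the compatible homogeneity structure.

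For the forward direction, put $E:=D_k$ with its vector bundle structure over $N:=B_{k-1}$. The condition $[\Delta^2,Q_\epsilon]=Q_\epsilon$ says $Q_\epsilon$ is of weight one with respect to the Euler vector field, so by the Vaintrob-type correspondence (Proposition~\ref{prop:bivectors}) it is exactly a skew algebroid bracket on $E\to N$, and a Lie bracket when $Q_\epsilon$ is homological. Since $D_k$ is a graded bundle, $\Delta^1$ is h-complete and hence generates a homogeneity structure $h:\mathbb{R}\times E\to E$ of degree $k-1$; because $\Delta^1$ commutes with $\Delta^2$, every $h_t$ is a morphism of the linear structure, i.e. a vector bundle morphism covering the induced homogeneity structure on $N$. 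It remains to interpret $[\Delta^1,Q_\epsilon]=0$. Lifting $h$ to $\Pi E$ and writing $\hat h_t=\exp\bigl((\ln t)\,\Delta^1\bigr)$ for $t>0$, the vanishing of $[\Delta^1,Q_\epsilon]=\mathcal{L}_{\Delta^1}Q_\epsilon$ is equivalent to the $\hat h_t$-invariance of $Q_\epsilon$ for all $t>0$; this invariance says precisely that each $\hat h_t$ is $Q_\epsilon$-related to itself, i.e. that $h_t$ is a skew (Lie) algebroid morphism.

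The converse runs the same equivalences backwards: a skew (Lie) algebroid $E\to N$ equipped with a degree $k-1$ homogeneity structure $h$ acting by algebroid morphisms yields a $\mathcal{GL}$-bundle $D_k:=E$ (the commuting pair $\Delta^1,\Delta^2$ being the generator of $h$ and the Euler vector field of $E\to N$), whose associated $Q_\epsilon$ has $\Delta^2$-weight one by the algebroid structure and $\Delta^1$-weight zero by the morphism condition, hence bi-weight $(0,1)$; Theorem~\ref{theom:odd vector field} then returns a weighted skew (Lie) algebroid. The main obstacle is the passage between the \emph{global} condition ``$h_t$ is an algebroid morphism for every $t\in\mathbb{R}$'' and the \emph{infinitesimal} condition $[\Delta^1,Q_\epsilon]=0$: the flow argument above settles the equivalence for $t>0$, and one must still treat $t\le 0$, in particular the projection $h_0:E\to N\hookrightarrow E$. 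Here h-completeness of $\Delta^1$ guarantees that $t\mapsto h_t$ extends smoothly to the full monoid $(\mathbb{R},\cdot)$, and since $Q_\epsilon$-relatedness is a closed condition in $t$, invariance for $t>0$ propagates to all $t$, so no separate argument is needed for the degenerate values.
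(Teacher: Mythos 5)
Your overall route is exactly the one the paper intends: Corollary \ref{cor} is stated there without a separate proof, as an immediate consequence of Theorem \ref{theom:odd vector field}, and your decomposition of the bi-weight condition into $[\Delta^2,Q_\epsilon]=Q_\epsilon$ (the Vaintrob-type condition encoding the skew/Lie algebroid on $D_k\to B_{k-1}$, via Proposition \ref{prop:bivectors}) together with $[\Delta^1,Q_\epsilon]=0$ (invariance under the homogeneity structure) is precisely the content behind that implicit deduction; it is also the weighted analogue of the Bursztyn--Cabrera--del Hoyo characterisation of $\mathcal{VB}$-algebroids that the paper quotes later.

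One step, however, would fail as written: your treatment of $t<0$. The $Q_\epsilon$-relatedness of $\hat h_t$ with itself is indeed a closed condition in $t$, but the closure of $(0,\infty)$ in $\mathbb{R}$ is $[0,\infty)$, so closedness only delivers the degenerate value $t=0$ (where h-completeness guarantees the smooth extension of the action); it says nothing about negative $t$, in particular about the diffeomorphism $h_{-1}$, which is not a limit of the $h_t$ with $t>0$. The repair uses the integrality of the weights, which is exactly what h-completeness buys: decompose $Q_\epsilon=\sum_j Q_j$ into $\Delta^1$-eigenvectors, $[\Delta^1,Q_j]=jQ_j$ with $j\in\mathbb{Z}$ (only finitely many $j$ occur, by polynomiality of the graded structure); then $\hat h_t$ rescales $Q_j$ by $t^{j}$ (in the appropriate push/pull convention) for \emph{every} $t\neq 0$, negative $t$ included, because the action is polynomial in homogeneous coordinates with integer exponents. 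Hence $[\Delta^1,Q_\epsilon]=0$, i.e.\ $Q_\epsilon=Q_0$, gives invariance under $h_t$ for all $t\neq 0$ at once --- equivalently, note $h_t=h_{-1}\circ h_{-t}$ for $t<0$ and $(\hat h_{-1})_*Q_j=(-1)^jQ_j$, so the weight-zero part is $h_{-1}$-invariant --- and $t=0$ then follows by your continuity argument. With this replacement your proof is complete in both directions; note also that in the converse direction the commutation of $\Delta^1$ with $\Delta^2$, which you need in order to form the $\mathcal{GL}$-bundle, is supplied by the fact that algebroid morphisms are in particular vector bundle morphisms.
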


\begin{remark} It is worth further pointing out the relation between weighted Lie algebroids and Voronov's non-linear Lie algebroids. Voronov \cite{voronov-2010} takes as his \emph{definition} of a non-linear Lie algebroid the pair $(\mathcal{M}, ~ Q)$, where $\mathcal{M}$ is a non-negatively graded supermanifold  and $Q$ is a homological vector field of weight $1$. Note that we must have the structure of a supermanifold here in order to define a non-trivial odd vector field.  Generally, there is no (graded) manifold $N$ such that  `$ \mathcal{M} \simeq  \Pi N$'; there is no obvious parity reversion functor here. Thus, in this respect Voronov's general definition of a non-linear Lie algebroid requires the underlying structure of a supermanifold. Weighted Lie algebroids as we initially defined them requires no initial supermanifold structure. As we have the underlying structure of a $\mathcal{GL}$-bundle  one can, as we have done, take duals and apply the standard parity reversion functor in order to build a supermanifold and a homological vector field. Then by passing to total weight we see that weighted Lie algebroids form a special class of non-linear Lie algebroids that is suitable for applications in pure even differential geometry and geometric mechanics. In essence because we have a linear structure the theory of weighted Lie algebroids is closer to the theory of standard Lie algebroids than Voronov's non-linear Lie algebroids.
\end{remark}

\begin{proposition}\label{prop:A1algebroid}
If $(D_{k}, \epsilon)$ is  a  weighted skew/Lie algebroid, then $A_{1} \rightarrow M$ is a skew/Lie algebroid.
\end{proposition}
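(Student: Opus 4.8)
The plan is to obtain the statement by pushing the (homological, or merely odd) vector field of Theorem \ref{theom:odd vector field} all the way down to $M$. In the notation above, $A_{1}$ is the vector bundle over $M=B_{0}$ obtained as the degree--zero reduction of $D_{k}$ in the first weight, i.e. $A_{1}=D_{k}[\Delta^{1}\le 0]$; in the coordinates $(x^{\alpha}_{u},y^{I}_{u+1})$ on $D_{k}$ it retains only the weight--zero base coordinate $x^{\alpha}_{0}$ and the lowest fibre coordinate $y^{I}_{1}$, so that after parity reversion $\Pi A_{1}\simeq(\Pi D_{k})[\Delta^{1}\le 0]$ is genuinely $\Pi$ of an ordinary vector bundle over $M$. (For $F_{k}=\sT^{k}M$ this recovers $A_{1}=\sT M$.) It therefore suffices to produce on $\Pi A_{1}$ an odd vector field of weight $1$ with respect to the residual Euler structure $\Delta^{2}$, homological in the Lie case, and to invoke the correspondence between such fields and skew/Lie algebroid structures on $A_{1}\to M$ (the $k=1$ special case of Theorem \ref{theom:odd vector field}, cf. \cite{Vaintrob:1997}).

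First I would take the odd vector field $Q_{\epsilon}$ on $\Pi D_{k}$ supplied by Theorem \ref{theom:odd vector field}, which has bi-weight $(0,1)$ and is homological exactly when $(D_{k},\epsilon)$ is Lie. Since $Q_{\epsilon}$ has $\Delta^{1}$--weight $0$ it commutes with $\Delta^{1}$, hence sends $\Delta^{1}$--homogeneous functions to functions of the same weight. Because every coordinate on $\Pi D_{k}$ has non-negative $\Delta^{1}$--weight, the functions of $\Delta^{1}$--weight $0$ are precisely those pulled back along the projection $p:\Pi D_{k}\to\Pi A_{1}$; thus $Q_{\epsilon}$ preserves this subalgebra and descends to a unique field $Q_{1}$ on $\Pi A_{1}$ that is $p$--related to $Q_{\epsilon}$. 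This projectability is the analogue, for the odd field rather than for the weight vector fields themselves, of Proposition \ref{ntproj}, and it is the one point genuinely requiring care.

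Finally, $Q_{1}$ inherits the odd parity and the $\Delta^{2}$--weight $1$ of $Q_{\epsilon}$, so it is an odd weight--one vector field on $\Pi A_{1}$; explicitly it reads $Q_{1}=\theta^{I}_{1}P^{\alpha}_{I}[0](x_{0})\,\partial_{x^{\alpha}_{0}}-\tfrac12\,\theta^{J}_{1}\theta^{I}_{1}P^{K}_{IJ}[0](x_{0})\,\partial_{\theta^{K}_{1}}$, whose linear part is the anchor $\rho_{1}:A_{1}\to\sT M$ introduced earlier and whose quadratic part encodes the skew bracket, establishing the skew case directly via Proposition \ref{prop:bivectors}. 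For the Lie case, $p$--relatedness is compatible with the Lie bracket of vector fields, so $[Q_{\epsilon},Q_{\epsilon}]=0$ forces $[Q_{1},Q_{1}]$ to be $p$--related to $0$; as $p$ is a surjective submersion and $Q_{1}$ is uniquely determined, $[Q_{1},Q_{1}]=0$, i.e. $Q_{1}$ is homological and $A_{1}\to M$ is a Lie algebroid. (The same conclusion can be read off from Corollary \ref{cor}: $h_{0}$ is a skew/Lie algebroid morphism of $E=D_{k}\to B_{k-1}$ whose image over $M$ is $A_{1}$.) The hard part is thus solely the projectability verification and the compatibility of the projection with the bracket, all remaining steps being transcriptions of the classical even theory.
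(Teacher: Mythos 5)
Your proposal is correct and follows essentially the same route as the paper: the paper's proof consists precisely of the observation that $Q_{\epsilon}\in\Vect(\Pi D_{k})$ projects to a weight-one (odd, homological in the Lie case) vector field on $\Pi A_{1}$, whose local form $\D_{\epsilon}$ matches your $Q_{1}$ exactly. You merely spell out the projectability and the descent of $[Q_{\epsilon},Q_{\epsilon}]=0$ via $p$-relatedness, details the paper leaves implicit.
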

\begin{proof}
The above proposition follows from the fact that $Q_{\epsilon} \in \Vect(\Pi D_{k})$ projects to a weight one vector field on $\Pi A_{1}$.
\end{proof}
If we denote the projection of $Q_{\epsilon}$ to $\Pi A_{1}$ as $\D_{\epsilon}$ then in homogeneous local coordinates it is of the form

\begin{equation}
\D_{\epsilon} = \theta_{1}^{\nu}P_{\nu}^{A}(x) \frac{\partial }{\partial x^{A}} - \frac{1}{2!} \theta_{1}^{\nu}\theta_{1}^{\mu}P_{\mu \nu}^{\rho}\frac{\partial}{\partial \theta_{1}^{\rho}},
\end{equation}

\noindent which is clearly of the form expected. Furthermore, the odd vector field that encodes a weighted skew algebroid structure satisfies the Leibniz rule

\begin{equation}
Q_{\epsilon}(\alpha \: \Phi) = \D_{\epsilon}(\alpha) \: \Phi  + (-1)^{\widetilde{\alpha}} \alpha \:\: Q_{\epsilon}(\Phi),
\end{equation}

\noindent for $\alpha \in C^{\infty}(\Pi A_{1}) \subset C^{\infty}(\Pi D_{k})$  and $\Phi \in C^{\infty}(\Pi D_{k})$, where $\widetilde{\alpha}$ is the Grassmann parity of $\alpha$, which in this case corresponds to the form degree. One should immediately be reminded of the concept of a Lie algebroid module  as defined by Va$\breve{\textrm{{\i}}}$ntrob \cite{Vaintrob:1997}. However, we have a  non-linear version of this concept as $\Pi D_{k} \rightarrow \Pi A_{1}$ is a general  fibration and not necessarily a linear fibration.

\begin{example}\label{exm:Tangentalgebroid}
The tangent bundle of a graded bundle  $F_{k-1}$ of degree $k-1$ is canonically a weighted Lie algebroid of degree $k$.
Let us employ homogeneous local coordinates $(x^{I}_{u}, \zx_{u+1}^{J})$, where $1\leq u \leq k-1$ on $\Pi \sT F_{k-1}$;  the tangent bundle is clearly a $\mathcal{GL}$-bundle. The canonical de Rham differential, which is a homological vector field
\begin{equation*}
Q = \zx^{J}_{u+1} \frac{\partial}{\partial x^{J}_{u}},
\end{equation*}
\noindent is of bi-weight $(0,1)$, and thus describes a weighted Lie algebroid structure on $\sT F_{k-1}$ \emph{via} Theorem \ref{theom:odd vector field}.
\end{example}

\begin{proposition}\label{prop:Cotangentalgebroid}
If $(F_{k-1}, \mathcal{P})$ is a Poisson manifold of degree $k-1$, i.e. $\mathcal{P} \in C^{\infty}(\Pi \sT^{*}F_{k-1})$ such that it is is of bi-weight $(k-1,2)$ and $\SN{\mathcal{P},\mathcal{P}}=0$, where the bracket here is the canonical Schouten bracket on $\Pi \sT^{*}F_{k-1}$,  then  $\sT^{*}F_{k-1}$ comes with the structure of a weighted Lie algebroid. If we consider an almost Poisson manifold of degree $k-1$, i.e. we loose the vanishing of the Schouten bracket, then  $\sT^{*}F_{k-1}$ comes with the structure of a weighted skew algebroid. Furthermore, the vector bundle $\bar{F}_{k-1}^{*} \rightarrow M$ is a skew/Lie algebroid.
\end{proposition}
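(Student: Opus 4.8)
The plan is to realise $\sT^{*}F_{k-1}$ as a $\mathcal{GL}$-bundle of degree $k$ and then to exhibit the desired structure as the Hamiltonian vector field of $\mathcal{P}$ for the canonical Schouten bracket, reducing everything to Theorem \ref{theom:odd vector field}. As noted in the example following the definition of $\mathcal{GL}$-bundles, $\sT^{*}F_{k-1}$ is a $\mathcal{GL}$-bundle of total degree $k$ with $B_{k-1}=F_{k-1}$, whose dual vector bundle over $F_{k-1}$ is $\sT F_{k-1}$. Consequently $\Pi D_{k}=\Pi\sT^{*}F_{k-1}$ carries its canonical odd symplectic structure, whose associated odd Poisson bracket is exactly the Schouten bracket $\SN{\cdot,\cdot}$ in which $\mathcal{P}$ is given. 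By Theorem \ref{theom:odd vector field} it then suffices to produce on $\Pi\sT^{*}F_{k-1}$ an odd vector field of bi-weight $(0,1)$ that is homological in the Poisson case. I would simply take
\[
Q_{\mathcal{P}}:=\SN{\mathcal{P},\,\cdot\,}\,.
\]

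First I would verify the parity and bi-weight of $Q_{\mathcal{P}}$. Since $\mathcal{P}$ is Grassmann-even (being quadratic in the odd momenta, i.e.\ of second weight $2$) and $\SN{\cdot,\cdot}$ is odd, $Q_{\mathcal{P}}$ is an odd vector field on $\Pi D_{k}$. For the weight, the key bookkeeping is that on $\Pi\sT^{*}F_{k-1}$, equipped with the phase-lifted weights making it a degree-$k$ $\mathcal{GL}$-bundle, the odd momentum conjugate to a coordinate of weight $u$ carries bi-weight $(k-1-u,1)$; pairing such a momentum with its base coordinate shows that $\SN{\cdot,\cdot}$ is of bi-weight $(1-k,-1)$ (exactly as in Proposition \ref{prop:bivectors}). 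Hence $Q_{\mathcal{P}}$ is of bi-weight $(k-1,2)+(1-k,-1)=(0,1)$, and Theorem \ref{theom:odd vector field} delivers a weighted skew algebroid structure on $\sT^{*}F_{k-1}$. In the Poisson case the graded Jacobi identity for $\SN{\cdot,\cdot}$ gives $[Q_{\mathcal{P}},Q_{\mathcal{P}}]=\SN{\SN{\mathcal{P},\mathcal{P}},\,\cdot\,}$, which vanishes precisely when $\SN{\mathcal{P},\mathcal{P}}=0$; so $Q_{\mathcal{P}}$ is homological and we obtain a weighted Lie algebroid, whereas dropping the vanishing of $\SN{\mathcal{P},\mathcal{P}}$ leaves only a weighted skew algebroid.

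For the final assertion I would argue directly, using the classical correspondence between linear (almost) Poisson structures and (skew) Lie algebroids on the dual bundle \cite{Grabowski:1999}. The weight-$(k-1)$ bivector $\mathcal{P}$ induces a bracket $\{\cdot,\cdot\}$ on $\mathcal{A}(F_{k-1})$ that lowers weight by $k-1$. Restricted to the top homogeneous component $\mathcal{A}^{k-1}(F_{k-1})$, which is the module of fibre-linear functions on $\bar{F}_{k-1}$, the bracket is of weight $0$ and hence closes on $\mathcal{A}^{k-1}(F_{k-1})$; moreover $\{\mathcal{A}^{k-1}(F_{k-1}),C^{\infty}(M)\}\subseteq C^{\infty}(M)$ supplies, through the Leibniz rule, an anchor into $\sT M$. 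This is precisely a linear (almost) Poisson structure on $\bar{F}_{k-1}$, equivalently a (skew) algebroid structure on $\bar{F}_{k-1}^{*}\to M$; its Jacobi identity is the top-weight component of $\SN{\mathcal{P},\mathcal{P}}=0$, so in the Poisson case we recover a genuine Lie algebroid. This $\bar{F}_{k-1}^{*}$ is the top-degree piece of the algebroid furnished in general by Proposition \ref{prop:A1algebroid}.

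I expect the main obstacle to be bookkeeping rather than conceptual: one must fix the phase-lifted bi-weights on $\Pi\sT^{*}F_{k-1}$ carefully enough that the Schouten bracket genuinely carries bi-weight $(1-k,-1)$ and $\mathcal{P}$ genuinely carries $(k-1,2)$. Once these conventions are pinned down, the statement that $Q_{\mathcal{P}}$ has bi-weight $(0,1)$, together with the identification of the reduced bracket with a structure on $\bar{F}_{k-1}^{*}$ (where some care with the dualisation is needed), both follow immediately, and the Poisson-versus-almost-Poisson dichotomy is handled uniformly by the single identity $[Q_{\mathcal{P}},Q_{\mathcal{P}}]=\SN{\SN{\mathcal{P},\mathcal{P}},\,\cdot\,}$.
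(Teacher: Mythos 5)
Your proposal is correct and follows essentially the same route as the paper's proof: observe that $\sT^{*}F_{k-1}$ is a $\mathcal{GL}$-bundle of degree $k$, check that $Q_{\mathcal{P}}=\SN{\mathcal{P},\cdot}$ is odd of bi-weight $(0,1)$ (since $\mathcal{P}$ has bi-weight $(k-1,2)$ and the canonical Schouten bracket has bi-weight $(1-k,-1)$), and invoke Theorem \ref{theom:odd vector field}, with homologicality equivalent to $\SN{\mathcal{P},\mathcal{P}}=0$. The only difference is in the final claim, which the paper dispatches by citing Proposition \ref{prop:A1algebroid} (projecting $Q_{\mathcal{P}}$ to $\Pi A_{1}$, with $A_{1}\simeq \bar{F}_{k-1}^{*}$), whereas you re-derive it from the classical linear-Poisson/algebroid correspondence; this works too, with the small caveat you implicitly flag, namely that $\mathcal{A}^{k-1}(F_{k-1})$ identifies with $\Gamma(\bar{F}_{k-1}^{*})$ only after quotienting by (equivalently, restricting to $\bar{F}_{k-1}$ to kill) products of lower-weight functions, to which the bracket and anchor do descend since such products have positive-weight factors.
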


\begin{proof}
Clearly $\sT^{*}F_{k-1}$ is a $\mathcal{GL}$-bundle. Furthermore, the odd vector field $Q = - \SN{\mathcal{P}, \bullet} \in \Vect(\Pi \sT^{*}F_{k-1})$ is of bi-weight $(0,1)$, as $\mathcal{P}$ is of bi-weight $(k-1,2)$ by assumption and the canonical Schouten bracket is of bi-weight $(1-k,-1)$. This odd vector field thus, in light of Theorem \ref{theom:odd vector field}, encodes a weighted skew algebroid structure. If $\mathcal{P}$ is Poisson, then the odd vector field is clearly homological. Proposition \ref{prop:A1algebroid} then establish that  $\bar{F}_{k-1}^{*} \rightarrow M$ is a skew/Lie algebroid.
\end{proof}

The above proposition provides us with the natural generalisation of the cotangent Lie algebroid of a classical Poisson manifold. Indeed, if we restrict ourselves to the case of $k=2$, then we are dealing precisely with the cotangent Lie algebroid of a classical Poisson manifold.

 It is easy to see that, for a vector bundle $\zt:E\to M$, the higher tangent bundle $\sT^{k-1}E$ is a $\mathcal{GL}$-bundle of degree $k$, with the vector bundle structure $\sT^{k-1}\zt:\sT^{k-1}E\to\sT^{k-1}M$.

\begin{proposition}\label{prop:liftalgebroid}
If $E \rightarrow M$ is a Lie algebroid, then $\sT^{k-1}E$ is  canonically a weighted Lie algebroid of degree $k$.
\end{proposition}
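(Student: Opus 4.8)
The plan is to produce the weighted Lie algebroid structure through its homological vector field and then invoke Theorem \ref{theom:odd vector field}. As observed just above the statement, $D_k:=\sT^{k-1}E$ is a $\mathcal{GL}$-bundle of degree $k$: the Euler field $\Delta^2$ is that of the vector bundle $\sT^{k-1}\zt:\sT^{k-1}E\to\sT^{k-1}M=:B_{k-1}$, and $\Delta^1$ is the (degree $k-1$) higher-tangent weight. By Theorem \ref{theom:odd vector field} it suffices to exhibit a homological vector field of bi-weight $(0,1)$ on the graded super bundle $\Pi D_k$. Since the parity reversion acts only on the $\Delta^2$-linear fibre coordinates (those descending from the fibres of $E$) and the tangent functor preserves the parity of coordinates, there is a canonical identification $\Pi\sT^{k-1}E\simeq\sT^{k-1}(\Pi E)$, which I will use throughout.

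Recall that the Lie algebroid $E$ is encoded by a homological vector field $Q_E\in\Vect(\Pi E)$ of weight $1$ with respect to the Euler field $\nabla_E$ on $\Pi E$, i.e. $[\nabla_E,Q_E]=Q_E$ and $[Q_E,Q_E]=0$. First I would set $Q:=(\D_{\sT})^{k-1}Q_E$, the $(k-1)$-fold (super) tangent lift of $Q_E$, regarded as a vector field on $\sT^{k-1}(\Pi E)$; the iterated lift is tangent to the holonomic embeddings $\sT^{r}N\hookrightarrow\sT(\sT^{r-1}N)$, so it restricts consistently to the higher tangent bundle. That $Q$ is homological is immediate from the fundamental property $[\D_{\sT}A,\D_{\sT}B]=\D_{\sT}[A,B]$ of the tangent lift (valid in the graded setting with the appropriate signs): iterating gives $[Q,Q]=(\D_{\sT})^{k-1}[Q_E,Q_E]=0$.

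The remaining and main point is to verify that $Q$ has bi-weight exactly $(0,1)$. I would realise the iterated lift inside the full iterated tangent bundle $\sT\cdots\sT(\Pi E)$ ($k-1$ factors), which carries the $k-1$ commuting Euler vector fields $\Delta_{\sT}^{(1)},\dots,\Delta_{\sT}^{(k-1)}$ of the successive tangent functors; the higher-tangent weight $\Delta^1$ on the holonomic $\sT^{k-1}(\Pi E)$ is the restriction of their sum. A single complete tangent lift $\D_{\sT}Z$ always commutes with the Euler field of the tangent bundle onto which it is lifted (a one-line coordinate check), so iterating $[\D_{\sT}A,\D_{\sT}B]=\D_{\sT}[A,B]$ yields $[\Delta_{\sT}^{(i)},Q]=0$ for every $i$, whence $[\Delta^1,Q]=0$ and $Q$ has $\Delta^1$-weight $0$. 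For the $\Delta^2$-weight I would use that the Euler field of $\sT^{k-1}\zt$ is $\Delta^2=(\D_{\sT})^{k-1}\nabla_E$, so $[\Delta^2,Q]=(\D_{\sT})^{k-1}[\nabla_E,Q_E]=(\D_{\sT})^{k-1}Q_E=Q$, i.e. $\Delta^2$-weight $1$. Thus $Q$ is a homological vector field of bi-weight $(0,1)$, and Theorem \ref{theom:odd vector field} delivers the weighted Lie algebroid structure, with the bi-vector field and the morphism $\epsilon:\sT^{*}D_k\to\sT D_k^{*}$ recovered formally through Proposition \ref{prop:bivectors}.

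I expect the main obstacle to be precisely this weight bookkeeping, in particular confirming that the higher-tangent weight of the lifted field vanishes; the decomposition of $\Delta^1$ into the commuting Euler fields of the iterated tangent functor is what makes the algebraic argument go through and avoids a laborious computation with the iterated coordinate formula for $\D_{\sT}$. One should also check carefully the identification $\Delta^2=(\D_{\sT})^{k-1}\nabla_E$ as the Euler field of $\sT^{k-1}\zt$, which follows from functoriality of the tangent lift applied to the fibrewise scalar multiplications of $E$. As a sanity check, the degree-two case reduces to the classical fact that $\sT E$ is the tangent prolongation Lie algebroid of $E$, with the complete lift $\D_{\sT}Q_E$ as its homological vector field.
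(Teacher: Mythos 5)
Your proposal is correct and follows essentially the same route as the paper: the identification $\Pi\sT^{k-1}E\simeq\sT^{k-1}\Pi E$ together with the complete lift of the homological vector field encoding the Lie algebroid, then Theorem \ref{theom:odd vector field}. The only difference is that you carry out explicitly the bi-weight bookkeeping (realising the higher lift inside the iterated tangent bundle and commuting with the Euler fields) that the paper delegates to the reference \cite{Kolar:1988}, and your computation there is sound.
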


\begin{proof}
We have the isomorphism $\Pi (\sT^{k-1}E) \simeq \sT^{k-1}\Pi E$ which can be directly verified using local coordinates: for a geometric definition of the higher order tangent bundle of a supermanifold see \cite{Bruce:2014}. The complete lift of the homological vector field on $\Pi E$ that encodes the Lie algebroid structure to the $k-1$ th order tangent bundle encodes the weighted Lie algebroid structure. For the complete lift of vector fields to higher order tangent bundles  see for example \cite{Kolar:1988} and references therein.
\end{proof}

\subsection{Relation to $\mathcal{VB}$-algebroids}
In view of theorem \ref{theom:odd vector field} and corollary \ref{cor}, weighted Lie algebroids are generalizations of \emph{$\mathcal{VB}$-algebroids} as defined by Gracia-Saz \& Mehta  \cite{Gracia-Saz:2009} and, alternatively, by Bursztyn, Cabrera \& del Hoyo \cite{Bursztyn:2014}, or \emph{$\mathcal{LA}$-vector bundles}, as defined by Mackenzie  \cite{Mackenie:1998}.

Recall that Gracia-Saz \& Mehta  \cite{Gracia-Saz:2009} define  a $\mathcal{VB}$-algebroid as  a homological vector of weight $(0,1)$ (note they use the opposite ordering of the weights) on the double (super) vector bundle\\
\vspace{10pt}
\begin{tabular}{p{5cm} p{10cm}}
\leavevmode
\begin{center}
\begin{xy}
(0,15)*+{\Pi_{E} D}="a"; (25,15)*+{\Pi A}="b";%
(0,0)*+{E}="c";  (25,0)*+{M}="d";%
{\ar "a";"b"}; {\ar "a";"c"}; %
{\ar "b";"d"}; {\ar "c";"d"}; %
\end{xy}
\end{center}
&
\vspace{10pt}
The situation for weighted  skew algebroids is very similar, but with some subtle differences. Principally we no longer have a double vector bundle structure, but a slightly more general double graded bundle structure.
\end{tabular}

\vspace{-15pt}
\begin{tabular}{p{10cm} p{5cm}}
In particular, the corresponding diagram consists of vertical arrows which are genuine vector bundle structures while the horizontal arrows are more general  fibrations.
&
\vspace{-15pt}
\leavevmode
\begin{center}
\begin{xy}
(0,15)*+{\Pi D_{k}}="a"; (25,15)*+{\Pi A_{1}}="b";%
(0,0)*+{B_{k-1}}="c";  (25,0)*+{M}="d";%
{\ar "a";"b"}; {\ar "a";"c"}; %
{\ar "b";"d"}; {\ar "c";"d"}; %
\end{xy}
\end{center}
\end{tabular}

\noindent where we have defined $A_{1} :=  D_{k}[\Delta^{1} \leq 0]$, which is a vector bundle over $M$. As before, $B_{k-1}:=  D_{k}[\Delta^{2} \leq 0]$.

Gracia-Saz \& Mehta  \cite{Gracia-Saz:2009} showed that $\mathcal{VB}$-algebroids are equivalent to $\mathcal{LA}$-vector bundles, as defined by Mackenzie  \cite{Mackenie:1998} as a double vector bundle for which the horizontal/vertical sides are Lie algebroids and the structure maps of the vertical/horizontal vector bundles are Lie algebroid morphisms.

Alternatively, Bursztyn, Cabrera \& del Hoyo proved in \cite{Bursztyn:2014} that a $\mathcal{VB}$-algebroid can equivalently be defined as a Lie algebroid equipped with a regular homogeneity structure (vector bundle structure) that acts (via homotheties) as Lie algebroid morphisms for all $t \in \mathbb{R}$.

\begin{proposition}\label{prop:VBalgebroids}
Weighted Lie algebroids of degree two are precisely   $\mathcal{VB}$-algebroids.
\end{proposition}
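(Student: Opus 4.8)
The plan is to show that for $k=2$ the notion of a $\mathcal{GL}$-bundle degenerates into that of an honest double vector bundle, and then to read off the equivalence directly from Theorem \ref{theom:odd vector field}, matching the two definitions on the nose. The only real content is the collapse of the graded structure in degree two; everything else is bookkeeping.

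First I would observe that in a $\mathcal{GL}$-bundle $(D_{k},\Delta^{1},\Delta^{2})$ of degree $k$ the weight vector field $\Delta^{1}$ has degree $k-1$, whereas $\Delta^{2}$ is always an Euler vector field. For $k=2$ this degree equals $1$, so $\Delta^{1}$ is itself an Euler vector field. The pair $(\Delta^{1},\Delta^{2})$ therefore consists of two commuting Euler vector fields, i.e. $D_{2}$ is genuinely a double vector bundle. Consequently $B_{1}=D_{2}[\Delta^{2}\leq 0]$ and $A_{1}=D_{2}[\Delta^{1}\leq 0]$ are honest vector bundles over $M$, and $\Pi D_{2}$ is a double super vector bundle with sides $\Pi A_{1}$ and $B_{1}$, which is precisely the object on which Gracia-Saz \& Mehta place their structure.

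Next I would apply Theorem \ref{theom:odd vector field}: a weighted Lie algebroid structure of degree $2$ on $D_{2}$ is exactly a homological vector field $Q_{\epsilon}$ of bi-weight $(0,1)$ on $\Pi D_{2}$. By the previous step $\Pi D_{2}$ is the double super vector bundle $\Pi_{E}D$ of the $\mathcal{VB}$-algebroid definition with $E=B_{1}$ and $A=A_{1}$, and a homological vector field in weight $(0,1)$ is their defining datum. For the converse, a $\mathcal{VB}$-algebroid is a homological vector field of weight $(0,1)$ on some $\Pi_{E}D$; the underlying $D$ carries two commuting Euler vector fields, hence is a $\mathcal{GL}$-bundle of degree $2$, and Theorem \ref{theom:odd vector field} reads the homological vector field back as a weighted Lie algebroid. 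This gives the claimed equivalence in both directions.

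The step requiring the most care is not conceptual but conventional: one must check that the parity reversion in the Gracia-Saz \& Mehta setup is taken over the correct leg and that our $(0,1)$ normalisation agrees with theirs after accounting for the opposite ordering of the two weights, which the excerpt already flags. An alternative route sidesteps these conventions entirely: combine Corollary \ref{cor} with the Bursztyn, Cabrera \& del Hoyo characterisation. For $k=2$ the homogeneity structure appearing in Corollary \ref{cor} is \emph{regular}, since its weight vector field $\Delta^{1}$ is Euler, so a weighted Lie algebroid of degree $2$ is exactly a Lie algebroid equipped with a regular homogeneity structure acting by Lie algebroid morphisms, which is verbatim their definition of a $\mathcal{VB}$-algebroid.
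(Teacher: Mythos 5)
Your proposal is correct and follows essentially the same route as the paper: the paper's proof simply invokes Theorem \ref{theom:odd vector field} together with Theorem 3.16 of Gracia-Saz \& Mehta in both directions, and your explicit preliminary observation that a $\mathcal{GL}$-bundle of degree $2$ is an honest double vector bundle (both weight vector fields being Euler, so $B_{1}$ and $A_{1}$ are vector bundles over $M$) is exactly the identification the paper leaves implicit. Your alternative argument via Corollary \ref{cor} and the Bursztyn--Cabrera--del Hoyo characterisation, using that for $k=2$ the homogeneity structure generated by $\Delta^{1}$ is regular, is also sound and is in fact indicated in the paper's remarks opening that subsection, though the paper's own proof does not use it.
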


\begin{proof}
It is clear from the definition that a weighted  Lie algebroid of degree two is  a $\mathcal{VB}$-algebroid  via Theorem \ref{theom:odd vector field}  and Theorem 3.16 of \cite{Gracia-Saz:2009}, which established the description of $\mathcal{VB}$-algebroids in terms of homological vector fields of weight $(0,1)$. The converse also follows from the same theorems.
\end{proof}

\subsection{Weighted skew/Lie algebras}
\begin{definition}
A weighted skew/Lie algebroid, $(D_{k}, \epsilon),$ is called a \emph{weighted skew/Lie algebra} if
\begin{equation*}
D_{k}[\Delta^{1}+ \Delta^{2} \leq 0] = \{ \textnormal{pt}\}.
\end{equation*}
\end{definition}

Interestingly, this algebraic counterpart of a weighted skew/Lie algebroid still has a rich graded structure, we do not simply recover a graded Lie algebra. In particular we still have a double structure

\begin{tabular}{p{5cm} p{10cm}}
\leavevmode
\begin{center}
\begin{xy}
(0,15)*+{\Pi D_{k}}="a"; (25,15)*+{\Pi A_{1}}="b";%
(0,0)*+{B_{k-1}}="c";  (25,0)*+{\{\textnormal{pt}\}}="d";%
{\ar "a";"b"}; {\ar "a";"c"}; %
{\ar "b";"d"}; {\ar "c";"d"}; %
\end{xy}
\end{center}
&
\vspace{10pt}
 Because we consider the homogeneity space $B_{k-1}$ as a manifold, we still have a skew/Lie algebroid structure on $D_{k} \rightarrow B_{k-1}$, but now $A_{1}$ is a genuine skew/Lie algebra.
 \end{tabular}

\begin{example}
To illustrate the general situation, consider a degree 2  weighted skew/Lie algebra. Let us on $\Pi D_{2}$ employ homogeneous coordinates
\begin{equation*}
(\underbrace{y^{a}_{1}}_{(1,0)}, ~ \underbrace{\zx_{1}^{\nu}}_{(0,1)}, \underbrace{\theta^{i}_{2}}_{(1,1)}).
\end{equation*}
Then, the most general bi-weight $(0,1)$ Grassmann odd vector field on   $\Pi D_{2}$ must be of the form
\begin{equation*}
Q = (\zx_{1}^{\nu} y_{1}^{a}P_{a \nu}^{b} + \theta_{2}^{j}P_{j}^{b})\frac{\partial}{\partial y^{b}_{1}} - \frac{1}{2} \zx_{1}^{\mu}\zx_{1}^{\nu}P_{\nu \mu }^{\rho} \frac{\partial}{\partial \zx_{1}^{\rho}} - (\theta_{2}^{j}\zx^{\nu}_{1}P_{\nu j}^{i} + \frac{1}{2}\zx^{\nu}_{1} \zx^{\mu}_{1}y^{a}_{1}P_{a\mu \nu}^{i})\frac{\partial}{\partial \theta^{i}_{2}}.
\end{equation*}
Just by inspection we see that there is skew/Lie algebroid structure $D_{2} \rightarrow B_{1}$ and a skew/Lie algebra structure on $A_{1}$.
\end{example}

\begin{proposition}
If ${\mathfrak{g}}$ is a Lie algebra and $V$ is an arbitrary vector space, then ${\sT} {\mathfrak{g}} \oplus V$  is canonically a weighted algebra of degree 2.
\end{proposition}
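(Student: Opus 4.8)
The plan is to produce the bi-weight-$(0,1)$ homological vector field demanded by Theorem \ref{theom:odd vector field}, feeding the Lie structure of $\mathfrak{g}$ in through its Chevalley--Eilenberg differential and using $V$ only to supply the degree-one base. First I would realize $\mathfrak{g}$ as a Lie algebroid over a point; by Proposition \ref{prop:liftalgebroid}, $\sT\mathfrak{g}$ is then a $\mathcal{GL}$-bundle of degree $2$ with linear structure $\sT\zt:\sT\mathfrak{g}\to\sT(\mathrm{pt})=\mathrm{pt}$. In fibre coordinates $(e^{a},\dot e^{a})$ the jet-weight $\Delta^{1}=\dot e^{a}\partial_{\dot e^{a}}$ and the lifted linear weight $\Delta^{2}=e^{a}\partial_{e^{a}}+\dot e^{a}\partial_{\dot e^{a}}$ give $e^{a}$ bi-weight $(0,1)$ and $\dot e^{a}$ bi-weight $(1,1)$, so that $A_{1}=\sT\mathfrak{g}[\Delta^{1}\le 0]\simeq\mathfrak{g}$ while $B_{1}=\sT\mathfrak{g}[\Delta^{2}\le 0]=\mathrm{pt}$. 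To make the double structure nondegenerate I adjoin $V$ with coordinates $v^{i}$ at bi-weight $(1,0)$; then $D_{2}:=\sT\mathfrak{g}\oplus V$ carries coordinates $(v^{i},e^{a},\dot e^{a})$ of bi-weights $(1,0),(0,1),(1,1)$, all of positive total weight, whence $D_{2}[\Delta^{1}+\Delta^{2}\le 0]=\mathrm{pt}$, $B_{1}=V$ and $A_{1}=\mathfrak{g}$ --- exactly the shape of a degree-two weighted algebra.

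Next I would write down $Q$. On $\Pi\mathfrak{g}$ the bracket of $\mathfrak{g}$ is encoded by the Chevalley--Eilenberg differential $Q_{\mathrm{CE}}$, an odd vector field of weight one with $[Q_{\mathrm{CE}},Q_{\mathrm{CE}}]=0$. Through the identification $\Pi\sT\mathfrak{g}\simeq\sT\Pi\mathfrak{g}$ of Proposition \ref{prop:liftalgebroid} I take $Q:=\D_{\sT}Q_{\mathrm{CE}}$, the complete (tangent) lift, extended by zero along $V$. A bi-weight count shows $Q$ is odd of bi-weight exactly $(0,1)$: its purely Chevalley--Eilenberg part lives on the $(0,1)$-coordinates, while the lifted part is linear in the $(1,1)$-coordinates and realizes the adjoint action of $A_{1}=\mathfrak{g}$ on the $(1,1)$-piece (again $\mathfrak{g}$), so that $\sT\mathfrak{g}$ appears as the tangent Lie algebra $\mathfrak{g}\ltimes\mathfrak{g}$. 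Because $V$ enters trivially, $Q$ carries no $\partial_{v}$-term.

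Finally I would check that $Q$ is homological and read off the algebra. Naturality of the complete lift gives $[\D_{\sT}Q_{\mathrm{CE}},\D_{\sT}Q_{\mathrm{CE}}]=\D_{\sT}[Q_{\mathrm{CE}},Q_{\mathrm{CE}}]=0$, and the trivial $V$-extension leaves this untouched; hence by Theorem \ref{theom:odd vector field} the pair $(D_{2},Q)$ is a weighted Lie algebroid, and since its base is a point, a weighted (Lie) algebra of degree $2$. Proposition \ref{prop:A1algebroid} then identifies $A_{1}\simeq\mathfrak{g}$ with its original bracket, while $D_{2}\to B_{1}=V$ is the totally intransitive Lie algebroid $V\times\sT\mathfrak{g}$ modelled on the tangent Lie algebra, which completes the weighted-algebra diagram. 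The hard part is not the homological identity --- that comes for free from functoriality of the lift --- but the weight bookkeeping: separating the jet-weight $\Delta^{1}$ from the fibre-linear weight induced by $\sT\zt$, confirming that $\D_{\sT}Q_{\mathrm{CE}}$ lands in bi-weight $(0,1)$ rather than $(1,1)$, and checking that placing $V$ at bi-weight $(1,0)$ is the canonical choice producing $B_{1}=V$.
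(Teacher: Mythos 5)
Your proposal is correct and follows essentially the same route as the paper: the same assignment of bi-weights $(1,0),(0,1),(1,1)$ to the coordinates on $V$, $\mathfrak{g}$ and the fibre of $\sT\mathfrak{g}$, and the same homological vector field, namely the complete tangent lift of the Chevalley--Eilenberg differential $\rm{d}_{\mathfrak{g}}\in\Vect(\Pi\mathfrak{g})$ to $\sT\Pi\mathfrak{g}$, included trivially into $\Vect(\sT\Pi\mathfrak{g}\oplus V)$. Your additional explicit checks (the bi-weight count for the lift, homologicity via $[\D_{\sT}Q_{\mathrm{CE}},\D_{\sT}Q_{\mathrm{CE}}]=\D_{\sT}[Q_{\mathrm{CE}},Q_{\mathrm{CE}}]=0$, and the identification of the tangent Lie algebra $\mathfrak{g}\ltimes\mathfrak{g}$) are consistent elaborations of what the paper leaves implicit.
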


\begin{proof}
First, let us equip $ \sT \mathfrak{g} \oplus V$ with homogeneous coordinates
\begin{equation*}
(\underbrace{u^{a}}_{(0,1)}, ~ \underbrace{\dot{u}^{b}}_{(1,1)}, ~\underbrace{{v}^{i}}_{(1,0)}),
\end{equation*}
where the ordering of the assignment of bi-weight is set by our earlier conventions. Then it is not hard to see that $\sT \mathfrak{g} \oplus V$ is a $\mathcal{GL}$-bundle, in fact it is a double vector bundle. Now, let us shift the parity, which is defined by the second entry of the bi-weight, and consider

\begin{tabular}{p{5cm} p{10cm}}
\leavevmode
\begin{center}
\begin{xy}
(0,15)*+{\sT \Pi \mathfrak{g} \oplus V}="a"; (25,15)*+{\Pi \mathfrak{g}}="b";%
(0,0)*+{V}="c";  (25,0)*+{\{ \textnormal{pt}\}}="d";%
{\ar "a";"b"}; {\ar "a";"c"}; %
{\ar "b";"d"}; {\ar "c";"d"}; %
\end{xy}
\end{center}
&
\vspace{10pt}
 As $\mathfrak{g}$ is a Lie algebra, there exists a homological vector field $\rm{d}_{\mathfrak{g}} \in \Vect(\Pi \mathfrak{g})$ of weight one. In line with our assignment of the bi-weight this homological vector field is $(0,1)$.
 \end{tabular}

The homological vector field encoding the weighted Lie algebra structure is then just the complete lift of $\rm{d}_{\mathfrak{g}}$ to  a vector field on $\sT \Pi \mathfrak{g}$ which is naturally included in $\Vect(\sT \Pi \mathfrak{g} \oplus V)$.
 \end{proof}
We will meet a less trivial example of a weighted Lie algebra in Section \ref{sec:examples}.

\subsection{Weighted algebroid structures on the linearisation of a graded bundle}
Our initial motivation for defining weighted algebroids was to find some alternative notion of a ``higher Lie algebroid" on a graded bundle. To do this we place the further condition on the underlying $\mathcal{GL}$-bundle structure to be the linearisation of a graded bundle. Although this may seem a very heavy restriction natural examples are still plentiful.

\begin{definition} We will say that a graded bundle $F_{k}$ \emph{carries the structure of a weighted algebroid} if and only if  there exists a graded map $\epsilon: \sT^{*}D(F_{k}) \rightarrow \sT D^{*}(F_{k})$, such that $(D(F_{k}), \epsilon)$ is a weighted algebroid. Furthermore, we will also say that $\epsilon$ is a \emph{weighted algebroid for the graded bundle} $F_{k}$ if and only if  $(D(F_{k}), \epsilon)$ is a weighted algebroid.
\end{definition}

We also have a little more freedom in how we understand the anchor map.

\begin{definition}
Let $\epsilon$ be a weighted algebroid for $F_{k}$, then we define the series of \emph{graded bundle anchors} as
\begin{equation*}
\hat\rho_{q} : F_{k} \rightarrow \sT F_{q-1},  \hspace{30pt} (1\leq q \leq k)
\end{equation*}
where $\hat\rho_{q} :=  \sT \tau^{k-1}_{q-1} \circ \rho \circ \iota_{F_k}$ and for consistency we take $\tau^{k-1}_{k-1} := id_{F_{k-1}}$.
\end{definition}

In the two extremes we have $\hat\rho_{k}: F_{k} \rightarrow \sT F_{k-1}$ and $\hat\rho_{1}: F_{k} \rightarrow \sT M$. Thus, the first order anchor mimics the standard anchor of a Lie algebroid very closely.

\begin{proposition}\label{prop:F1algebroid}
If the graded bundle $F_{k}$ carries the structure of a  weighted skew/Lie algebroid, then $F_{1}\rightarrow M$ has the structure of a skew/Lie algebroid.
\end{proposition}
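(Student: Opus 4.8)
The plan is to reduce the statement to Proposition~\ref{prop:A1algebroid}, which already asserts that for any weighted skew/Lie algebroid $(D_{k},\epsilon)$ the vector bundle $A_{1} := D_{k}[\Delta^{1} \leq 0] \to M$ carries a skew/Lie algebroid structure. Applying this with $D_{k} = D(F_{k})$, the whole content of the proposition collapses to the purely geometric claim that, for a linearisation, the vector bundle $A_{1} = D(F_{k})[\Delta^{1} \leq 0]$ is canonically isomorphic to $F_{1} \to M$. Once this identification is in place, $F_{1}$ simply inherits the skew/Lie algebroid structure transported from $A_{1}$, and the skew versus Lie distinction is preserved because Proposition~\ref{prop:A1algebroid} produces a genuine Lie bracket exactly when $\epsilon$ is a weighted Lie algebroid (equivalently, when the associated $Q_{\epsilon}$ is homological).

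The identification I would establish by direct inspection of homogeneous coordinates. Recall from the definition of the linearisation that $D(F_{k})$ carries coordinates $(x^{A}, y^{a}_{w}, \dot{y}^{b}_{w}, \dot{z}^{i}_{k})$ with $1 \leq w < k$, and that the first weight vector field is
\[
\Delta^{1}_{D(F_{k})} = \sum_{1\leq w<k} w\, y^{a}_{w} \frac{\partial}{\partial y^{a}_{w}} + \sum_{1\leq w<k}(w-1)\,\dot{y}^{a}_{w} \frac{\partial}{\partial \dot{y}^{a}_{w}} + (k-1)\,\dot{z}^{i}_{k} \frac{\partial}{\partial \dot{z}^{i}_{k}}.
\]
Its weight-zero coordinates are exactly $x^{A}$ (bi-weight $(0,0)$) and $\dot{y}^{b}_{1}$ (bi-weight $(0,1)$, since $w-1=0$ for $w=1$), all remaining coordinates having strictly positive first weight. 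Hence $A_{1} = D(F_{k})[\Delta^{1} \leq 0]$ is the vector bundle over $M$ with linear fibre coordinates $\dot{y}^{b}_{1}$. Setting $w=1$ in the transformation law (\ref{eqn:dottedtranslaws}) makes the polynomial correction vanish, since no partition $w_{1}+\cdots+w_{n}=1$ into $n>1$ positive parts exists, leaving $\dot{y}^{a'}_{1} = \dot{y}^{b}_{1}\, T_{b}^{\:\: a'}(x)$. This is precisely the cocycle governing the linear fibration $F_{1}\to M$, whose fibre coordinates $y^{a}_{1}$ obey $y^{a'}_{1} = y^{b}_{1} T_{b}^{\:\: a'}(x)$.

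The main (and essentially only) point needing care is to confirm that $\dot{y}^{a}_{1} \leftrightarrow y^{a}_{1}$ furnishes a \emph{canonical} vector bundle isomorphism $A_{1}\simeq F_{1}$ over $M$, rather than a choice-dependent splitting; this is guaranteed by the fact that $A_{1}$ and $F_{1}$ share the same transition matrices $T_{b}^{\:\: a'}(x)$, namely those of $F_{1}\to M$ itself. Transporting the bracket on $A_{1}$ provided by Proposition~\ref{prop:A1algebroid} along this isomorphism then equips $F_{1}\to M$ with the desired skew/Lie algebroid structure, which completes the argument. (One may note that for the degenerate case $k=1$ the bundle $F_{1}=F_{k}$ already coincides with $A_{1}=D(F_{1})$, so the claim is immediate.)
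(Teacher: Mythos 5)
Your proposal is correct and follows essentially the same route as the paper: the paper's proof likewise reduces the statement to Proposition~\ref{prop:A1algebroid} together with the identifications $B_{k-1}\simeq F_{k-1}$ and $A_{1}\simeq F_{1}$ for a linearisation, which you simply verify in more detail via the coordinate transformation laws. Your explicit check that the correction term in (\ref{eqn:dottedtranslaws}) vanishes for $w=1$, making $\dot{y}^{a}_{1}\leftrightarrow y^{a}_{1}$ a canonical isomorphism, is exactly the content the paper leaves implicit.
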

\begin{proof}
This follows directly from Proposition \ref{prop:A1algebroid} and the fact that for the linearisation of a graded bundle  $B_{k-1} \simeq F_{k-1}$ and $A_{1} \simeq F_{1}$.
\end{proof}

One would expect a higher order tangent bundle of manifold to be a canonical example of a  `higher Lie algebroid',  given than the tangent bundle of a manifold is naturally a Lie algebroid. Indeed via the linearisation we arrive at the following theorem.
\begin{theorem}\label{theom:highertangent}
The k-th order tangent bundle $\sT^{k}M$ of a manifold $M$ carries a canonical weighted Lie algebroid structure.
\end{theorem}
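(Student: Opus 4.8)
The plan is to reduce the statement to Example~\ref{exm:Tangentalgebroid} via the identification of the linearisation of $\sT^{k}M$ with the tangent bundle of $\sT^{k-1}M$. Concretely, I would first establish the canonical isomorphism of $\mathcal{GL}$-bundles
\[
D(\sT^{k}M)\;\simeq\;\sT\big(\sT^{k-1}M\big),
\]
which generalises the degree-two case $D(\sT^{2}M)\simeq \sT(\sT M)$ recorded above. Granting this, $\sT^{k-1}M$ is a graded bundle of degree $k-1$, so by Example~\ref{exm:Tangentalgebroid} its tangent bundle $\sT(\sT^{k-1}M)$ carries a canonical weighted Lie algebroid structure of degree $k$, the relevant homological vector field being the de Rham differential $Q=\zx^{J}_{u+1}\,\partial/\partial x^{J}_{u}$ of bi-weight $(0,1)$ on $\Pi\,\sT(\sT^{k-1}M)$. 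Transporting $Q$ along the above isomorphism and invoking Theorem~\ref{theom:odd vector field} produces a morphism $\epsilon:\sT^{*}D(\sT^{k}M)\to \sT D^{*}(\sT^{k}M)$ making $(D(\sT^{k}M),\epsilon)$ a weighted Lie algebroid; by definition this is exactly what it means for $\sT^{k}M$ to carry a weighted Lie algebroid structure.

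The bulk of the work is therefore the isomorphism $D(\sT^{k}M)\simeq\sT(\sT^{k-1}M)$. The base of $\sT^{k}M$ under the tower \eqref{eqn:fibrations} is $(\sT^{k}M)_{k-1}=\sT^{k-1}M$, so $D(\sT^{k}M)$ is by construction a vector bundle over $\sT^{k-1}M$; the same is true of $\sT(\sT^{k-1}M)$. I would verify the claim either directly in the homogeneous coordinates $(x,x^{(1)},\dots,x^{(k)})$ on $\sT^{k}M$ --- projecting out the top coordinate $x^{(k)}$ from $V\sT^{k}M$ and matching the $k$ surviving dotted coordinates, of first weights $0,1,\dots,k-1$, with the fibre coordinates $\delta x,\delta x^{(1)},\dots,\delta x^{(k-1)}$ of $\sT(\sT^{k-1}M)$ --- or, more structurally, by appealing to the classical holonomic embedding $\sT^{k}M\subset \sT(\sT^{k-1}M)$ together with Theorem~\ref{theom:linearisation}. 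For the latter, $\sT(\sT^{k-1}M)$ is a $\mathcal{GL}$-bundle whose fibre-coordinate changes are obtained by differentiating those on $\sT^{k-1}M$; hence the tensors appearing in the top-weight transformation law are symmetric, so the $\mathcal{GL}$-bundle is \emph{symmetric} in the sense of Theorem~\ref{theom:linearisation}, and its graded subbundle of holonomic vectors (Theorem~\ref{theom:embedding}) is precisely $\sT^{k}M$. Either route identifies $\sT(\sT^{k-1}M)$ with the linearisation of $\sT^{k}M$ canonically.

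I expect the main obstacle to be precisely this identification, and in particular the bookkeeping of the two weights under differentiation: one must check that the shift of the first weight built into the linearisation (the passage $(w,1)\mapsto(w-1,1)$ for the dotted fibre coordinates) matches the natural bi-weight on $\sT(\sT^{k-1}M)$, and that the symmetry of the top transformation tensors --- inherited from differentiating the polynomial coordinate changes \eqref{eqn:translaws} of $\sT^{k}M$ --- is exactly the symmetry condition required by Theorem~\ref{theom:linearisation}. Once the isomorphism is in place the weighted Lie algebroid structure is immediate and manifestly canonical, since both the tangent-lift de Rham differential of Example~\ref{exm:Tangentalgebroid} and the linearisation functor are natural in $M$.
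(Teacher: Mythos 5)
Your proposal is correct and takes essentially the same route as the paper: the paper's proof consists precisely of the identification $D(\sT^{k}M)\simeq \sT(\sT^{k-1}M)$ followed by recognising the canonical de Rham differential on $\Pi\sT(\sT^{k-1}M)$ as the weight-$(0,1)$ homological vector field of Theorem \ref{theom:odd vector field}. Your additional verification of the isomorphism via the symmetric $\mathcal{GL}$-bundle criterion of Theorem \ref{theom:linearisation} and the bi-weight bookkeeping simply supplies detail the paper leaves implicit.
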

\begin{proof}
The linearisation of the k-th order tangent bundle is  $D(\sT^{k}M) \simeq \sT (\sT^{k-1}M)$ and we  identify the homological vector field encoding the weighted  Lie algebroid as the canonical de Rham differential on $\Pi \sT (\sT^{k-1}M)$.
\end{proof}

\begin{remark}
Recall that the tangent bundle and canonical Lie bracket of vector fields can be thought of as the Lie algebroid associated with the pair groupoid (or the fundamental groupoid). One takes a reduction of the tangent bundle of the pair groupoid, which gives us the tangent bundle on one of the factors of the pair groupoid. The weighted Lie algebroid structure on the linearisation of a k-th order tangent bundle should be viewed in a similar light. That is, as the linearisation of the reduction of the k-th order tangent bundle of the pair groupoid.  We leave details to the reader and will consider other examples in Section \ref{sec:examples}.
\end{remark}

\begin{example}
Let us consider  $F_{2}$, which we equip with local homogeneous coordinates $(x^{A}, y^{a} , z^{i})$ of weight $0,1$ and $2$ respectively. The graded bundle $F_{2}$ carries the structure of a weighted Lie algebroid if $(D(F_{2}), \epsilon)$ is a weighted Lie algebroid. Let us then equip $\Pi D(F_{2})$ with natural homogeneous coordinates

\begin{equation*}
(\underbrace{x^{A}}_{(0,0)}, ~ \underbrace{y^{a}}_{(1,0)}, ~ \underbrace{\zx^{b}}_{(0,1)}, ~ \underbrace{\theta^{i}}_{(1,1)}),
\end{equation*}

\noindent  one should note that we have a double (super) vector bundle. The homological vector field that encodes the weighted Lie algebroid structure is of the local form:

\begin{equation*}
Q_{\epsilon} = \zx^{a}P_{a}^{A}\frac{\partial}{\partial x^{A}}- \frac{1}{2!} \zx^{a} \zx^{b}P_{ba}^{c} \frac{\partial}{\partial \zx^{c}} + (\zx^{a}y^{b} \bar{P}_{ba}^{c} + \theta^{i}P_{i}^{c}) \frac{\partial}{\partial y^{c}} -\frac{1}{2!}( 2 \theta^{j}\zx^{a}P_{aj}^{i} + \zx^{a} \zx^{b}y^{c}P_{cba}^{i}) \frac{\partial}{\partial \theta^{i}} \in \Vect(\Pi D(F_{2})).
\end{equation*}
\noindent It is easy to see that the first two terms of this homological vector field correspond to a Lie algebroid structure on $F_{1} \rightarrow M$. Furthermore it is clear that we have a pair of anchors $\rho_{2}: F_{2} \rightarrow \sT F_{1}$ and $\rho_{1}: F_{2} \rightarrow \sT M$. It is sufficient to describe only the anchor $\rho_{2}$ in any detail as the anchor $\rho_{1}$ follows via natural projection. In obvious coordinates on $\sT F_{1}$ we have
\begin{equation*}
(\rho_{2})^{*}(\delta x^{A}) = y^{a}P_{a}^{A}(x), \hspace{25pt} (\rho_{2})^{*}(\delta y^{a}) = 2 z^{i}P_{i}^{a}(x)  + y^{b}y^{c}\bar{P}_{cb}^{a}(x).
\end{equation*}
\end{example}

\section{Further examples of weighted Lie algebroids}\label{sec:examples}
\subsection{Recap of given examples}
In the previous section several examples of weighted Lie algebroids were given, let us just list these examples again for convenience:
\begin{enumerate}
\item $\mathcal{VB}$-algebroids, and so $\mathcal{LA}$-vector bundles.
\item The cotangent bundle of a graded bundles of degree $k-1$ equipped with a Poisson structure of weight $k-1$, this includes the standard cotangent Lie algebroid of a classical Possion manifold.
\item Tangent bundles of a graded bundle, which  of course includes the tangent bundle of a vector bundle.
\item Higher order tangent bundles of the total space of a Lie algebroid.
\item The linearisation of a higher order tangent bundle.
\item The Atiyah algebroid of a graded principal bundle.
\end{enumerate}

\noindent Thus, weighted algebroids represent a unification of several structures found in the theory of Lie algebroids and graded bundles. All the examples met so far are a natural mix of examples of graded bundles and Lie algebroids: by thinking of weighted Lie algebroids as algebroid objects in the category of graded bundles the examples are quite obvious, but none the less deserve stating clearly. More interesting and potentially useful  examples can be found via reductions of higher tangent bundles.

\subsection{Reductions of  higher order tangent bundles of Lie groups}\label{group-red}
There are two extreme examples of Lie algebroids. The first one is the canonical algebroid on a tangent bundle $\tau_M: \sT M\rightarrow M$ with the bracket of vector fields and the anchor being the identity map. The second example is the Lie algebra $\mathfrak{g}$ of a Lie group $G$ regarded as a vector bundle over one point manifold with the anchor being the zero map. In the first example the corresponding double vector bundle morphism $\ze_{M}: \sT^\ast\sT M\rightarrow \sT\sT^\ast M$ is the inverse of the Tulczyjew morphism $\alpha_{M}$. In the second example $\ze_\mathfrak{g}$ reads

$$\ze_{\mathfrak{g}}: \mathfrak{g}\times\mathfrak{g}^\ast\ni(X,\zx)\longmapsto (\zx, ad_X^\ast\zx)\in  \mathfrak{g}^\ast\times\mathfrak{g}^\ast. $$
The map $\ze_{\mathfrak{g}}$ can be obtained from $\ze_{G}$ by reduction with respect to the group action by left translations. It corresponds of course to the Lie algebra bracket $[\cdot,\cdot]_\mathfrak{g}$.
Let us examine this example in more details. Tangent and cotangent bundles of a Lie group $G$ can be trivialized using lifted action of the group on itself by left translations. It means that we have $\sT G\simeq G\times \mathfrak{g}$ and $\sT^\ast G\simeq G\times \mathfrak{g^\ast}$. Iterated tangent and cotangent bundles can be consequently trivialised: $\sT^\ast\sT G\simeq G\times \mathfrak{g}\times \mathfrak{g}^\ast\times \mathfrak{g}^\ast$ and $\sT\sT^\ast G\simeq G\times \mathfrak{g}^\ast\times \mathfrak{g}\times \mathfrak{g}^\ast$. In these trivialisations the canonical map $\ze_{G}$ reads

$$\ze_G: G\times \mathfrak{g}\times \mathfrak{g}^\ast\times \mathfrak{g}^\ast\ni(g, X, \zf, \psi)\longmapsto (g, \psi, X, \zf+ad_X^\ast\psi)\in G\times \mathfrak{g}^\ast\times \mathfrak{g}\times \mathfrak{g}^\ast.$$
There exists a coisotropic submanifold $K\in\sT^\ast\sT G$ elements of which are differentials of $G$-invariant functions on $\sT G$. In other words, $K$ is the annihilator of the distribution on $\sT G$ spanned by fundamental vector fields of the $G$-action on $\sT G$ lifted from the regular action on $G$. In the trivialisation as above, $K=\{(g, X, \zf, \psi):\; \zf=0\}$. The target space of symplectic reduction of $\sT^\ast\sT G$ with respect to $K$ can be identified with $\sT^\ast \mathfrak{g}\simeq \mathfrak{g}\times\mathfrak{g}^\ast$. The reduction will be denoted by $\rho$. On the other hand, according to the trivialisation $\sT^\ast G\simeq G\times \mathfrak{g}^\ast$ there exists the projection $p_2:\sT^\ast G\rightarrow \mathfrak{g}^\ast$. The $\ze_\mathfrak{g}$ morphism can be obtained as

$$\ze_\mathfrak{g}=\sT p_2\circ\ze_{G}\circ\rho^{-1}.$$
The same method can be used to derive the canonical weighted Lie algebra structure on the linearisation of the graded bundle $\mathfrak{g}_2=\sT^2 G\slash G$ over one-point manifold. Since $\mathfrak{g}_2$ is a reduction of the graded bundle $\sT^2 G\rightarrow G$, we expect that the weighted Lie algebra structure on the linearisation of $\mathfrak{g}_2$ is a reduction of the canonical algebroid structure on the linearisation of $\sT^2G$, i.e. $\sT\sT G$. The canonical algebroid structure on $\sT\sT G$ is the morphism $\ze_{\sT G}$, the inverse of Tulczyjew morphism $\alpha_{\sT G}$, i.e. the map
$$\ze_{\sT G}:\;\sT^\ast\sT\sT G\longrightarrow \sT\sT^\ast\sT G\,,$$
which in the trivialisations reads
\begin{align*} G\times \mathfrak{g}\times  \mathfrak{g}\times  \mathfrak{g}\times  \mathfrak{g}^\ast\times  \mathfrak{g}^\ast\times  \mathfrak{g}^\ast\times \mathfrak{g}^\ast&\longrightarrow
\ze_{\sT G}: G\times \mathfrak{g}\times  \mathfrak{g}^\ast\times  \mathfrak{g}^\ast\times  \mathfrak{g}\times  \mathfrak{g}\times  \mathfrak{g}^\ast\times \mathfrak{g}^\ast\,, \\
 (g,X,Y,Z, a,b,c,d)&\longmapsto (g,X,c,d,Y,Z, a+ad_Y^\ast c, b)\,.\end{align*}
Using the lifted group action  and taking care on the weights, we can identify $\sT^2G$ with $G\times \mathfrak{g}[1]\times \mathfrak{g}[2]$, and $\mathfrak{g}_2$ with $\mathfrak{g}[1]\times \mathfrak{g}[2]$. Then, $D(\sT^2 G)$ is identified with $G\times \mathfrak{g}[(1,0)]\times  \mathfrak{g}[(0,1)]\times  \mathfrak{g}[(1,1)]$ and  $D(\mathfrak{g}_2)$ with $\mathfrak{g}[(1,0)]\times \mathfrak{g}[(0,1)]\times \mathfrak{g}[(1,1)]$. The linearisation $D(\mathfrak{g}_2)$ is a vector bundle over the first factor. The reduction procedure, as before, consists of a symplectic reduction of $\sT^\ast\sT\sT G$, which in the trivialisation means putting the fifth factor equal to zero and omitting the first factor, and the appropriate projection from $\sT\sT^\ast\sT G$ (omitting the first and fifth factors). In the trivialisation the reduced $\ze_2$ reads

\begin{align}\ze_2: \sT^\ast D(\mathfrak{g}_2)\simeq\mathfrak{g}\times  \mathfrak{g}\times  \mathfrak{g}\times  \mathfrak{g}^\ast\times  \mathfrak{g}^\ast\times  \mathfrak{g}^\ast&\longrightarrow
 \mathfrak{g}\times  \mathfrak{g}^\ast\times  \mathfrak{g}^\ast\times  \mathfrak{g}\times  \mathfrak{g}^\ast\times  \mathfrak{g}^\ast\simeq \sT D^\ast(\mathfrak{g}_2)\,, \\
(X,Y,Z,b,c,d)& \longmapsto(X,c,d,Z, ad_Y^\ast c, b)\,.\end{align}
Here, $(X,Y,Z,b,c,d)$ have weights $(1,0,0),(0,1,0),(1,1,0),(0,1,1),(1,0,1),(0,0,1)$, respectively.
The above morphism corresponds to the following bracket of sections of the bundle $D(\mathfrak{g}_2)\rightarrow\mathfrak{g}$:
\begin{equation}\label{reduced-bracket}[(Y_1, Z_1),(Y_2, Z_2)]_2=([Y_1, Y_2]_\mathfrak{g}+Z_1(Y_2)-Z_2(Y_1), \, Z_1(Z_2)-Z_2(Z_1)).\end{equation}
Functions $Z_i(X)$ and $Y_i(X)$ can be treated as both, vector fields on $\mathfrak{g}$ and functions on $\mathfrak{g}$ with values in $\mathfrak{g}$. In the notation $Z_i(Y_j)$, the symbol $Z_i$ denotes the vector field $Z_i$ acting on the function $Y_j$. One can see that the anchor map for $\ze_2$ is
\begin{equation}\rho_2: D(\mathfrak{g}_2)\ni (X,Y,Z)\longmapsto (X,Z)\in \mathfrak{g}[(1,0)]\times\mathfrak{g}[(1,1)]\simeq \sT \mathfrak{g}[1].\end{equation}
 The second component in $[\cdot,\cdot]_2$ is then just the bracket of vector fields $Z_1$ and $Z_2$.

The analogous construction can be done for $\sT^kG$ and $\mathfrak{g}_k=\sT^k G\slash G$.
Let us remark that what we get is exactly the Atiayh algebroid of the graded principal bundle $\sT^{k-1}G\to \sT^{k-1}G/G=\mathfrak{g}_{k-1}$ with the structure group $G$ (see example \ref{Atiyah}).
Note, however, that as the Lie algebra $\mathfrak{g}$ is usually interpreted in terms of left-invariant vector fields, our principal bundles here will be equipped with the left, instead of the right action of $G$. Then, $G$ acts only on the first factor of the left-trivialization $G\ti\mathfrak{g}_{k}\simeq\sT^kG$.

The sections of the Atiyah algebroid $\sT\sT^{k-1}G/G\to\mathfrak{g}_{k-1}$ are $G$-invariant vector fields on $$\sT^{k-1}G=G\ti\mathfrak{g}[1]\ti\cdots\ti\mathfrak{g}[k-1]=G\ti\mathfrak{g}_{k-1}\,,$$
which can be identified with sections of the (trivial) vector bundle
$$(\mathfrak{g}_{k-1}\ti\mathfrak{g})\ti_{\mathfrak{g}_{k-1}}\sT\mathfrak{g}_{k-1}\to\mathfrak{g}_{k-1}\,,$$
i.e. functions $$(Y,Z):\mathfrak{g}_{k-1}\to \mathfrak{g}\ti\mathfrak{g}_{k-1}\,.$$
Moreover, $Y(X)$'s commute as invariant vector fields on $G$, i.e. elements of the Lie algebra $\mathfrak{g}$, and $Z(X)$'s as vector fields on $\mathfrak{g}_{k-1}$.  This gives us the formula  the for the Lie algebroid bracket formally like (\ref{reduced-bracket}):
\begin{equation}\label{reduced-bracket1}[(Y_1, Z_1),(Y_2, Z_2)]_k=([Y_1, Y_2]_\mathfrak{g}+Z_1(Y_2)-Z_2(Y_1), \, Z_1(Z_2)-Z_2(Z_1))\,,\end{equation}
with the only difference that now $Y(X)\in\mathfrak{g}$, $Z(X)\in \mathfrak{g}_{k-1}$, and $X\in \mathfrak{g}_{k-1}$.

\medskip

\noindent \textbf{As a homological vector field.} From the above discussion it is clear that $\Pi D(\sT^{2}G\slash G) = \mathfrak{g}[(1,0)] \times \Pi \mathfrak{g}[(0,1)] \times \Pi \mathfrak{g}[(1,1)]$. Let us equip this supermanifold with local coordinates
 \begin{equation*}
 (\underbrace{y^{a}}_{(1,0)}, ~ \underbrace{\zx^{b}}_{(0,1)}, ~ \underbrace{dy^{c}}_{(1,1)}),
 \end{equation*}
\noindent where we have correctly interpreted the third collection of coordinates as the differentials of the first set of coordinates. Then, \emph{via} inspection we see that the homological vector field encoding the weighted Lie algebra structure on the linearisation of $\sT^{2}G\slash G$ is given by
\begin{equation}
Q = dy^{a}\frac{\partial}{\partial y^{a}} - \frac{1}{2}\zx^{a} \zx^{b}P_{ba}^{c} \frac{\partial}{\partial \zx^{c}},
\end{equation}
\noindent where $P_{ba}^{c}$ is (up to a sign) the structure constant of the Lie algebra $(\mathfrak{g}, [\cdot, \cdot]_{\mathfrak{g}})$. Generalising this to higher tangent bundles of Lie groups is clear. To see this, consider
$$\Pi \mathfrak{g}_{k} = \Pi \sT^{k}G \slash G = \displaystyle \times_{i=1}^{k-1} \mathfrak{g}[(i,0)] \times \Pi \mathfrak{g}[(0,1)] \times_{j=1}^{k-1} \Pi \mathfrak{g}[(j,1)]$$ which we equip with homogeneous local coordinates

\begin{equation*}
(\underbrace{y^{a}_{r}}_{(r,0)}, ~ \underbrace{\zx^{b}}_{(0,1)}, ~ \underbrace{dy^{c}_{r+1}}_{(r,1)}),
\end{equation*}
\noindent where $1 \leq r < k$. The homological vector field encoding the weighted Lie algebra on $\mathfrak{g}_{k}$ is given by

\begin{equation}
Q = \sum_{r=1}^{k-1}dy^{a}_{r+1}\frac{\partial}{\partial y^{a}_{r}} -  \frac{1}{2}\zx^{a} \zx^{b}P_{ba}^{c} \frac{\partial}{\partial \zx^{c}}.
\end{equation}

The above homological vector field can then be interpreted as the de Rham differential on $\Pi \sT \left(\sT^{k-1}G\slash G\right)$ plus the standard Chevalley--Eilenberg differential on $\Pi \mathfrak{g}$.  The form of the homological vector field is somewhat expected, as we are dealing with an Atiayh algebroid.

\subsection{Reductions of higher order tangent bundles of Lie groupoids}
Similar to the case of a Lie group, reductions of higher order tangent bundles of Lie groupoids give rise to a class of weighted Lie algebroids. There are of course technicalities that do not arise in the group case that require careful handling in the groupoid case. For an introduction to the theory of Lie groupoids the reader can consult Mackenzie \cite{Mackenzie2005}. We will also freely use some of the constructions of J\'{o}\'{z}wikowski \&  Rotkiewicz \cite{Jozwikowski:2014,Jozwikowski:2015}.

Let $\mathcal{G} \rightrightarrows M$ be an arbitrary Lie groupoid with source map $\underline{s}: \mathcal{G} \rightarrow M$ and target map $\underline{t}: \mathcal{G} \rightarrow M$. There is also the inclusion map $\iota : M \rightarrow \mathcal{G}$   and a partial multiplication  $(g,h) \mapsto g\cdot h$ which is defined on $\mathcal{G}^{(2)}=\{(g,h)\in\mathcal{G}\ti\mathcal{G}: \underline{s}(g) = \underline{t}(h)\}$. Moreover, the manifold $\mathcal{G}$ is foliated by $\underline{s}$-fibres $\mathcal{G}_{x}= \{ \left.g \in \mathcal{G}\right| \underline{s}(g) =x\}$, where $x \in M$. As by definition the source (and target) map is a submersion, the $\underline{s}$-fibres are themselves smooth manifolds. Geometric objects associated with this foliation will be given the superscript $\underline{s}$. For example, the distribution tangent to the leaves of the foliation will be denoted by $\sT \mathcal{G}^{\underline{s}}$. We will need to consider a higher order version of this, that is the subbundle $\sT^{k}\mathcal{G}^{\underline{s}} \subset \sT^{k}\mathcal{G}$ consisting of all higher order velocities tangent to some $\underline{s}$-leaf $\mathcal{G}_{x}$. That is we identify $\sT^{k}\mathcal{G}^{\underline{s}} $ with the union of $\sT^{k}\mathcal{G}_{x}$  over all $\underline{s}$-leaves $\mathcal{G}_{x}$. The relevant graded bundle  (over $\iota(M) \simeq M $) here is (cf. \cite{Jozwikowski:2014})

 \begin{equation}
 F_{k} = \textnormal{A}^{k}(\mathcal{G}) := \left. \sT^{k}\mathcal{G}^{\underline{s}}\right|_{\iota(M)},
 \end{equation}

 \noindent  which of course inherits its graded  bundle structure as a substructure of $\sT^{k}\mathcal{G}$  with respect to the projection $\zt_{k}:A^k(\mathcal{G})\to M$. For the $k=1$ case this leads to the classical notion of the Lie algebroid associated with the Lie groupoid $\mathcal{G}$. It is clear that directly no Lie bracket structure can exists on the sections of $\textnormal{A}^{k}(\mathcal{G}) \rightarrow M$. Instead it was shown by  J\'{o}\'{z}wikowski \&  Rotkiewicz that there exist the kappa-relation $\sT^{k}\textnormal{A}^{1}(\mathcal{G}) \rRelation \sT \textnormal{A}^{k}(\mathcal{G}) $ satisfying certain properties.  The dual to this relation is the genuine vector bundle morphism $\sT^{*}\textnormal{A}^{k}(\mathcal{G}) \rightarrow \sT^{k}\textnormal{A}^{1}(\mathcal{G})^{*}$ over the anchor map $\textnormal{A}^{k}(\mathcal{G}) \rightarrow \sT^{k}M$. These structure they refer to as \emph{higher order Lie  algebroids}.  This clearly differs from our idea of weighted Lie algebroid structure for $\textnormal{A}^{k}(\mathcal{G})$, which is a canonical structure on the linearisation of $\textnormal{A}^{k}(\mathcal{G})$. This is the fundamental point of divergence of the notion of a weighted Lie algebroid and a higher algebroid associated with a Lie groupoid  in the sense of \cite{Jozwikowski:2014}. Our procedure will be a groupoid version of the reduction described in the previous section.

Using the groupoid version of the law `linearisation of the reduction is the reduction of the linearisation' (Theorem \ref{linearisation-reduction}), we obtain
the linearisation  of $\textnormal{A}^{k}(\mathcal{G})$ as the reduction of $\sT\sT^{k-1}\mathcal{G}^{\underline{s}}$. Here, the first $\sT$ also refers to tangent bundle along fibers of the projection $\tau_{k-1}:\sT^{k-1}\mathcal{G}^{\underline{s}}\to M$,
$\tau_{k-1}(\sT^{k-1}\mathcal{G}_x)=x$.

Using the groupoid action one can obtain a `trivialisation'
$$\sT\sT^{k-1}\mathcal{G}^{\underline{s}}\simeq \{ (V,Z)\in\sT\mathcal{G}^{\underline{s}}\ti_M
\sT \textnormal{A}^{k-1}(\mathcal{G}): \sT\underline{t}(V)=\sT \tau_{k-1}(Z)\}\,.$$
It is clear that invariant vector fields, i.e. section of the linearisation correspond to sections of the above bundle with $\sT\mathcal{G}^{\underline{s}}$ replaced by the Lie algebroid $\textnormal{A}(\mathcal{G})$ of $\mathcal{G}$. In other words,
\be\label{groupoid-linearisation}
D\left(\textnormal{A}^{k}(\mathcal{G})\right)\simeq \{ (Y,Z)\in\textnormal{A}(\mathcal{G})\ti_M
\sT \textnormal{A}^{k-1}(\mathcal{G}): \zr(Y)=\sT\tau_{k-1}(Z)\}\,,
\ee
which is canonically a vector bundle over $\textnormal{A}^{k-1}(\mathcal{G})$ according to the projection
onto the base point of the vector $Z$.
Here, $\zr:\textnormal{A}\to\sT M$ is the anchor of the Lie algebroid $\textnormal{A}$.
The groupoid version of considerations we did in example \ref{Atiyah} shows that $D\left(\textnormal{A}^{k}(\mathcal{G})\right)$ is canonically a weighted algebroid. The module of sections is generated by sections corresponding to $Y$ and $Z$. The first ones
commute as sections of the Lie algebroid $\textnormal{A}$, the second as vector fields. The other
brackets are then determined by the anchor $\zr_k$ which is
$$\zr_k(Y,Z)=Z\,.$$
Lie algebroids like this one are known in the literature under the name of \emph{prolongation of a fibered manifold with respect to a Lie algebroid}  (cf. \cite{Cortes:2009} and references therein). In our case, $D\left(\textnormal{A}^{k}(\mathcal{G})\right)$ is the prolongation of $\textnormal{A}^{k-1}(\mathcal{G})$ with respect to the Lie algebroid $\textnormal{A}(\mathcal{G})$ which, in the notation of \cite{Cortes:2009}, is $\mathcal{T}^{\textnormal{A}(\mathcal{G})}\textnormal{A}^{k-1}(\mathcal{G})$.
\begin{theorem}\label{theom:reductionGroupoid}
For any Lie groupoid $\mathcal{G}$, the linearisation of the reduced graded bundle $\textnormal{A}^k(\mathcal{G})=\sT^k\mathcal{G}/\mathcal{G}$ is canonically a weighted Lie algebroid isomorphic with the prolongation of $\textnormal{A}^{k-1}(\mathcal{G})$ with respect to the Lie algebroid $\textnormal{A}(\mathcal{G})$, $$D\left(\textnormal{A}^{k}(\mathcal{G})\right)\simeq
\mathcal{T}^{\textnormal{A}(\mathcal{G})}\textnormal{A}^{k-1}(\mathcal{G})\,.$$
\end{theorem}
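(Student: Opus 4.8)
The plan is to assemble the theorem from the pieces already constructed in the preceding discussion, the essential input being the groupoid form of the principle ``linearisation of the reduction is the reduction of the linearisation'' (Theorem \ref{linearisation-reduction}) together with the explicit model (\ref{groupoid-linearisation}) of the linearised bundle and the Atiyah-type bracket of Example \ref{Atiyah}. First I would make precise the sense in which $\textnormal{A}^k(\mathcal{G}) = \sT^k\mathcal{G}/\mathcal{G}$ is a reduction: the groupoid $\mathcal{G}$ acts on the $\underline{s}$-foliated higher tangent bundle $\sT^k\mathcal{G}^{\underline{s}}$ by the $k$-th tangent lifts $\sT^k L_g$ of left translations along the source fibres, and $\textnormal{A}^k(\mathcal{G})$ is the quotient, realised concretely as $\sT^k\mathcal{G}^{\underline{s}}|_{\iota(M)}$. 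In the same way $\sT^{k-1}\mathcal{G}^{\underline{s}} \to \textnormal{A}^{k-1}(\mathcal{G})$ is a basic graded principal-type bundle in the sense of Example \ref{reduction}, the weight vector field being the $(k-1)$-st tangent lift of the weight vector field on $\mathcal{G}$, which is $\mathcal{G}$-invariant. Applying the groupoid version of Theorem \ref{linearisation-reduction} together with the identification $D(\sT^k N) \simeq \sT\sT^{k-1}N$ then exhibits $D(\textnormal{A}^k(\mathcal{G}))$ as the reduction of $\sT\sT^{k-1}\mathcal{G}^{\underline{s}}$.

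Next I would carry out this reduction in the trivialisation already set up above. Using the groupoid action we have trivialised $\sT\sT^{k-1}\mathcal{G}^{\underline{s}}$ as the fibred product of $\sT\mathcal{G}^{\underline{s}}$ and $\sT\textnormal{A}^{k-1}(\mathcal{G})$ subject to the compatibility $\sT\underline{t}(V)=\sT\zt_{k-1}(Z)$; passing to $\mathcal{G}$-invariant sections simply replaces the fibre $\sT\mathcal{G}^{\underline{s}}$ by the Lie algebroid $\textnormal{A}(\mathcal{G})$, giving exactly the model (\ref{groupoid-linearisation}), a vector bundle over $\textnormal{A}^{k-1}(\mathcal{G})$. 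The weighted Lie algebroid structure is then read off from the groupoid analogue of the Atiyah algebroid computation of Example \ref{Atiyah}: invariant vector fields of the appropriate weight are closed under the Lie bracket, the sections of type $Y$ commute as sections of $\textnormal{A}(\mathcal{G})$ while those of type $Z$ commute as vector fields, and the remaining brackets are fixed by the anchor $\zr_k(Y,Z)=Z$. This reproduces the bracket (\ref{reduced-bracket1}) and, by Theorem \ref{theom:odd vector field}, certifies that $D(\textnormal{A}^k(\mathcal{G}))$ is a weighted Lie algebroid.

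Finally I would compare with the definition of the prolongation. The prolongation $\mathcal{T}^{\textnormal{A}(\mathcal{G})}\textnormal{A}^{k-1}(\mathcal{G})$ of the fibred manifold $\zt_{k-1}:\textnormal{A}^{k-1}(\mathcal{G}) \to M$ with respect to $\textnormal{A}(\mathcal{G})$ has, by the definition recalled from \cite{Cortes:2009}, total space $\{(Y,Z) : \zr(Y)=\sT\zt_{k-1}(Z)\}$, anchor $(Y,Z) \mapsto Z$, and bracket precisely (\ref{reduced-bracket1}); this matches our description on the nose, yielding the claimed canonical isomorphism. The main obstacle is the first step: the groupoid version of Theorem \ref{linearisation-reduction} was stated only for honest $G$-principal bundles, so one must justify that the $k$-th tangent lift of left multiplication defines a free and proper action on $\sT^{k}\mathcal{G}^{\underline{s}}$ whose orbit space is smooth, that the linearisation functor of Theorem \ref{theom:functorial linearisation} descends to this quotient, and that the trivialisation respects the bi-weight. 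Once these smoothness and equivariance points are secured, the remaining identifications are the routine coordinate checks already sketched in the surrounding text.
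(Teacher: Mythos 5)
Your proposal is correct and takes essentially the same route as the paper: it invokes the groupoid version of Theorem \ref{linearisation-reduction} (together with $D(\sT^k N)\simeq \sT\sT^{k-1}N$ applied leaf-wise) to realise $D(\textnormal{A}^k(\mathcal{G}))$ as the reduction of $\sT\sT^{k-1}\mathcal{G}^{\underline{s}}$, performs the reduction in the same trivialisation to land on the model (\ref{groupoid-linearisation}), reads off the bracket (\ref{reduced-bracket1}) and anchor $\zr_k(Y,Z)=Z$ from the groupoid analogue of Example \ref{Atiyah}, and matches the result against the definition of the prolongation from \cite{Cortes:2009}. The smoothness and equivariance points you flag at the end (freeness of the lifted action, descent of the linearisation functor, compatibility with the bi-weight) are precisely the steps the paper itself leaves implicit, so your version is, if anything, slightly more careful than the original.
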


\begin{remark} The weighted Lie algebroid  $D(\textnormal{A}^2(\mathcal{G})) \simeq \mathcal{T}^{\textnormal{A}(\mathcal{G})}\textnormal{A}(\mathcal{G})$ (without reference to the graded structure) appeared in the literature in 2001 in the work of  Mart\'{\i}nez \cite{Martinez:2001} and  Cari\~{n}ena \&  Mart\'{\i}nez \cite{Carinena:2001}, which they referred to as the prolongation of a Lie algebroid. However, their construction is somewhat \emph{ad hoc} and intended as a generalisation of the canonical flip on the double tangent bundle $\sT \sT M$. Their motivation, similar to ours   was to develop Lie algebroid analogues of Lagrangian mechanics. A general prolongation has also been considered by Popescu \& Popescu \cite{Popescu:2001} in a similar context. In retrospect, for integrable Lie groupoids at least, we see that the prolongation of a Lie algebroid has its fundamental geometric origin is in the linearisation of the reduced graded bundle $\textnormal{A}^2(\mathcal{G})=\sT^2\mathcal{G}/\mathcal{G}$.
\end{remark}

The weighted Lie algebroid structure associated with the linearisation and reduction of a higher order tangent bundle of a Lie groupoid  is further explored in \cite{Bruce:2014b} as a means of defining higher order mechanics on a Lie algebroid.  We consider theorem \ref{theom:reductionGroupoid} as being one of the main results of this paper and offers a wealth of natural examples of weighted Lie algebroids.

\medskip

\noindent \textbf{As a homological vector field.} It is clear from the proceeding that $\Pi D(\textnormal{A}^{k}(\mathcal{G}))$ is a subsupermanifold of $\Pi \textnormal{A}(\mathcal{G}) \times_{M} \Pi \sT \textnormal{A}^{k-1}(\mathcal{G})$ essentially defined by the anchor. That is, if we employ homogeneous local coordinates
\begin{equation*}
(\underbrace{x^{A}}_{(0,0)},~ \underbrace{\zx_{1}^{a}}_{(0,1)}, ~\underbrace{y^{b}_{w}}_{(w,0)},~ \underbrace{dx_{1}^{A}}_{(0,1)}, ~ \underbrace{dy_{w+1}^{c}}_{(w,1)}),
\end{equation*}

\noindent on $\Pi \textnormal{A}(\mathcal{G}) \times_{M} \Pi \sT \textnormal{A}^{k-1}(\mathcal{G})$, then $\Pi D(\textnormal{A}^{k}(\mathcal{G}))$ is defined by the condition $dx^{A}_{1}  = \zx_{1}^{a}P_{a}^{A}(x)$. Then one can naturally equip $\Pi D(\textnormal{A}^{k}(\mathcal{G}))$ with homogeneous local coordinates
\begin{equation*}
(\underbrace{x^{A}}_{(0,0)},~ \underbrace{\zx_{1}^{a}}_{(0,1)}, ~\underbrace{y^{b}_{w}}_{(w,0)},~ \underbrace{dy_{w+1}^{c}}_{(w,1)}),
\end{equation*}

\noindent where on has to take into account the anchor map when deducing  the transformation laws for the $dy$'s. In these local coordinates the homological vector field encoding the weighted Lie algebroid structure on $D(\textnormal{A}^{k}(\mathcal{G}))$ is given by
\begin{equation}
Q = \zx_{1}^{a}P_{a}^{A}(x) \frac{\partial}{\partial x^{A}} - \frac{1}{2}\zx_{1}^{a}\zx_{1}^{b}P_{ba}^{c}(x)\frac{\partial}{\partial \zx^{c}_{1}}+ \sum_{w=1}^{k-1} dy_{w+1}^{a} \frac{\partial}{\partial y^{a}_{w}}.
\end{equation}

\section*{Acknowledgments}
The authors  wish to  thank J.~Stasheff and M.~Rotkiewicz for their comments on an earlier draft of this work.

\bibliographystyle{elsarticle-num}

\end{document}